\newcommand{\ignore}[1]{}
\newcommand{\myinput}[1]{\ifthenelse{\boolean{withimages}}{\input{#1}}{}}
\newcommand{\reflemma}[1]{Lemma~\ref{l:#1}}
\newcommand{\reflemmap}[2]{Lemma~\ref{l:#1}.\ref{p:#1-#2}}
\newcommand{\reflemmasps}[3]{Lemmas~\ref{l:#1}.\ref{p:#1-#2}-\ref{p:#1-#3}} 
\newcommand{\reflemmapp}[3]{Lemma~\ref{l:#1}.\ref{p:#1-#2}.\ref{p:#1-#2-#3}}
\newcommand{\refpoint}[1]{Point~\ref{p:#1}}
\newcommand{\refeqpoint}[1]{Point~\eqref{p:#1}} 
\newcommand{\refthm}[1]{Thm.~\ref{thm:#1}}
\newcommand{\refprop}[1]{Prop.~\ref{prop:#1}}
\newcommand{\refpropp}[2]{Prop.~\ref{prop:#1}.\ref{p:#1-#2}} 
\newcommand{\refsect}[1]{Sect.~\ref{sect:#1}}
\newcommand{\refsects}[2]{Sect.~\ref{sect:#1}-\ref{sect:#2}}
\renewcommand{\refeq}[1]{(\ref{eq:#1})} 
\newcommand{\reffig}[1]{Fig.~\ref{fig:#1}}
\newcommand{\reffigs}[2]{Fig.~\ref{fig:#1} and \ref{fig:#2}} 
\newcommand{\refcoro}[1]{Cor.~\ref{coro:#1}}
\newcommand{\refcor}[1]{Cor.\,\ref{coro:#1}}
\newcommand{\refdef}[1]{Definition~\ref{def:#1}}
\newcommand{\refex}[1]{Ex.~\ref{ex:#1}}
\newcommand{\ie}{\textit{i.e.}\xspace}
\newcommand{\ih}{\textit{i.h.}\xspace}
\newcommand{\defeq}{\coloneqq} 
\newcommand{\eqdef}{\eqqcolon} 
\newcommand{\grameq}{\Coloneqq} 
\newcommand{\nat}{\mathbb{N}}
\newcommand{\size}[1]{|#1|}
\newcommand{\vsym}{\mathsf{v}}
\newcommand{\admsym}{{\mathtt c}}
\renewcommand{\l}{\lambda}
\newcommand{\isub}[2]{\{#1/#2\}}
\renewcommand{\isub}[2]{\{#1{\shortleftarrow}#2\}}
\newcommand{\esub}[2]{[#1/#2]}
\renewcommand{\esub}[2]{[#1{\shortleftarrow}#2]}
\newcommand{\fv}[1]{{\tt fv}(#1)}
\newcommand{\nf}{\mathsf{nf}} 
\newcommand{\nfo}[1]{\nf_\osym(#1)} 
\newcommand{\rootRew}[1]{\mapsto_{#1}}
\newcommand{\Rew}[1]{\rightarrow_{#1}}
\newcommand{\lRew}[1]{\; \mbox{}_{#1}{\leftarrow}\ }
\newcommand{\lRewn}[1]{\; \mbox{}^{*}_{#1}{\leftarrow}\ }
\newcommand{\rtobabs}{\rootRew{\betaabs}} 
\newcommand{\rtoin}{\rootRew{\inert}}
\newcommand{\betav}{{\beta_v}} 
\newcommand{\abssym}{\lambda} 
\newcommand{\betaabs}{{\beta_{\!\abssym}}} 
\newcommand{\tobv}{\Rew{\betav}} 
\newcommand{\tobabs}{\Rew{\betaabs}} 
\newcommand{\betain}{\beta_{\isym}} 
\newcommand{\inert}{\betain} 
\newcommand{\toin}{\Rew{\inert}}
\newcommand{\isym}{i}
\newcommand{\rsym}{{\mathtt r}}
\newcommand{\fsym}{f}
\newcommand{\subsym}{{\mathsf{sub}}}
\newcommand{\ssym}{{\mathtt s}}
\newcommand{\shufeqext}{\shufeqext} 
\newcommand{\tm}{t}
\newcommand{\tmtwo}{u}
\newcommand{\tmthree}{s}
\newcommand{\tmfour}{r}
\newcommand{\tmfive}{q}
\newcommand{\tmsix}{p}
\newcommand{\tmtwop}{\tmtwo'}
\newcommand{\tmthreep}{\tmthree'}
\newcommand{\var}{x}
\newcommand{\vartwo}{y}
\newcommand{\varthree}{z}
\newcommand{\ctxholep}[1]{\langle #1\rangle}
\newcommand{\ctxhole}{\ctxholep{\cdot}}
\newcommand{\ctx}{C}
\newcommand{\arbctxp}[1]{\arbctxp{#1}}
\newcommand{\arbctxtwop}[1]{\arbctxtwop{#1}}
\newcommand{\genevctx}{E}
\newcommand{\evctx}{\genevctx}
\newcommand{\evctxtwo}{\evctx'}
\newcommand{\evctxp}[1]{\evctx\ctxholep{#1}}
\newcommand{\revctx}{R}
\newcommand{\tomachhole}[1]{\leadsto_{#1}}
\newcommand{\tomach}{\tomachhole{}}
\newcommand{\tomachm}{\tomachhole{\mulsym}}
\newcommand{\tomachmone}{\tomachhole{\mulsym_1}}
\newcommand{\tomachmtwo}{\tomachhole{\mulsym_2}}
\newcommand{\tomacho}{\tomachhole{\osym}}
\newcommand{\tomachb}{\tomachhole{\beta}} 
\newcommand{\tomachc}{\tomachhole{\admsym}}
\newcommand{\tomachcp}[1]{\tomachhole{\admsym#1}}
\newcommand{\tomachcone}{\tomachhole{\admsym_1}}
\newcommand{\tomachctwo}{\tomachhole{\admsym_2}}
\newcommand{\tomachcthree}{\tomachhole{\admsym_3}}
\newcommand{\tomachsm}{\tomachhole{\msym}}
\newcommand{\tomachse}{\tomachhole{\esym}}
\newcommand{\code}{\overline{\tm}}
\newcommand{\codetwo}{\overline{\tmtwo}}
\newcommand{\genv}{E}
\newcommand{\genvtwo}{E'}
\newcommand{\genvthree}{E''}
\newcommand{\stctx}[1]{\ctx_{#1}} 
\newcommand{\stempty}{\epsilon}
\newcommand{\cons}{:}
\newcommand{\stack}{\pi}
\newcommand{\stacktwo}{\pi'}
\newcommand{\stackthree}{\pi''}
\newcommand{\decstack}{\decode{\stack}}
\newcommand{\decodestack}[2]{\ctxholep{#1}\decode{#2}}
\newcommand{\decstackp}[1]{\decodestack{#1}{\stack}}
\newcommand{\decodeinv}[2]{\ctxholep{#2}\decode{#1}}
\newcommand{\state}{s}
\newcommand{\statetwo}{s'}
\newcommand{\statethree}{s''}
\newcommand{\rename}[1]{\renamenop{(#1)}}
\newcommand{\renamenop}[1]{#1^\alpha}
\newcommand{\exec}{\rho}
\newcommand{\execp}{\rho'}
\newcommand{\execpp}{\rho''}
\newcommand{\exectwo}{\sigma}
\newcommand{\decode}[1]{\llbracket #1\rrbracket}
\renewcommand{\decode}[1]{\underline{#1}}
\newcommand{\sizebabs}[1]{\size{#1}_{\betaabs}} 
\newcommand{\sizein}[1]{\size{#1}_{\betain}}
\newcommand{\sizef}[1]{\size{#1}_{\betaf}}
\newcommand{\deriv}{d}
\newcommand{\derivp}{d'} 
\newcommand{\derivpp}{d''} 
\newcommand{\derivtwo}{e}
\newcommand{\sizehole}[2]{|#2|_{#1}}
\newcommand{\sizee}[1]{\sizehole{\expo}{#1}} 
\newcommand{\sizem}[1]{\sizehole{\mult}{#1}} 
\newcommand{\sizefree}[1]{\sizehole{\mathsf{free}}{#1}} 
\newcommand{\mach}{{\tt M}}
\renewcommand{\dump}{D}
\newcommand{\decdump}{\decode \dump}
\newcommand{\decdumpp}[1]{\decdump \ctxholep{#1}}
 \newcommand{\gconst}{i} 
\newcommand{\gconsttwo}{\gconst'}
\newcommand{\gconstthree}{\gconst''}
\newcommand{\fire}{f}
\newcommand{\firetwo}{\fire'}
\newcommand{\firethree}{\fire''}
\newcommand{\betaf}{\beta_{\!\fsym}} 
\newcommand{\rtof}{\rootRew{\betaf}}
\newcommand{\tof}{\Rew{\betaf}}
\newcommand{\torf}{\Rew{\rsym\betaf}}
\newcommand{\vsub}{{\vsym\subsym}} 
\newcommand{\unfsym}{\rotatebox[origin=c]{-90}{$\rightarrow$}}
\newcommand{\unf}[1]{#1\unfsym\,}
\newcommand{\relunf}[2]{\unf{#1}_{#2}}
\newcommand{\fireball}{fireball}
\newcommand{\dentry}[2]{#1\Diamond#2}
\newcommand{\csym}{{\mathtt c}}
\newcommand{\osym}{{\mathtt o}}
\newcommand{\la}[1]{\lambda #1.}
\newcommand{\glamst}[4]{(#1,#2,#3,#4)}
\newcommand{\glamsttab}[4]{#1 &#2 &#3 &#4} 
\newcommand{\eglamst}[4]{(#1,#2,#3,#4)}
\newcommand{\sizecom}[1]{|#1|_c}
\newcommand{\stackitem}{\phi}
\newcommand{\stackitemtwo}{\stackitem'}
\newcommand{\pair}[2]{#1@#2}
\newcommand{\decodep}[2]{\decode{#1}\ctxholep{#2}}
\newcommand{\myproof}[1]{
\ifthenelse{\boolean{omitproofs}}{\begin{IEEEproof} Proof available but omitted for readability. \end{IEEEproof}}{#1}}
\newcommand{\glamour}{GLA\-MOUr\xspace}
\newcommand{\eglamour}{Easy \glamour}
\newcommand{\fglamour}{Fast \glamour}
\newcommand{\uglamour}{Unchaining \glamour}
\newcommand{\gregoire}{Gr{\'{e}}goire\xspace}
\newcommand{\withproofs}[1]{\ifthenelse{\boolean{withproofs}}{#1}{}}
\newcommand{\withoutproofs}[1]{\ifthenelse{\boolean{withproofs}}{}{#1}}
\newcommand{\NoteProof}[1]{\withproofs{\marginpar{\scriptsize \ \ Proof p.\,{\pageref{#1}}}}} 
\newcommand{\NoteState}[1]{\withproofs{\marginpar{\scriptsize \ \ See p.~{\pageref{#1}}}}} 
\newcommand{\undef}{\bot}
\newcommand{\vsubcalc}{\lambda_\vsub}
\newcommand{\firecalc}{\lambda_\mathsf{fire}}
\newcommand{\Quiet}{Inert\xspace}
\newcommand{\doubt}[1]{}
\newenvironment{varitemize}
{
\begin{list}{\labelitemi}
{
\setlength{\itemsep}{.7pt}
 \setlength{\topsep}{.7pt}
 \setlength{\parsep}{.7pt}
 \setlength{\partopsep}{.7pt}
 \setlength{\leftmargin}{15pt}
 \setlength{\rightmargin}{0pt}
 \setlength{\itemindent}{0pt}
 \setlength{\labelsep}{5pt}
 \setlength{\labelwidth}{15pt}
}}
{
 \end{list} 
}
\newcounter{numberone}
\newcounter{numberoneroman}
\newenvironment{varenumerate}
{
\begin{list}{\arabic{numberone}.}
{
 \usecounter{numberone}
 \setlength{\itemsep}{.7pt}
 \setlength{\topsep}{.7pt}
 \setlength{\parsep}{.7pt}
 \setlength{\partopsep}{.7pt}
 \setlength{\leftmargin}{15pt}
 \setlength{\rightmargin}{0pt}
 \setlength{\itemindent}{0pt}
 \setlength{\labelsep}{5pt}
 \setlength{\labelwidth}{15pt}
}}
{
\end{list} 
}
\newenvironment{varenumerateroman}
{
\begin{list}{\roman{numberoneroman}.}
{
 \usecounter{numberoneroman}
 \setlength{\itemsep}{.7pt}
 \setlength{\topsep}{.7pt}
 \setlength{\parsep}{.7pt}
 \setlength{\partopsep}{.7pt}
 \setlength{\leftmargin}{15pt}
 \setlength{\rightmargin}{0pt}
 \setlength{\itemindent}{0pt}
 \setlength{\labelsep}{5pt}
 \setlength{\labelwidth}{15pt}
}}
{
\end{list} 
}
\newcommand{\cbv}{CbV\xspace}
\newcommand{\ocbv}{Open \cbv}
\newcommand{\ccbv}{Closed \cbv}
\newcommand{\scbv}{Strong \cbv}
\newcommand{\compil}[1]{#1^\circ}
\newcommand{\sizebeta}[1]{\size{#1}_\beta}
\newcommand{\tostrat}{\rightarrow}
\newcommand{\tomachine}{\tomachhole\mach}
\renewcommand{\tomachse}{\tomachhole{\ssym}}
\renewcommand{\tomachm}{\tomachhole{\beta}}
\renewcommand{\tomachsm}{\tomachm}
\renewcommand{\sizem}[1]{\sizebeta{#1}}
\renewcommand{\sizee}[1]{\sizehole{\ssym}{#1}}
\renewcommand{\sizecom}[1]{\sizehole{\csym}{#1}}
\renewcommand{\tomachmone}{\tomachhole{\beta_1}}
\renewcommand{\tomachmtwo}{\tomachhole{\beta_2}}
\renewcommand{\reffigs}[2]{Fig.~\ref{fig:#1}-\ref{fig:#2}} 
\renewcommand{\refdef}[1]{Def.~\ref{def:#1}}
\newtheorem{lemmaAppendix}{Lemma} 
\newtheorem{theoremAppendix}{Theorem} 
\newtheorem{propositionAppendix}{Proposition} 
\newtheorem{corollaryAppendix}{Corollary} 
\theoremstyle{remark}
\begin{document}

%
%
%
%
%

\mainmatter  

\title{Implementing Open Call-by-Value (Extended Version)}

\titlerunning{Implementing Open Call-by-Value (Extended Version)}

%
%
\author{Beniamino Accattoli\inst{1}\and Giulio Guerrieri\inst{2}}
\authorrunning{B.~Accattoli\and G.~Guerrieri}


\institute{INRIA, UMR 7161, LIX, \'Ecole Polytechnique, \email{\href{mailto:beniamino.accattoli@inria.fr}{beniamino.accattoli@inria.fr}} \and
University of Oxford, Department of Computer Science, Oxford, United Kingdom,
\email{\href{mailto:giulio.guerrieri@cs.ox.ac.uk}{giulio.guerrieri@cs.ox.ac.uk}}
}

%
%

\toctitle{Lecture Notes in Computer Science}
\tocauthor{Authors' Instructions}
\maketitle





\begin{abstract}
The theory of the call-by-value $\lambda$-calculus relies on weak evaluation and closed terms, that are natural hypotheses in the study of programming languages. To model proof assistants, however, strong evaluation and open terms are required. Open call-by-value is the intermediate setting of weak evaluation with open terms, on top of which \gregoire and Leroy designed the abstract machine of Coq. This paper provides a theory of abstract machines for open call-by-value. The literature contains machines that are either simple but inefficient, as they have an exponential overhead, or efficient but heavy, as they rely on a labelling of environments and a technical optimization. We introduce a machine that is simple and efficient: it does not use labels and it implements open call-by-value within a bilinear overhead. Moreover, we provide a new fine understanding of how different optimizations \mbox{impact on the complexity of the overhead.}
\end{abstract}

\begin{center}
This work is part of a wider research effort, the COCA HOLA project\\ \url{https://sites.google.com/site/beniaminoaccattoli/coca-hola}.
\end{center}

\section{Introduction}
\label{sect:intro}
The $\l$-calculus is the computational model behind functional programming languages and  proof assistants. A charming feature is that its definition is based on just one \emph{macro-step} computational rule, \emph{$\beta$-reduction}, and does not rest on any notion of machine or automaton. 
Compilers and proof assistants however are concrete tools that have to implement the $\l$-calculus in some way---a problem clearly arises. 
There is a huge gap between the abstract mathematical setting of the calculus and the technical intricacies of an actual implementation. This is why the issue is studied via intermediate \emph{abstract machines}, that are implementation schemes with \emph{micro-step} operations and without too many concrete details.

\paragraph{Closed and Strong $\l$-Calculus.} Functional programming languages are based on a simplified form of $\l$-calculus, that we like to call \emph{closed $\l$-calculus}, with two important restrictions. First, evaluation is \emph{weak}, \ie it does not evaluate function bodies. Second, terms are \emph{closed}, that is, they have no free variables. The theory of the closed $\l$-calculus is much simpler than the general one. 

Proof assistants based on the $\l$-calculus usually require the power of the full theory. 
Evaluation is then \emph{strong}, \ie unrestricted, and the distinction between open and closed terms no longer makes sense, because evaluation has to deal with the issues of open terms even if terms are closed, when it enters function bodies.
We refer to this setting as the \emph{strong $\l$-calculus}. 

Historically, the study of strong and closed $\l$-calculi have followed orthogonal approaches. Theoretical studies rather dealt with the strong $\l$-calculus, and it is only since the seminal work of Abramsky and Ong \cite{DBLP:journals/iandc/AbramskyO93} that theoreticians started to take the closed case seriously. Dually, practical studies mostly ignored  strong evaluation, with the notable exception of Cr\'{e}gut \cite{DBLP:conf/lfp/Cregut90} (1990) and some very recent works \cite{DBLP:conf/ppdp/Garcia-PerezNM13,DBLP:conf/aplas/AccattoliBM15,DBLP:conf/wollic/Accattoli16}. Strong evaluation is nonetheless essential in the implementation of proof assistants or higher-order logic programming, typically for type-checking with dependent types as in the Edinburgh Logical Framework or the Calculus of Constructions, as well as for unification in simply typed frameworks like $\l$-prolog. 

\paragraph{Open Call-by-Value.} In a very recent work \cite{DBLP:conf/aplas/AccattoliG16}, we advocated the relevance of the \emph{open $\l$-calculus}, a framework in between the closed and the strong ones, where evaluation is \emph{weak} but terms may be \emph{open}. Its key property is that the strong case can be described as the iteration of the open one into function bodies. The same cannot be done with the closed $\l$-calculus because---as already pointed out---entering into function bodies requires to deal with (locally) open terms.

The open $\l$-calculus did not emerge before because most theoretical studies focus on the \emph{call-by-name} strong $\l$-calculus, and in call-by-name the distinction open/closed does not play an important role. Such a distinction, instead, is delicate for call-by-value evaluation, where Plotkin's original operational semantics \cite{DBLP:journals/tcs/Plotkin75} is not adequate for open terms. This issue is discussed at length in \cite{DBLP:conf/aplas/AccattoliG16}, where four extensions of Plotkin's semantics to open terms are compared and shown to be equivalent. That paper then introduces the expression \emph{Open Call-by-Value} (shortened \emph{\ocbv}) to refer to them as a whole, as well as \emph{\ccbv} and \emph{\scbv} to concisely refer to the closed and strong call-by-value $\l$-calculus.

\paragraph{The Fireball Calculus.} The simplest presentation of \ocbv is the \emph{fireball calculus} $\firecalc$, obtained from the \cbv $\l$-calculus by generalizing values into \emph{fireballs}. Dynamically, $\beta$-redexes are allowed to fire only when the argument is a fireball (\emph{fireball} is a pun on \emph{fire-able}). The fireball calculus was introduced without a name by Paolini and Ronchi Della Rocca \cite{DBLP:journals/ita/PaoliniR99,parametricBook}, then rediscovered independently first by Leroy and \gregoire \cite{DBLP:conf/icfp/GregoireL02}, and then by Accattoli and Sacerdoti Coen \cite{fireballs}. 
Notably, on closed terms, $\firecalc$ 
\emph{coincides} with Plotkin's (Closed) \cbv $\l$-calculus.

\paragraph{Coq by Levels.} In \cite{DBLP:conf/icfp/GregoireL02} (2002) Leroy and \gregoire used the fireball calculus to improve the implementation of the Coq proof assistant. In fact, Coq rests on \scbv, but Leroy and \gregoire design an abstract machine for the fireball calculus (\ie \ocbv) and then use it to evaluate \scbv \emph{by levels}: the machine is first executed at top level (that is, out of all abstractions), and then re-launched recursively under abstractions. Their study is itself formalized in Coq, but it lacks an estimation of the efficiency of the machine.\bigskip

In order to continue our story some basic facts about cost models and abstract machines have to be recalled (see \cite{accattoliWPTE} for a gentle tutorial).

\paragraph{Interlude 1: Size Explosion.} It is well-known that $\l$-calculi suffer from a degeneracy called \emph{size explosion}: there are families of terms whose size is linear in $n$, that evaluate in $n$ $\beta$-steps, and whose result has size exponential in $n$. The problem is that the number of $\beta$-steps, the natural candidate as a time cost model, then seems not to be a reasonable cost model, because it does not even account for the time to write down the result of a computation---the \emph{macro-step} character of $\beta$-reduction seems to forbid to count 1 for each step. This is a problem that affects all $\l$-calculi and all evaluation strategies. 

\paragraph{Interlude 2: Reasonable Cost Models and Abstract Machines.} Despite size explosion, surprisingly, the number of $\beta$-steps often \emph{is} a reasonable cost model---so one can indeed count 1 for each $\beta$-step. There are no paradoxes: $\l$-calculi can be simulated in alternative formalisms employing some form of sharing, such as abstract machines. These settings manage compact representations of terms via \emph{micro-step} operations and produce compact representations of the result, avoiding size explosion. 
Showing that a certain $\l$-calculus is reasonable usually is done by 
simulating it with a \emph{reasonable} abstract machine, 
\ie a machine implementable with overhead polynomial in the number of $\beta$-steps in the calculus. 
The design of a reasonable abstract machine depends very much on the kind of $\l$-calculus to be implemented, as different calculi admit different forms of size explosion and/or require more sophisticated forms of sharing. 
For strategies in the closed $\l$-calculus it is enough to use the ordinary technology for abstract machines, as first shown by Blelloch and Greiner \cite{DBLP:conf/icfp/BlellochG96}
, and then by Sands, Gustavsson, and Moran \cite{DBLP:conf/birthday/SandsGM02}
, and, with other techniques, by combining the results in Dal Lago and Martini's \cite{DBLP:conf/icalp/LagoM09} and \cite{DBLP:conf/fopara/LagoM09}
. The case of the strong $\l$-calculus is  subtler, and a more sophisticated form of sharing is necessary, as first shown by Accattoli and Dal Lago \cite{DBLP:conf/csl/AccattoliL14}
. The topic of this paper is the study of reasonable machines for the intermediate case of \ocbv.

\paragraph{Fireballs are Reasonable.} In \cite{fireballs} Accattoli and Sacerdoti Coen study \ocbv from the point of view of cost models. Their work provides 3 contributions:
\begin{varenumerate}
  \item \emph{Open Size Explosion}: they show that \ocbv is subtler than \ccbv by exhibiting a form of size explosions that is  not possible in \ccbv, making \ocbv closer to \scbv rather than to \ccbv;
  \item \emph{Fireballs are Reasonable}: they show that the number of $\beta$-steps in the fireball calculus is nonetheless a reasonable cost model by exhibiting a reasonable abstract machine, called \glamour, improving over Leroy and \gregoire's machine in \cite{DBLP:conf/icfp/GregoireL02} (see the conclusions for more on their machine);
  \item \emph{And Even Efficient}: they optimize the \glamour into the \uglamour, whose overhead is bilinear (\ie linear in the number of $\beta$-steps \emph{and} the size of the initial term), that is the best possible overhead.
\end{varenumerate}

\paragraph{This Paper.} Here we present two machines, the \eglamour and the \fglamour, that are proved to be correct implementations of \ocbv and to have a polynomial and bilinear overhead, respectively. Their study refines the results of \cite{fireballs} along three axes:
\begin{varenumerate}
  \item \emph{Simpler Machines}: both the \glamour and the \uglamour of \cite{fireballs} are sophisticated machines resting on a labeling of terms. The unchaining optimizations of the second machine is also quite heavy. Both the \eglamour and the \fglamour, instead, do not need labels and the \fglamour is bilinear with no need of the unchaining optimization.

  \item \emph{Simpler Analyses}: the correctness and complexity analyses of the (Unchaining) \glamour are developed in \cite{fireballs} via an informative but complex decomposition via explicit substitutions, by means of the distillation methodology \cite{DBLP:conf/icfp/AccattoliBM14}. Here, instead, we decode the Easy and Fast \glamour  directly to the fireball calculus, that turns out to be much simpler. Moreover, the complexity analysis of the \fglamour, surprisingly, turns out to be straightforward.
  
  \item \emph{Modular Decomposition of the Overhead}: we provide a fine analysis of how different optimizations impact on the complexity of the overhead of abstract machines for \ocbv. In particular, it turns out that one of the optimizations considered essential in \cite{fireballs}, namely \emph{substituting abstractions on-demand}, is not mandatory for reasonable machines---the \eglamour does not implement it and yet it is reasonable. We show, however, that this is true only as long as one stays \emph{inside} \ocbv because the optimization is instead mandatory for \scbv (seen by \gregoire and Leroy as \ocbv \emph{by levels}). To our knowledge substituting abstractions on-demand is an optimization introduced in \cite{DBLP:conf/csl/AccattoliL14} and currently no proof assistant implements it. Said differently, our work shows that the technology currently in use in proof assistants is, at least theoretically, unreasonable.
  
  \end{varenumerate}

  \smallskip
  Summing up, this paper does not improve the known bound on the overhead of abstract machines for \ocbv, as the one obtained in \cite{fireballs} is already optimal. Its contributions instead are a simplification and a finer understanding of the subtleties of implementing \ocbv: we introduce simpler abstract machines whose complexity analyses are elementary and carry a new modular view of how different optimizations impact on the complexity of the overhead. 
  
  In particular, while \cite{fireballs} shows that \ocbv is subtler than \ccbv, here we show that \ocbv is simpler than \scbv, and that defining \scbv as iterated \ocbv, as done by \gregoire and Leroy in \cite{DBLP:conf/icfp/GregoireL02}, may introduce an explosion of the overhead, if done naively.\medskip
  
  \withoutproofs{A longer version of this paper is available on Arxiv  \cite{AccattoliGuerrieri17}.}
  \withproofs{This is a longer version of a paper accepted to FSEN 2017.}
  It contains two Appendices, one with a glossary of rewriting theory and one with omitted proofs.

\section{\texorpdfstring{The Fireball Calculus $\firecalc$}{The Fireball Calculus} \& Open Size Explosion}
\label{sect:fireball}
\begin{figure}[t]
  \centering
  \scalebox{0.9}{
    $
    \begin{array}{c@{\hspace{.5cm}}rll}
	    	    \mbox{Terms} & \tm,\tmtwo,\tmthree,\tmfour &\grameq& \var \mid  \la\var\tm \mid \tm\tmtwo\\
    \mbox{Fireballs} & \!\!\!\!\!\!\!\!\!\!\!\!\!\!\!\!\!\!\!\!\!\!\fire, \firetwo, \firethree & \grameq & \la\var\tm \mid \gconst\\
	    \mbox{Inert Terms} & \,\,\,\gconst,\gconsttwo, \gconstthree & \grameq &  \var \fire_1\ldots \fire_n\ \ \ \ n\geq 0\\
	    \mbox{Evaluation Contexts} & \evctx  & \grameq & \ctxhole\mid \tm\evctx \mid \evctx\tm \\\\	
	    
      \textsc{Rule at Top Level} & \multicolumn{3}{c}{\textsc{Contextual closure}} \\
	    \!(\la\var\tm)(\la\vartwo\tmtwo) \rtobabs \tm\isub\var{\la\vartwo\tmtwo} &
	    \multicolumn{3}{c}{\evctxp \tm \tobabs \evctxp \tmtwo \textrm{~~~if } \tm \rtobabs \tmtwo} \\
	    \,\,(\l\var.\tm)\gconst\rtoin \tm\isub\var\gconst &
	    \multicolumn{3}{c}{\evctxp \tm \toin \evctxp \tmtwo \quad\textup{if } \tm \rtoin \tmtwo} \\\\ 

	    \mbox{Reduction} & \multicolumn{3}{c}{\tof \, \defeq \, \tobabs \!\cup \toin}
    \end{array}
    $}
  \caption{\label{fig:fireball-calculus} The Fireball Calculus $\firecalc$}
\end{figure}
In this section we introduce the fireball calculus, the presentation of \ocbv we work with in this paper, and show the example of size explosion peculiar to the open setting. Alternative presentations of \ocbv can be found in \cite{DBLP:conf/aplas/AccattoliG16}.

\paragraph{The Fireball Calculus.} The fireball calculus $\firecalc$ is defined in \reffig{fireball-calculus}.
The idea is that the values of the call-by-value $\l$-calculus, given by abstractions and variables, are generalized to fireballs, by extending variables to more general \emph{inert terms}. Actually fireballs and inert terms are defined by mutual induction (in \reffig{fireball-calculus}). 
For instance, $\la\var\vartwo$ is a fireball 
as an abstraction, while $\var$, $\vartwo(\la\var\var)$, $\var\vartwo$, and $(\varthree(\la\var\var))(\varthree\varthree) (\la\vartwo(\varthree\vartwo))$ are fireballs as inert terms. 

The main feature of inert terms is that they are open, normal, and that when plugged in a context they cannot create a redex, hence the name (they are not so-called \emph{neutral terms} because they might have $\beta$-redexes under abstractions). In \gregoire and Leroy's presentation 
\cite{DBLP:conf/icfp/GregoireL02}, inert terms are called \emph{accumulators} and fireballs are simply called values.

Terms are always identified up to $\alpha$-equivalence and the set of 
free variables of a term $\tm$ is denoted by $\fv{\tm}$. We use $\tm\isub\var\tmtwo$ for the term obtained by the capture-avoiding substitution of $\tmtwo$ for each free occurrence of $\var$ in $\tm$.

 Evaluation is given by \emph{call-by-fireball} $\beta$-reduction $\tof$: the $\beta$-rule can fire, \emph{lighting up} the argument, only when it is a fireball (\emph{fireball} is a catchier version of \emph{fire-able term}).
 We actually distinguish two sub-rules: one that \emph{lights up} abstractions, noted $\tobabs$,
 and one that \emph{lights up} inert terms, noted $\toin$ (see \reffig{fireball-calculus}). 
 Note that evaluation is weak (\ie it does not reduce under abstractions). 

\paragraph{Properties of the Calculus.} 
A famous key property of \ccbv (whose evaluation is exactly $\tobabs$) is \emph{harmony}: given a closed term $\tm$, either it diverges or it evaluates to an abstraction, \ie $\tm$ is $\betaabs$-normal iff $\tm$ is an abstraction. 
The fireball calculus satisfies an analogous property in the \emph{open} setting by replacing abstractions with fireballs (\refpropp{distinctive-fireball}{open-harmony}). 
Moreover, the fireball calculus is a conservative extension of \ccbv: on closed terms it collapses on \ccbv (\refpropp{distinctive-fireball}{conservative}). 
No other presentation of \ocbv has these properties.

 \newcounter{prop:distinctive-fireball} 
 \addtocounter{prop:distinctive-fireball}{\value{proposition}}
\begin{proposition}[Distinctive Properties of $\firecalc$]
  \label{prop:distinctive-fireball}
  Let $\tm$ be a term.
  \NoteProof{propappendix:distinctive-fireball}
  \begin{varenumerate}
    \item\label{p:distinctive-fireball-open-harmony} \emph{Open Harmony}:  $\tm$ is $\betaf$-normal iff $\tm$ is a fireball.
    \item\label{p:distinctive-fireball-conservative} \emph{Conservative Open Extension}: $\tm \tof \tmtwo$ iff $\tm \tobabs \tmtwo$ whenever $\tm$ is closed.
  \end{varenumerate}
\end{proposition}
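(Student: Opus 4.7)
The plan is to prove the two properties separately, treating (1) as a standard structural characterization of normal forms and (2) as a short contradiction argument exploiting the shape of evaluation contexts.

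For \refpointp{distinctive-fireball}{open-harmony}, I would prove both directions. The $(\Leftarrow)$ direction proceeds by mutual induction on the grammars of fireballs and inert terms. If $\tm = \la\var\tmtwo$, then $\tm$ is $\betaf$-normal because it is not an application and evaluation contexts $\evctx$ do not enter abstractions, so no subterm of $\tm$ can be fired. If $\tm$ is inert, write $\tm = \var\fire_1\dots\fire_n$; by induction the head prefix $\var\fire_1\dots\fire_{n-1}$ is inert (in particular, not an abstraction), which forbids a root $\betaf$-step, and each $\fire_i$ is a fireball, hence $\betaf$-normal by induction, so no internal step is possible either. The $(\Rightarrow)$ direction goes by structural induction on $\tm$: the cases $\tm=\var$ (inert with $n=0$) and $\tm=\la\var\tmtwo$ are immediate; for $\tm = \tmtwo\tmthree$ normal, both $\tmtwo$ and $\tmthree$ are normal, hence fireballs by the induction hypothesis, and $\tmtwo$ cannot be an abstraction (otherwise $\tmtwo\tmthree$ would be a $\betaabs$-redex since $\tmthree$ is a fireball), so $\tmtwo$ is inert and therefore $\tmtwo\tmthree$ is inert as well, hence a fireball.

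For \refpointp{distinctive-fireball}{conservative}, the $(\Leftarrow)$ direction is immediate since $\tobabs \,\subseteq\, \tof$ by definition. For the $(\Rightarrow)$ direction, suppose $\tm$ is closed and $\tm \tof \tmtwo$. Either $\tm \tobabs \tmtwo$, and we are done, or $\tm \toin \tmtwo$, in which case $\tm = \evctxp{(\l\var.\tmthree)\gconst}$ for some evaluation context $\evctx$ and inert term $\gconst$. Every inert term has the shape $\vartwo \fire_1\dots\fire_n$, so $\vartwo \in \fv{\gconst}$. Because evaluation contexts $\evctx \grameq \ctxhole \mid \tm\evctx \mid \evctx\tm$ never cross abstractions, plugging into $\evctx$ does not bind any variable of the plugged subterm; hence $\vartwo \in \fv{\tm}$, contradicting the hypothesis that $\tm$ is closed. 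So the $\toin$ case is impossible, and $\tm \tobabs \tmtwo$.

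The only subtle step is the appeal to the shape of evaluation contexts in (2): one must observe that $\evctx$ as defined here contains no binders, which makes the preservation of free variables across plugging immediate. This is what makes inert arguments unavailable for closed terms and collapses $\tof$ onto $\tobabs$. Everything else is a routine induction, so no serious obstacle is expected.
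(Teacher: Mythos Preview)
Your proof is correct. Part~1 is essentially the paper's argument: the paper first isolates the $(\Leftarrow)$ direction as a separate lemma (abstractions and inert terms are $\betaf$-normal) and then does the same structural induction on $\tm$ for $(\Rightarrow)$, but the content is identical.

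Part~2 is where you genuinely diverge. The paper proves $(\Rightarrow)$ by induction on the derivation of $\tm \tof \tmtwo$: at the root it observes that a closed argument fireball cannot be inert, hence must be an abstraction; in the inductive application cases it uses that the relevant subterm is itself closed and applies the induction hypothesis. Your argument is instead a one-shot contradiction: any $\toin$-step exposes an inert argument $\gconst = \vartwo\fire_1\dots\fire_n$ whose head variable $\vartwo$ is free, and since evaluation contexts contain no binders, $\vartwo$ survives as a free variable of $\tm$. This is shorter and avoids the case analysis entirely; the paper's inductive version, on the other hand, makes the ``closedness propagates to subterms'' structure explicit, which is perhaps pedagogically clearer but not needed here. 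Both approaches are sound; yours is the more economical of the two.
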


The rewriting rules of $\firecalc$ have also many good operational properties, studied in \cite{DBLP:conf/aplas/AccattoliG16} and summarized in the following proposition.

\newcounter{prop:basic-fireball} 
\addtocounter{prop:basic-fireball}{\value{proposition}}
\begin{proposition}[Operational Properties of $\firecalc$, \cite{DBLP:conf/aplas/AccattoliG16}]
\label{prop:basic-fireball}
\NoteProof{propappendix:basic-fireball}
\label{p:basic-fireball-number-steps} 
  The reduction $\tof$ is strongly confluent, and all $\betaf$-normalizing derivations $\deriv$ (if any) from a term $\tm$ have the same length $\sizef{\deriv}$, the same number $\sizebabs{\deriv}$ of $\betaabs$-steps, and the same number $\sizein{\deriv}$ of $\betain$-steps.	  
\end{proposition}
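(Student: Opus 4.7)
The plan is to reduce all three conclusions to a single geometric fact: a one-step diamond property for $\tof$ that also preserves the kind of each step.

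First I would show that any two distinct $\tof$-redexes in a term occupy disjoint positions. A redex has the shape $(\lambda x.s)\fire$ with $\fire$ a fireball. A second redex cannot lie inside $\lambda x.s$, because evaluation contexts $\evctx \grameq \ctxhole \mid \tm\evctx \mid \evctx\tm$ never cross a binder; nor inside $\fire$, because fireballs are $\tof$-normal by \refpropp{distinctive-fireball}{open-harmony} and so host no reachable redex. This disjointness immediately yields strong confluence: if $\tm \tof \tmtwo_1$ by contracting $R_1$ and $\tm \tof \tmtwo_2$ by contracting $R_2 \neq R_1$, the contraction of $R_1$ leaves $R_2$ untouched in $\tmtwo_1$ (and symmetrically), so both reduce in one further step to the same $\tmthree$. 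Moreover, the argument of each residual is a fireball that lies outside the other contracted redex, so its shape (abstraction or inert term) is preserved; thus the residuals keep the same classification in $\{\betaabs,\betain\}$, giving a step-type-preserving diamond.

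Second, I would deduce the quantitative statement by a standard random-descent argument refined to track step kinds. The diamond property implies that weak and strong normalisation coincide, so from any $\tof$-normalising term $\tm$ every reduction terminates. By induction on the length of a normalising derivation from $\tm$, I would prove that any two normalising derivations $\deriv,\derivp$ have the same length, the same $\sizebabs{\deriv}$, and the same $\sizein{\deriv}$. In the inductive step, either the first steps of $\deriv$ and $\derivp$ coincide and the IH handles the tails, or the type-preserving diamond supplies a common successor $\tmthree$ via a step of the same kind as $R_1$ from $\tmtwo_2$ and of the same kind as $R_2$ from $\tmtwo_1$; prolonging $\tmthree$ to the common normal form and applying the IH four times yields the equalities on length and on both counts.

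The delicate point is the step-type preservation under the diamond: checking that contracting $R_1$ does not alter the argument of $R_2$, so that its fireball nature (abstraction or inert) is preserved. This rests precisely on the disjointness claim, whose case analysis on the shape of the two evaluation contexts is the only piece that requires any real care. Once this is in place, strong confluence and the random-descent reasoning are routine.
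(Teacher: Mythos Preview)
Your argument is correct. The disjointness claim is the heart of the matter, and your justification is sound: a second redex cannot sit inside $\lambda x.s$ because evaluation contexts are weak, and it cannot sit inside the argument $\fire$ because fireballs are $\betaf$-normal (\refpropp{distinctive-fireball}{open-harmony}); by symmetry no strict nesting is possible, hence disjointness. From there the type-preserving diamond and the random-descent argument are indeed routine.

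The paper takes a different, more modular route. Rather than proving one global disjointness lemma, it establishes three separate facts---that $\toin$ is strongly confluent, that $\tobabs$ is strongly confluent, and that $\tobabs$ and $\toin$ strongly commute---and then invokes a refined Hindley--Rosen lemma (\reflemma{hindley-rosen}) that, under these hypotheses, yields strong confluence of the union together with the invariance of $\size\deriv$, $\sizebabs\deriv$, and $\sizein\deriv$. Your single disjointness observation actually subsumes all three of the paper's hypotheses at once, so your approach is more economical for this particular calculus; the paper's decomposition, on the other hand, isolates reusable pieces (notably strong commutation of the two sub-reductions) and plugs into a general lemma that applies even when redexes are not pairwise disjoint.
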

  
  \paragraph{Right-to-Left Evaluation.} As expected from a \emph{calculus}, the evaluation rule $\tof$ of $\firecalc$ is \emph{non-deterministic}, because in the case of an application there is no fixed order in the evaluation of the left and right subterms. Abstract machines however implement \emph{deterministic} strategies. 
  We then fix a deterministic strategy (which fires $\betaf$-redexes from right to left and is the one implemented by the machines of the next sections). 
  By \refprop{basic-fireball}, the choice of the strategy does not impact on existence of a result, nor on the result itself or on the number of steps to reach it. 
  It does impact however 
  on the design of the machine, which selects $\betaf$-redexes from right to left.

The  \emph{right-to-left evaluation strategy} $\torf$ is defined by closing the root rules $\rtobabs$ and $\rtoin$ in \reffig{fireball-calculus} by \emph{right contexts}, 
a special kind of evaluation contexts defined by $\revctx \grameq \ctxhole\mid \tm\revctx\mid \revctx\fire$. 
The next lemma 
ensures our definition is correct.

\newcounter{l:prop-of-torf}
\addtocounter{l:prop-of-torf}{\value{lemma}}
\begin{lemma}[Properties of $\torf$]
\label{l:prop-of-torf}
  Let 
\NoteProof{lappendix:prop-of-torf}
  $\tm$ be a term.
  \begin{varenumerate}
	  \item \label{p:prop-of-torf-compl} \emph{Completeness}: $\tm$ has $\tof$-redex iff $\tm$  has a $\torf$-redex.
	  \item \label{p:prop-of-torf-determ} \emph{Determinism}: $\tm$ has at most one $\torf$-redex.
  \end{varenumerate}
\end{lemma}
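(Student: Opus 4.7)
The plan is to prove both items together by induction on the structure of $\tm$, establishing the stronger statement: \emph{either $\tm$ is a fireball and has no $\torf$-redex, or $\tm$ is not a fireball and has exactly one $\torf$-redex}. Combined with open harmony (\refpropp{distinctive-fireball}{open-harmony}), which equates ``having a $\tof$-redex'' with ``not being a fireball'', this yields completeness (existence of the $\torf$-redex whenever a $\tof$-redex exists) and determinism (uniqueness) in one shot. The key bookkeeping is the analysis of how a right context $\revctx$ can decompose $\tm = \revctxp{\redex}$ with $\redex$ a root $\rtobabs$- or $\rtoin$-redex.

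The base cases are immediate: a variable is inert (hence a fireball) and admits no $\torf$-redex since the only right context with hole at the root is $\ctxhole$ itself; an abstraction $\la\var\tmtwo$ is a fireball and no clause of the grammar of $\revctx$ places the hole under a $\l$, so again there is no $\torf$-redex. For the inductive step $\tm = \tmtwo\tmthree$ I would split on whether $\tmthree$ and $\tmtwo$ are fireballs. If $\tmthree$ is not a fireball, the grammar $\revctx \grameq \ctxhole \mid \tm\revctx \mid \revctx\fire$ forces the shape $\revctx = \tmtwo\revctxtwo$ (the clause $\revctx\fire$ needs $\tmthree$ to be a fireball, and the root clause is excluded because $\tmtwo\tmthree$ is not itself a root redex since $\tmthree$ is not a fireball): the induction hypothesis on $\tmthree$ gives a unique such $\revctxtwo$.

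If instead $\tmthree$ is a fireball but $\tmtwo$ is not, then $\tmthree$ is $\betaf$-normal by open harmony, so no right context of shape $\tmtwo\revctxtwo$ can carry a root redex, leaving only $\revctx = \revctxthreep\tmthree$ with $\revctxthree$ in $\tmtwo$; the induction hypothesis on $\tmtwo$ gives unique $\revctxthree$. The remaining case is that both $\tmtwo$ and $\tmthree$ are fireballs: if $\tmtwo$ is inert then $\tm$ is inert (hence a fireball) and the only candidate right context $\ctxhole$ yields no root redex; if $\tmtwo$ is an abstraction then $\ctxhole$ yields the unique root redex (of kind $\betaabs$ or $\betain$ depending on the shape of $\tmthree$), while no non-trivial $\revctx$ can contribute a redex since both immediate subterms are normal fireballs.

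The main obstacle I anticipate is ensuring that, in the case where $\tmthree$ is a fireball, right contexts of the form $\tmtwo\revctxtwo$ are correctly excluded: this relies on $\tmthree$ being $\betaf$-normal, which is not an immediate syntactic fact but follows from open harmony. The whole argument hinges on this interplay between the grammatical restriction ``the right subterm of an application in $\revctx$ must be a fireball'' and the semantic fact ``fireballs are normal'', so I would state open harmony explicitly at the outset and invoke it uniformly in the inductive step.
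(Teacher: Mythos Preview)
Your proposal is correct. The inductive case analysis you outline coincides with the paper's determinism proof, and your strengthened induction hypothesis (``fireball with no $\torf$-redex, or non-fireball with exactly one'') is sound: existence in each non-fireball branch follows from the \ih\ on the relevant subterm, and uniqueness from the grammatical constraints on right contexts you identify.

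The organisation differs mildly from the paper's. The paper treats the two points separately: for completeness it takes the \emph{rightmost} $\tof$-redex and argues, by induction on its evaluation context $\evctx$, that $\evctx$ is actually a right context (using open harmony to see that the right subterm in the left-application case is a fireball); determinism is then a second, independent structural induction on $\tm$. You instead fold existence into the same structural induction that gives uniqueness, and recover completeness afterwards via open harmony (``has a $\tof$-redex'' $\Leftrightarrow$ ``not a fireball'' $\Leftrightarrow$ ``has a $\torf$-redex''). Your packaging is slightly more economical---one induction instead of two---while the paper's separate completeness argument has the minor advantage of exhibiting the $\torf$-redex concretely as the rightmost $\tof$-redex. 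Either route is fine; the underlying case analysis is the same.
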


\begin{example}
\label{ex:torf}  
  Let $\tm \defeq (\la{\varthree}{\varthree(\vartwo\varthree)})\la{\var}{\var}$.
  Then, $\tm \torf (\la{\var}{\var})(\vartwo \, \la{\var}{\var}) \torf \vartwo \, \la{\var}{\var}$, where the final term $\vartwo \, \la{\var}{\var}$ is a fireball (and $\betaf$-normal).
\end{example}

\paragraph{Open Size Explosion.} Fireballs are delicate, they easily \emph{explode}. The simplest instance of \emph{open size explosion} (not existing in \ccbv) is a variation over the famous looping term $\Omega \defeq (\la\var \var\var) (\la\var \var\var) \tobabs \Omega \tobabs \ldots$. In $\Omega$ there is an infinite sequence of duplications. 
In the size exploding family there is a sequence of $n$ nested duplications. 
We define two families, the family $\{\tm_n\}_{n\in\nat}$ of size exploding terms and the family $\{\gconst_n\}_{n\in\nat}$ of results of  evaluating $\{\tm_n\}_{n\in\nat}$:

\begin{center}
$\begin{array}{ccc@{\hspace{1cm}}ccc@{\hspace{2cm}}ccc@{\hspace{1cm}}cccccccccccccccccc}
  \tm_0 & \defeq & \vartwo &\tm_{n+1} & \defeq & (\la\var \var\var) \tm_n&\gconst_0 & \defeq & \vartwo
   &\gconst_{n+1} & \defeq & \gconst_n \gconst_n
\end{array}$  
\end{center}

We use $\size\tm$ for the size of a term, \ie the number of symbols to write it. 

\newcounter{prop:open-size-explosion}
\addtocounter{prop:open-size-explosion}{\value{proposition}}
\begin{proposition}[Open Size Explosion, \cite{fireballs}]
\label{prop:open-size-explosion}
Let $n\in \nat$.
\NoteProof{propappendix:open-size-explosion}
Then $\tm_n \toin^n \gconst_n$, moreover $\size{\tm_n} = O(n)$, $\size{\gconst_n} = \Omega(2^n)$, and $\gconst_n$ is an inert term.
\end{proposition}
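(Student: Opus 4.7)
The plan is to prove the four claims by a single induction on $n$, because the reduction claim $\tm_n \toin^n \gconst_n$ depends on knowing that $\gconst_n$ is an inert term (so that the contracted redex $(\l\var.\var\var)\gconst_n$ is indeed a $\betain$-redex, not a $\betaabs$-one). The size bounds are an immediate by-product of the recursive definitions.

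First I would verify, by induction on $n$, that each $\gconst_n$ is inert. The base case $\gconst_0 = \vartwo$ matches the form $\var\fire_1\ldots\fire_k$ with $k = 0$. For the inductive step, writing $\gconst_n = \vartwo\fire_1\ldots\fire_k$ by the inductive hypothesis, left-associativity of application gives
\[
\gconst_{n+1} \;=\; \gconst_n\,\gconst_n \;=\; \vartwo\fire_1\ldots\fire_k\,\gconst_n,
\]
which again matches the shape of an inert term, because $\gconst_n$ is itself inert and hence a fireball. This is the only delicate point in the whole proof: one must notice that the definition $\gconst_{n+1} \defeq \gconst_n\gconst_n$, which looks like a tree-duplication, \emph{flattens} on the spine thanks to left-associative parsing, so that the head remains the variable $\vartwo$.

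Next I would prove $\tm_n \toin^n \gconst_n$, again by induction on $n$. The base case is trivial since $\tm_0 = \vartwo = \gconst_0$. For the step, the inductive hypothesis yields $\tm_n \toin^n \gconst_n$; closing these steps under the evaluation context $\evctx \defeq (\l\var.\var\var)\ctxhole$ (which is in the grammar of \refeq{fireball-calculus}) gives
\[
\tm_{n+1} \;=\; (\l\var.\var\var)\,\tm_n \;\toin^n\; (\l\var.\var\var)\,\gconst_n.
\]
Since $\gconst_n$ is inert by the first step, the last term is a $\betain$-redex at the root, and one further step produces $(\var\var)\isub{\var}{\gconst_n} = \gconst_n\gconst_n = \gconst_{n+1}$, for a total of $n+1$ steps.

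Finally, the size estimates follow by solving the obvious recurrences: $\size{\tm_{n+1}} = \size{\tm_n} + c$ for a constant $c = \size{\l\var.\var\var} + 1$ gives $\size{\tm_n} = O(n)$, while $\size{\gconst_{n+1}} = 2\,\size{\gconst_n} + 1$ with $\size{\gconst_0} = 1$ gives $\size{\gconst_n} = 2^{n+1} - 1 = \Omega(2^n)$.
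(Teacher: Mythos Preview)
Your proof is correct and follows essentially the same approach as the paper: induction on $n$, using the inductive hypothesis under the evaluation context $(\la\var.\var\var)\ctxhole$ and then firing the root $\betain$-redex once $\gconst_n$ is known to be inert. The paper's version simply asserts that inertness of $\gconst_{n+1}$ and the size bounds are ``immediate'', whereas you spell out the left-associativity argument and solve the recurrences explicitly---this extra detail is fine and changes nothing in substance.
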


\paragraph{Circumventing Open Size Explosion.} Abstract machines implementing the substitution of inert terms, such as the one described by \gregoire and Leroy in \cite{DBLP:conf/icfp/GregoireL02} are unreasonable because for the term $\tm_n$ of the size exploding family they compute the full result $\gconst_n$. The machines of the next sections are reasonable because they avoid the substitution of inert terms, that is justified by the following lemma.

\newcounter{l:inerts-and-creations}
\addtocounter{l:inerts-and-creations}{\value{lemma}}
\begin{lemma}[Inert Substitutions Can Be Avoided]
\label{l:inerts-and-creations}
  Let 
  \NoteProof{lappendix:inerts-and-creations}
  $\tm, \tmtwo$ be terms and $\gconst$ be an inert term. Then, $\tm \tof \tmtwo$ iff $\tm \isub\var\gconst \tof \tmtwo \isub\var\gconst$.
\end{lemma}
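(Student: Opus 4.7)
The plan is to prove both directions of the equivalence by tracking the structure of $\tm$ through the substitution, exploiting that $\gconst$ is inert: crucially, $\gconst$ is not an abstraction, and by \refpropp{distinctive-fireball}{open-harmony} it is $\tof$-normal. I would first establish an auxiliary lemma by mutual induction on the grammar of fireballs and inert terms: for every term $\tmtwo$, $\tmtwo$ is a fireball (resp. inert term) if and only if $\tmtwo\isub\var\gconst$ is. The ``$\Rightarrow$'' direction is routine; the key case is that substituting $\gconst = \vartwo \fire_1 \ldots \fire_n$ for $\var$ in an inert term whose head is $\var$, say $\var \fire'_1 \ldots \fire'_k$, yields $\vartwo \fire_1 \ldots \fire_n \fire'_1\isub\var\gconst \ldots \fire'_k\isub\var\gconst$, which is inert since the $\fire'_j\isub\var\gconst$ are fireballs by the induction hypothesis. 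The converse ``$\Leftarrow$'' goes by induction on $\tmtwo$: the outer shape of $\tmtwo\isub\var\gconst$ forces the outer shape of $\tmtwo$, because the only way substitution can alter this shape is at occurrences of $\var$, which become inert terms (never abstractions).

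The forward direction of the main lemma is then standard substitutivity. If $\tm = \evctxp{(\lambda y. r) \fire} \tof \evctxp{r\isub y \fire} = \tmtwo$, I use $\alpha$-conversion to assume $y \neq \var$ and $y \notin \fv\gconst$, so that $\tm\isub\var\gconst = \evctxtwop{(\lambda y. r\isub\var\gconst)(\fire\isub\var\gconst)}$ where $\evctxtwo$ is obtained by propagating the substitution through $\evctx$ (a trivial induction on $\evctx$). By the auxiliary lemma $\fire\isub\var\gconst$ is a fireball, so this is a $\tof$-redex reducing to $\evctxtwop{r\isub\var\gconst\isub y {\fire\isub\var\gconst}} = \evctxtwop{(r\isub y \fire)\isub\var\gconst} = \tmtwo\isub\var\gconst$, using the standard commutation of substitutions.

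The backward direction is more delicate, and should be read in the natural reflective sense: given $\tm\isub\var\gconst \tof \tmthree$, I would exhibit $\tmtwo$ with $\tm \tof \tmtwo$ and $\tmtwo\isub\var\gconst = \tmthree$ (from which the statement follows). I proceed by induction on the evaluation context of the redex in $\tm\isub\var\gconst$. The crucial observation is that the redex $(\lambda y. s) f$ cannot lie entirely inside a substituted copy of $\gconst$: since $\gconst$ is inert hence $\tof$-normal, no subterm of $\gconst$ at an evaluation position is an abstraction applied to something. Hence the outermost application of the redex corresponds to an application already present in $\tm$, say $\tm$ contains $\tmfour \tmfive$ at the corresponding position with $\tmfour\isub\var\gconst = \lambda y. s$ and $\tmfive\isub\var\gconst = f$. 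Since $\gconst$ is not an abstraction, $\tmfour$ itself must be an abstraction $\lambda y. s'$ with $s'\isub\var\gconst = s$ (up to $\alpha$); and by the reflection part of the auxiliary lemma, $\tmfive$ is a fireball. Thus $(\lambda y. s')\tmfive$ is a $\tof$-redex at the corresponding evaluation position in $\tm$, and firing it yields the desired $\tmtwo$. The main obstacle is precisely this combinatorial tracing of the redex through the substitution, which relies essentially on both features of inertness of $\gconst$: preventing the creation of new abstractions, and forbidding redexes at evaluation positions inside $\gconst$.
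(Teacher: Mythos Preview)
Your proposal is correct and mirrors the paper's proof almost exactly: the paper also first isolates your auxiliary lemma (that $\tmtwo$ is an abstraction\,/\,inert term\,/\,fireball iff $\tmtwo\isub\var\gconst$ is), then proves the forward direction by a routine induction on the step, and the backward direction in the reflective form you describe, by induction on the evaluation context of the redex in $\tm\isub\var\gconst$. The one point you make explicit---that the redex cannot sit inside a substituted copy of $\gconst$ because inert terms are $\tof$-normal and never abstractions---is exactly what the paper's case analysis relies on, only there it is left implicit in the way the decomposition of $\tm\isub\var\gconst$ is written down.
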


\reflemma{inerts-and-creations} states that the substitution of an inert term cannot create redexes, which is why it can be avoided. 
For general terms, 
only direction $\Rightarrow$ holds,
because substitution can create redexes, as in $(\var \vartwo) \isub \var { \la\varthree \varthree } = (\la\varthree \varthree) \vartwo$. Direction $\Leftarrow$, instead, is distinctive of inert terms, of which it justifies the name.

\section{Preliminaries on Abstract Machines, Implementations, and Complexity Analyses}
\label{sect:machines-intro}

\begin{varitemize}
\item An abstract machine $\mach$ is given by \emph{states}, noted $\state$, and 
  \emph{transitions} between them, noted $\tomachine$;
\item A state is given by the \emph{code under evaluation} plus some \emph{data-structures};
\item The code under evaluation, as well as the other pieces of code scattered in the data-structures, are $\l$-terms \emph{not considered modulo $\alpha$-equivalence};
\item Codes are over-lined, to stress the different treatment of $\alpha$-equivalence;
\item A code $\code$ is \emph{well-named} if $\var$ may occur only in $\codetwo$ (if at all) for every sub-code $\la\var\codetwo$ of $\code$;
\item A state $\state$  is \emph{initial} if its code is well-named and its data-structures are empty; 
\item Therefore, there is a bijection $\compil\cdot$ (up to $\alpha$) between terms and initial states, called \emph{compilation}, sending a term $\tm$ to the initial state $\compil\tm$ on a well-named code $\alpha$-equivalent to $\tm$;
\item An \emph{execution} is a (potentially empty) sequence of transitions $\compil{\tm_0} \tomachine^* \state$ from an initial state obtained by compiling an (initial) term $\tm_0$;
\item A state $\state$  is \emph{reachable} if it can be obtained as the end state of an execution;
\item A state $\state$ is \emph{final} if it is reachable and no transitions apply to $\state$;
\item A machine comes with a map $\decode\cdot$ from states to terms, called \emph{decoding}, that on initial states is the inverse (up to $\alpha$) of compilation, \ie $\decode{\compil{\tm}} = \tm$ for any term $\tm$;
\item A machine $\mach$ has a set of \emph{$\beta$-transitions}, whose union is noted $\tomachb$, that are meant to be mapped to $\beta$-redexes by the decoding, while the remaining \emph{overhead transitions}, denoted by $\tomacho$, are mapped to equalities; 
\item We use $\size\exec$ for the length of an execution $\exec$, and $\sizebeta\exec$ for the number of $\beta$-transitions in $\exec$.
\end{varitemize}

\paragraph{Implementations.} For every machine one has to prove that it correctly implements the strategy in the $\lambda$-calculus it was conceived for. Our notion, tuned towards complexity analyses, requires a perfect match between the number of $\beta$-steps of the strategy and the number of $\beta$-transitions of the machine execution.

\begin{definition}[Machine Implementation]
\label{def:implem}
A machine $\mach$ \emph{implements a strategy} $\tostrat$ on $\lambda$-terms via a decoding $\decode\cdot$ when given a $\l$-term $\tm$ the following holds:

\begin{varenumerate}
\item\label{p:implem-exec-to-deriv} \emph{Executions to Derivations}: for any $\mach$-execution $\exec \colon \compil\tm \tomachine^* \state$ there exists a $\tostrat$-derivation $\deriv \colon \tm \tostrat^* \decode\state$.

\item\label{p:implem-deriv-to-exec} \emph{Derivations to Executions}: for every $\tostrat$-derivation $\deriv \colon \tm \tostrat^* \tmtwo$ there exists a $\mach$-execution $\exec \colon \compil\tm \tomachine^* \state$ such that $\decode\state = \tmtwo$.

\item\label{p:implem-beta-matching} \emph{$\beta$-Matching}: in both previous points the number $\sizebeta\exec$ of $\beta$-transitions in $\exec$ is exactly the length $\size\deriv$ of the derivation $\deriv$, \ie $\size\deriv = \sizebeta\exec$.
\end{varenumerate}
\end{definition}


\paragraph{Sufficient Condition for Implementations.} The proofs of implementation theorems tend to follow always the same structure, based on a few abstract properties collected here into the notion of implementation system
. 

\begin{definition}[Implementation System]
  \label{def:implementation}
  A machine $\mach$, a strategy $\tostrat$, and a decoding $\decode\cdot$ form an \emph{implementation system} if the following conditions hold:

  \begin{varenumerate}
		\item\label{p:def-beta-projection} \emph{$\beta$-Projection}: $\state \tomachhole\beta \statetwo$ implies $\decode\state \tostrat \decode\statetwo$;
		\item\label{p:def-overhead-transparency} \emph{Overhead Transparency}: $\state \tomacho \statetwo$ implies $\decode\state = \decode\statetwo$;
		\item 	\emph{Overhead Transitions Terminate}:  $\tomacho$ terminates;

	\item \emph{Determinism}: both $\mach$ and $\tostrat$ are deterministic;

	\item \emph{Progress}: $\mach$ final states decode to $\tostrat$-normal terms.
  \end{varenumerate}

\end{definition}

\newcounter{thm:abs-impl}
\addtocounter{thm:abs-impl}{\value{theorem}}
\begin{theorem}[Sufficient Condition for Implementations]
\label{thm:abs-impl}
Let 
\NoteProof{thmappendix:abs-impl}
$(\mach, \tostrat, \decode\cdot)$ be an \emph{implementation system}.
Then, $\mach$ implements $\tostrat$ via $\decode\cdot$.
\end{theorem}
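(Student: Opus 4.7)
The plan is to verify the three points of \refdef{implem} in turn, reusing the five properties of the implementation system. Point \ref{p:implem-exec-to-deriv} (Executions to Derivations) is straightforward by induction on the length of $\exec$: the empty execution corresponds to the empty derivation, and for the inductive step a $\beta$-transition projects to a $\tostrat$-step by $\beta$-Projection, while an overhead transition projects to a (trivial) equality step by Overhead Transparency. The same induction also records that the number of $\beta$-transitions along $\exec$ equals the length of the constructed $\tostrat$-derivation, handling half of Point \ref{p:implem-beta-matching} ($\beta$-Matching).

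The nontrivial direction is Point \ref{p:implem-deriv-to-exec} (Derivations to Executions). Here, induction directly on the length of $\deriv$ does not close: after one step I need to continue from some intermediate machine state $\state_1$, not from a compiled term. I will therefore prove the strengthened statement: for every reachable state $\state$ and every $\tostrat$-derivation $\deriv \colon \decode\state \tostrat^* \tmtwo$, there exists an execution $\exec \colon \state \tomachine^* \statetwo$ such that $\decode\statetwo = \tmtwo$ and $\sizebeta\exec = \size\deriv$. The claim about initial states follows by instantiating $\state \defeq \compil\tm$ and using $\decode{\compil\tm} = \tm$.

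The strengthened statement goes by induction on $\size\deriv$. The base case ($\size\deriv = 0$) takes $\exec$ empty. For the inductive step, write $\deriv \colon \decode\state \tostrat \tmthree \tostrat^* \tmtwo$. I run $\mach$ from $\state$ performing overhead transitions as long as possible; this terminates by the Overhead Transitions Terminate hypothesis, yielding a reachable $\state_0$ with $\decode{\state_0} = \decode\state$ by Overhead Transparency. State $\state_0$ cannot be final: otherwise Progress would make $\decode\state = \decode{\state_0}$ $\tostrat$-normal, contradicting $\decode\state \tostrat \tmthree$. Hence some $\mach$-transition applies to $\state_0$; since overhead transitions are exhausted and $\mach$ is deterministic, it must be a $\beta$-transition $\state_0 \tomachhole\beta \state_1$. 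By $\beta$-Projection, $\decode{\state_0} \tostrat \decode{\state_1}$, and by determinism of $\tostrat$ we get $\decode{\state_1} = \tmthree$. The state $\state_1$ is still reachable, so the induction hypothesis applied to the suffix $\tmthree \tostrat^* \tmtwo$ yields an execution $\exec' \colon \state_1 \tomachine^* \statetwo$ with $\decode\statetwo = \tmtwo$ and $\sizebeta{\exec'} = \size\deriv - 1$. Concatenating the overhead prefix, the single $\beta$-transition, and $\exec'$ gives the desired execution, with $\sizebeta\exec = 1 + (\size\deriv - 1) = \size\deriv$, which simultaneously closes Point \ref{p:implem-beta-matching} in this direction.

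The only real subtlety is the one flagged above, namely that the naive statement of Point \ref{p:implem-deriv-to-exec} is not directly amenable to induction: the combined use of Overhead Termination, Determinism of $\mach$, Progress, and Determinism of $\tostrat$ is exactly what allows the strengthened form to propagate along an arbitrary reachable state. Everything else is a routine bookkeeping of which transitions are $\beta$-transitions and which are overhead.
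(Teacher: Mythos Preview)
Your proof is correct. The argument for Derivations to Executions is organized a bit differently from the paper's, and the difference is worth pointing out.

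You strengthen the statement to arbitrary \emph{reachable} states and induct on $\size\deriv$ by peeling off the \emph{first} step of the derivation; the one-step simulation (run overhead to exhaustion, then fire the forced $\beta$-transition, then match via determinism of $\tostrat$) is inlined into the inductive step. The paper instead isolates this reasoning into a separate \emph{One-Step Simulation} lemma (for any state $\state$, if $\decode\state \tostrat \tmtwo$ then $\state \tomacho^*\tomachb \statetwo$ with $\decode\statetwo = \tmtwo$), and in the main induction peels off the \emph{last} step of $\deriv$. Because the induction hypothesis is then always applied from the initial state $\compil\tm$, the paper never needs to generalize to reachable states---the generality is pushed entirely into the auxiliary lemma. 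Both routes use exactly the same five hypotheses in exactly the same way; your packaging trades an extra lemma for a stronger induction hypothesis, which is a perfectly standard and equally clean alternative.
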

The proof of \refthm{abs-impl} is a clean and abstract generalization of the concrete reasoning already at work in \cite{DBLP:conf/icfp/AccattoliBM14,fireballs,DBLP:conf/wollic/Accattoli16,accattoliWPTE}.

%
%
%

\paragraph{Parameters for Complexity Analyses.} 
By the \emph{derivations-to-executions} part of the implementation (\refpoint{implem-deriv-to-exec} in \refdef{implem}), given a derivation $\deriv \colon \tm_0 \tostrat^n \tmtwo$ there is a shortest execution $\exec \colon \compil{\tm_0} \tomachine^* \state$ such that $\decode\state = \tmtwo$. 
Determining \emph{the complexity of a machine $\mach$} amounts to bound the complexity of a concrete implementation of $\exec$ on a RAM model, as a function of two fundamental parameters:
\begin{varenumerate}
  \item \emph{Input}: the size $\size{\tm_0}$ of the initial term $\tm_0$ of the derivation $\deriv$;
  \item \emph{$\beta$-Steps/Transitions}: the length $n = \size\deriv$ of the derivation $\deriv$, that coincides with the number $\sizebeta\exec$ of $\beta$-transitions in $\exec$  by the $\beta$-matching requirement for implementations (\refpoint{implem-beta-matching} in \refdef{implem}).
\end{varenumerate}
A machine is \emph{reasonable} if its complexity is polynomial in $\size{\tm_0}$ and $\sizebeta\exec$, and it is \emph{efficient} if it is linear in both parameters. 
So, a strategy is reasonable (resp.~efficient) if there is a reasonable (resp.~efficient) machine implementing it. 
In \refsects{eglamour}{eg-compl-anal} we study a reasonable machine implementing right-to-left evaluation $\torf$ in $\firecalc$, thus showing that it is a reasonable strategy. In \refsect{fglamour} we optimize the machine to make it efficient.
By \refprop{basic-fireball}, \mbox{this implies that \emph{every} strategy in $\firecalc$ is efficient.}

\paragraph{Recipe for Complexity Analyses.} 
For complexity analyses on a machine $\mach$, overhead transitions $\tomacho$ are further separated into two classes:
\begin{varenumerate}
  \item \emph{Substitution Transitions $\tomachse$}: they are in charge of the substitution process; 
  \item \emph{Commutative Transitions $\tomachc$}: they are in charge of searching for the next $\beta$ or substitution redex to reduce.
\end{varenumerate}

\noindent Then, the estimation of the complexity of a machine is done in three steps:

\begin{varenumerate}
  \item \emph{Number of Transitions}: bounding the length of the execution $\exec$
  , by bounding the number of overhead transitions. 
  This part 
  splits into two subparts:
  \begin{varenumerateroman}
    \item \emph{Substitution $vs$ $\beta$}: bounding the number $\sizee\exec$ of substitution transitions in $\exec$ using the number of $\beta$-transitions;
    \item \emph{Commutative $vs$ Substitution}: bounding the number $\sizecom\exec$ of substitution transitions in $\exec$ using the size of the input and $\sizee\exec$; the latter---by the previous point---induces a bound with respect to $\beta$-transitions.
  \end{varenumerateroman}
  \item \emph{Cost of Single Transitions}: bounding the cost of concretely implementing a single transition of $\mach$. Here it is usually necessary to go beyond the abstract level, making some (high-level) assumption on how codes and data-structure are concretely represented.
  Commutative transitions are designed on purpose to have constant cost. Each substitution transition has a cost linear in the size of the initial term thanks to an invariant (to be proved) ensuring that only subterms of the initial term are duplicated and substituted along an execution. 
  Each $\beta$-transition has a cost either constant or linear in the input.
  \item \emph{Complexity of the Overhead}: obtaining the total bound by composing the first two points, that is, by taking the number of each kind of transition times the cost of implementing it, and summing over all kinds of transitions.
\end{varenumerate}

\paragraph{(Linear) Logical Reading.}  Let us mention that our partitioning of transitions into $\beta$, substitution, and commutative ones admits a proof-theoretical view, as machine transitions can be seen as cut-elimination steps \cite{DBLP:journals/toplas/AriolaBS09,DBLP:conf/icfp/AccattoliBM14}. Commutative transitions correspond to commutative cases, while $\beta$ and substitution are principal cases. Moreover, in linear logic the $\beta$ transition corresponds to the multiplicative case while the substitution transition to the exponential one. See \cite{DBLP:conf/icfp/AccattoliBM14} for more details.
\begin{figure}[!tb]
  \centering
  \scalebox{0.9}{
$
\!\!\!\begin{array}{c}
\begin{array}{c|c}
\begin{array}{rclcrrlllllllll}
        \stackitem & \grameq & \pair{\la\var\codetwo}{\stempty} \mid  \pair{\var}{\stack} &&
	\genv		& \grameq & \stempty \mid \esub\var\stackitem \!\cons\! \genv \\
	\stack 	& \grameq & \stempty \mid \stackitem \cons \stack&&

	\state		& \grameq & \glamst\dump\code\stack\genv\\
	\dump	& \grameq & \stempty \mid \dump \cons \dentry\code\stack 
	\\
\end{array} &
\begin{array}{rcllrrlllllrrll}
	\decode{\stempty}	& \defeq & \ctxhole &&		\relunf\tm\stempty &\defeq &\tm   \quad\ \relunf\tm{ \esub\var\stackitem \cons \genv }	 \defeq  \relunf{ \tm \isub\var{\decode\stackitem} }{  \genv }
        \\
	\decode{ \stackitem \cons \stack} 			& \defeq & \decstackp{\ctxhole\decode{\stackitem}} &&
	\stctx{\state}				& \defeq & \relunf{\decdump\ctxholep{\decstack}}\genv \\
        \decode{\pair{\tm}{\stack}} & \defeq & \decstackp{\tm} &&
	\decode{\state}				& \defeq & \relunf{\decdump\ctxholep{\decstackp\code}}\genv = \stctx{\state}\ctxholep{\relunf\code\genv}  
	\\
        \decode{\dump\!\cons\!\dentry\code\stack} & \defeq & \decdumpp{\decstackp{\code\ctxhole}} &&\multicolumn{3}{l}{\ \ \mbox{where $\state = \glamst\dump\code\stack\genv$}}
\end{array}
\end{array}
\\\\


  	{\setlength{\arraycolsep}{0.35em}
  	\begin{array}{c|c|c|c|c|c|c|c|ccc}
		\mbox{Dump} & \mbox{Code} & \mbox{Stack} & \mbox{Global Env} 
		&&
		\mbox{Dump} & \mbox{Code} & \mbox{Stack} & \mbox{Global Env}\\
\hhline{=|=|=|=|=|=|=|=|=}
		\dump & \code\codetwo & \stack & \genv
	  	&\tomachcone&
	  	\!\dump\!\cons\!\dentry\code\stack\! & \codetwo & \stempty &\genv
	  	\\
		\!\dump\!\cons\!\dentry\code\stack\! & \la\var\codetwo & \stempty & \genv
		& \tomachctwo &
		\dump & \code & \pair{\la\var\codetwo}{\stempty} \cons\stack & \genv
		\\

		\!\dump\!\cons\!\dentry\code\stack\! & \var & \stacktwo & \genv
		& \tomachcthree &
		\dump & \code & \pair{\var}{\stacktwo}\cons\stack & \genv \\
		\multicolumn{9}{r}{\mbox{if $\genv(\var)=\undef$ or $\genv(\var)=\pair{\vartwo}{\stackthree}$}}
		\\
		\dump & \la\var\code & \stackitem\!\cons\!\stack & \genv
	  	&\tomachsm &
		\dump & \code & \stack & \esub\var\stackitem\genv
	  	\\
%

	  	\dump & \var & \stack & \genv_1\esub\var{\pair{\la\vartwo\codetwo}\stempty}\genv_2
		& \tomachse &
		\dump & \!\rename{\la\vartwo\codetwo}\! & \stack & \genv_1\esub\var{\pair{\la\vartwo\codetwo}\stempty}\genv_2\\

	\end{array}}
\\\\
\mbox{where $\rename{\la\vartwo\codetwo}$ is any well-named code $\alpha$-equivalent to $\la\vartwo\codetwo$ such that its }
\\ \mbox{bound names  are fresh with respect to those in $\dump$, $\stack$ and $\genv_1\esub\var{\pair{\la\vartwo\codetwo}\stempty}\genv_2$.}
\end{array}
$
}
\caption{\eglamour machine: data-structures (stacks $\stack$, dumps $\dump$, global env. $\genv$, states $\state$), unfolding $\tm\!\downarrow_\genv$, decoding $\decode\cdot$ (stacks are decoded to contexts in postfix notation for plugging, \ie we write $\decstackp\code$ rather than $\decodep \stack \code$), and transitions.}
\label{fig:eglamour}
  \end{figure}
\section{\eglamour}
\label{sect:eglamour}

In this section we introduce the \eglamour, a simplified version of the \glamour machine from \cite{fireballs}
: unlike the latter, the \eglamour does not need any labeling of codes to provide a reasonable implementation.

With respect to the literature on abstract machines for \cbv, our machines are unusual in two respects. 
First, and more importantly, they use a single \emph{global} environment instead of \emph{closures} and \emph{local environments}. 
Global environments are used in a minority of works \cite{DBLP:journals/entcs/FernandezS09,DBLP:conf/birthday/SandsGM02,DBLP:conf/ppdp/DanvyZ13,DBLP:conf/icfp/AccattoliBM14,fireballs,DBLP:conf/aplas/AccattoliBM15,DBLP:conf/wollic/Accattoli16} and induce simpler, more abstract machines where $\alpha$-equivalence is pushed to the meta-level (in the operation $\renamenop\code$ in $\tomachse$ in \reffigs{eglamour}{fglamour}). 
This 
on-the-fly $\alpha$-renaming is harmless with respect to complexity analyses, see also discussions in \cite{DBLP:conf/icfp/AccattoliBM14,accattoliWPTE}. Second, argument stacks contain pairs of a code and a stack, to implement some of the machine transitions in constant time.

\paragraph{Background.} \glamour stands for \emph{Useful} (\ie optimized to be \emph{reasonable}) \emph{Open} (reducing open terms) \emph{Global} (using a single global environment) LAM, and LAM stands for \emph{Leroy Abstract Machine}, an ordinary machine implementing right-to-left \ccbv, defined in \cite{DBLP:conf/icfp/AccattoliBM14}. 
In \cite{fireballs} the study of the \glamour was done according to the distillation approach of \cite{DBLP:conf/icfp/AccattoliBM14}, \ie by decoding the machine towards a $\l$-calculus with explicit substitutions. Here we do not follow the distillation approach, we decode directly to $\firecalc$, which is simpler.

\paragraph{Machine Components.} The \eglamour is defined in \reffig{eglamour}. A machine state $\state$ is a quadruple $\glamst{\dump}{\code}{\stack}{\genv}$ given by:
\begin{varitemize}
\item \emph{Code $\code$}: a term not considered up to $\alpha$-equivalence, which is why it is over-lined;

\item  \emph{Argument Stack $\stack$}: it contains the arguments of the current code. Note that stacks items $\stackitem$ are pairs $\pair{\var}{\stack}$ and $\pair{\la\var\codetwo}{\stempty}$. These pairs allow to implement some of the transitions in constant time. 
The pair $\pair{\var}{\stack}$ codes the term 
$\decstackp\var$ (defined in \reffig{eglamour}---the decoding is explained below) that would be obtained by putting $\var$ in the context obtained by decoding the argument stack $\stack$.  
The pair $\pair{\la\var\codetwo}{\stempty}$ is used to inject abstractions into pairs, so that items $\stackitem$ can be uniformly seen as pairs $\pair\code\stack$ of a code $\code$ and a stack $\stack$.

\item \emph{Dump $\dump$}: a second stack, that together with the argument stack $\stack$ is used to walk through the code and search for the next redex to reduce. 
The dump is extended with an entry $\dentry\code\stack$ every time evaluation enters in the right subterm $\codetwo$ of an application $\code\codetwo$. 
The entry saves the left part $\code$ of the application and the current stack $\stack$, to restore them when the evaluation of the right subterm $\codetwo$ is over. 
The dump $\dump$ and the stack $\pi$ decode to an evaluation context.

\item \emph{Global Environment $\genv$}: 
a list of explicit (\ie delayed) substitutions storing substitutions generated by the redexes encountered so far. It is used to implement micro-step evaluation (\ie the substitution for one variable occurrence at a time). 
We write $\genv(\var) = \bot$ if in $\genv$ there are no entries of the form $\esub\var\stackitem$. 
Often $\esub\var{\stackitem}\genv$ stands for $\esub\var{\stackitem}\!\cons\!\genv$.
\end{varitemize}

\paragraph{Transitions.} 
In the \eglamour there is one 
$\beta$-transition whereas overhead transitions are divided up into substitution and commutative transitions.
\begin{varitemize}
	\item \emph{$\beta$-Transition $\tomachsm$}: it morally fires a $\torf$-redex, the one corresponding to $(\la\var\code)\stackitem$, except that it puts a new delayed substitution $\esub\var\stackitem$ in the environment instead of doing the meta-level substitution $\code\isub\var\stackitem$ of the argument in the body of the abstraction;
	\item \emph{Substitution Transition $\tomachse$}: it substitutes the variable occurrence under evaluation with a (properly $\alpha$-renamed copy of a) code from the environment. It is a micro-step variant of meta-level substitution. 
	It is invisible on $\firecalc$ because the decoding produces the term obtained by meta-level substitution, and so the micro work done by $\tomachse$ cannot be observed at the \mbox{coarser granularity of $\firecalc$.}	
	\item \emph{Commutative Transitions $\tomachc$}: they locate and
expose the next redex according to the right-to-left strategy, by rearranging the data-structures. They are invisible on the calculus.
The commutative rule $\tomachcone$ forces evaluation to be right-to-left on applications: the machine processes first the right subterm $\codetwo$, saving the left sub-term $\code$ on the dump together with its current stack $\stack$.
The role of $\tomachctwo$ and $\tomachcthree$ is to backtrack to the entry on top of the dump. When the right subterm, \ie the pair $\pair\code\stack$ of current code and stack, is finally in normal form, it is pushed on the stack and the machine backtracks.
\end{varitemize}
\emph{O for Open}: note condition $\genv(\var)=\undef$  in $\tomachcthree$---that is how the \eglamour handles open terms. \emph{U for Useful}: note condition $\genv(\var)=\pair{\vartwo}{\stackthree}$ in $\tomachcthree$---inert terms are never substituted, according to \reflemma{inerts-and-creations}. Removing the useful side-condition one recovers \gregoire and Leroy's machine \cite{DBLP:conf/icfp/GregoireL02}. 
Note that terms substituted by $\tomachse$ are always abstractions and never variables---this fact will play a role  in \refsect{fglamour}.
\emph{Garbage Collection}: it is here simply ignored, or, more precisely, it is encapsulated at the meta-level, in the decoding function. It is well-known that this is harmless for the study of time complexity.

\paragraph{Compiling, Decoding and Invariants.}
A term $\tm$ is compiled to the machine \emph{initial state} $\compil{\tm} = \glamst\stempty\code\stempty\stempty$, where $\code$ is a well-named term $\alpha$-equivalent to $\tm$. 
Conversely, every machine state $\state$ decodes to a term $\decode\state$ (see the top right part of \reffig{eglamour}), having the shape $\stctx{\state}\ctxholep{\relunf{\code}\genv}$, where $\relunf{\code}\genv$ is a $\l$-term, obtained by applying to the code the meta-level substitution $\relunf{}\genv$ induced by the global environment $\genv$, and $\stctx{\state}$ is an evaluation context, obtained by decoding the stack $\stack$ and the dump $\dump$ and then applying $\relunf{}\genv$. Note that, to improve readability, stacks are decoded to contexts in postfix notation for plugging, \ie we write $\decstackp\code$ rather than $\decodep \stack \code$ because $\stack$ is a context that puts arguments in front of $\code$.

\begin{example}
\label{ex:easy}
  To have a glimpse of how the \eglamour works, let us show how it implements the derivation  $\tm \defeq (\la{\varthree}{\varthree(\vartwo\varthree)})\la{\var}{\var} \torf^2 \vartwo \, \la{\var}{\var}$ of \refex{torf}:
  \begin{equation*}
  \scalebox{0.87}{
  $\begin{array}{l|r|c|rc}
    \text{Dump} & \text{Code} & \text{Stack} & \text{Global Environment} \\
    \cline{1-4}
    \glamsttab {\stempty} {(\la{\varthree}{\varthree(\vartwo\varthree)})\la{\var}{\var}} {\stempty} {\stempty}  &\ \tomachcone \\
    \glamsttab{\dentry{\la{\varthree}{\varthree(\vartwo\varthree)}}{\stempty}} {\la{\var}{\var}} {\stempty} {\stempty} &\ \tomachctwo \\
    \glamsttab{\stempty}{\la{\varthree}{\varthree(\vartwo\varthree)}}{\pair{\la{\var}{\var}}{\stempty}}{\stempty} &\ \tomachsm \\
    \glamsttab{\stempty} {\varthree(\vartwo\varthree)} {\stempty} {\esub{\varthree}{\pair{\la{\var}{\var}}{\stempty}}} &\ \tomachcone \\ 
    \glamsttab{\dentry{\varthree}{\stempty}} {\vartwo\varthree} {\stempty} {\esub{\varthree}{\pair{\la{\var}{\var}}{\stempty}}} &\ \tomachcone \\
    \glamsttab {\dentry{\varthree}{\stempty} \cons \dentry{\vartwo}{\stempty}} {\varthree} {\stempty} {\esub{\varthree}{\pair{\la{\var}{\var}}{\stempty}}} &\ \tomachse \\
    \glamsttab {\dentry{\varthree}{\stempty} \cons \dentry{\vartwo}{\stempty}} {\la{\var'\!}{\var'}} {\stempty} {\esub{\varthree}{\pair{\la{\var}{\var}}{\stempty}}} &\ \tomachctwo \\
    \glamsttab {\dentry{\varthree}{\stempty}} {\vartwo} {\pair{\la{\var'\!}{\var'}}{\stempty}} {\esub{\varthree}{\pair{\la{\var}{\var}}{\stempty}}} &\ \tomachcthree \\
    \glamsttab {\stempty} {\varthree} {\pair{\vartwo}{(\pair{\la{\var'\!}{\var'}}{\stempty})}} {\esub{\varthree}{\pair{\la{\var}{\var}}{\stempty}}} &\ \tomachse \\
    \glamsttab {\stempty} {\la{\var''\!}{\var''}} {\pair{\vartwo}{(\pair{\la{\var'\!}{\var'}}{\stempty})}} {\esub{\varthree}{\pair{\la{\var}{\var}}{\stempty}}} &\ \tomachsm \\
    \glamsttab {\stempty} {\var''} {\stempty} {\esub{\var''}{\pair{\vartwo}{(\pair{\la{\var'\!}{\var'}}{\stempty})}} \!\cons\! \esub{\varthree}{\pair{\la{\var}{\var}}{\stempty}}} 
  \end{array}$
  }
  \end{equation*}    
  Note that  the initial state is the compilation of the term $\tm$, the final state decodes to the term $\vartwo \, \la{\var}{\var}$, and the two $\beta$-transitions in the execution correspond to the two $\torf$-steps in the derivation considered in \refex{torf}.
\end{example}

The study of the \eglamour machine relies on the following invariants. 

\newcounter{l:eglamour-invariants} 
\addtocounter{l:eglamour-invariants}{\value{lemma}}
\begin{lemma}[\eglamour Qualitative Invariants]
	\label{l:eglamour-invariants} 
	Let%
\NoteProof{lappendix:eglamour-invariants}
	$\state = (\dump,\code,\stack,\genv)$ be a reachable state. Then:
	\begin{varenumerate}
		\item \emph{Name:} \label{p:eglamour-invariants-name}
		\begin{varenumerate}
			\item \emph{Explicit Substitution}: \label{p:eglamour-invariants-name-es}
			if $\genv = \genvtwo \esub\var\codetwo \genvthree$  then $\var$ is fresh wrt $\codetwo$ and $\genvthree$;

			\item \emph{Abstraction}: \label{p:eglamour-invariants-name-abs}
			if $\la\var\codetwo$ is a subterm of $\dump$, $\code$, $\stack$, or $\genv$, 
			$\var$ may occur only in $\codetwo$;
		\end{varenumerate}
		
		\item \label{p:eglamour-invariants-fireball-stack}\emph{Fireball Item}: $\decode\stackitem$ and $\relunf{\decode\stackitem}\genv$ are inert terms if $\stackitem = \pair\var\stacktwo$, and abstractions otherwise, for every item $\stackitem$ in $\stack$, in $\genv$, and in every stack in $\dump$;

		\item \label{p:eglamour-invariants-ev-ctx}\emph{Contextual Decoding}: $\stctx{\state} = \relunf{\decdumpp\decstack}\genv$ is a right context.

		
	\end{varenumerate}
\end{lemma}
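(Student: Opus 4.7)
The proof proceeds by induction on the length of the execution $\exec \colon \compil\tm \tomachine^* \state$. The three points are proved simultaneously since they interact (for instance, the fireball item invariant in \refpoint{eglamour-invariants-fireball-stack} is needed to establish the shape of right contexts in \refpoint{eglamour-invariants-ev-ctx}, and the name invariants in \refpoint{eglamour-invariants-name} are needed for the meta-level substitution $\relunf{\cdot}\genv$ to be well-behaved).

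\emph{Base case.} An initial state has the form $(\stempty, \code, \stempty, \stempty)$ with $\code$ well-named by construction of $\compil\cdot$, so \refpointp{eglamour-invariants-name}{es} holds vacuously, \refpointp{eglamour-invariants-name}{abs} follows from well-namedness of $\code$, \refpoint{eglamour-invariants-fireball-stack} is vacuous (all stacks and the environment are empty), and \refpoint{eglamour-invariants-ev-ctx} reduces to $\stctx{\state} = \ctxhole$, which is a right context.

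\emph{Inductive step.} Assuming the invariants hold for $\state$, I analyze each of the five transition rules producing $\statetwo$. The commutative rules $\tomachcone$ and $\tomachctwo$ rearrange data-structures without generating fresh names or modifying the environment, so \refpoint{eglamour-invariants-name} is immediate; for \refpoint{eglamour-invariants-fireball-stack} the only new item is $\pair{\la\var\codetwo}\stempty$ pushed onto $\stack$ in $\tomachctwo$, which is an abstraction (hence a fireball), and $\relunf{\la\var\codetwo}\genv$ is still an abstraction because meta-level substitution preserves the top-level shape. For rule $\tomachcthree$, the side-condition $\genv(\var) \in \{\undef, \pair\vartwo\stackthree\}$ together with the inductive fireball item invariant for $\stacktwo$ yields that $\relunf{\decstacktwop\var}\genv$ is of the form $\var'\fire_1\cdots\fire_n$ (an inert term) where $\var'$ is either $\var$ itself or another variable coming from a chain of variable-to-variable bindings in $\genv$. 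Rule $\tomachsm$ moves the stack item $\stackitem$ into a new entry $\esub\var\stackitem$; freshness of $\var$ with respect to $\codetwo$ and the rest of $\genv$ comes from invariant \refpointp{eglamour-invariants-name}{abs} applied to the abstraction $\la\var\code$ that was the current code, which (by the same invariant) has $\var$ bound to occur only in $\code$ and not elsewhere in the state.

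The delicate case is $\tomachse$, which performs an on-the-fly $\alpha$-renaming $\rename{\la\vartwo\codetwo}$ choosing bound names fresh with respect to $\dump$, $\stack$, and $\genv$. This renaming is precisely what is required to re-establish \refpointp{eglamour-invariants-name}{abs} for the new code, and combined with the inductive hypothesis on $\genv$ it preserves \refpointp{eglamour-invariants-name}{es}. The substituted code is an abstraction (note that \eglamour never binds a variable to another variable via $\tomachsm$: stack items pushed by $\tomachctwo$ are abstractions and items pushed by $\tomachcthree$ are $\pair\var\stacktwo$ which are never substituted because the side-condition blocks $\tomachse$ on them), so the fireball item and $\relunf{\cdot}\genv$-shape invariants carry over.

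\emph{Contextual decoding.} For \refpoint{eglamour-invariants-ev-ctx} I unfold the definition: $\decstack$ is built by postfix plugging with items $\decode\stackitem$, each a fireball by \refpoint{eglamour-invariants-fireball-stack}, yielding a context built from the production $\revctx\fire$; $\decdump$ contributes pending left parts via entries $\dentry\code\stack$, which correspond to the production $\tm\revctx$. The meta-level substitution $\relunf{\cdot}\genv$ preserves being a right context because, by \refpoint{eglamour-invariants-fireball-stack}, every entry of $\genv$ unfolds to a fireball, so plugging $\revctx\fire$ stays of the right form. The main obstacle in the whole argument is threading the three invariants together in $\tomachcthree$ and $\tomachse$: showing that the variable-chain situation $\genv(\var)=\pair\vartwo\stackthree$ still yields an inert term after full unfolding, and that the fresh $\alpha$-renaming in $\tomachse$ is compatible with the name invariant uniformly on $\dump$, $\stack$, $\genv$; both are handled by exploiting \refpointp{eglamour-invariants-name}{es} to ensure no variable capture occurs during unfolding.
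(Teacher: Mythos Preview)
Your proposal is correct and follows essentially the same approach as the paper: induction on the length of the execution, with the base case trivial and the inductive step a case analysis on the last transition, the three invariants being proved simultaneously because they depend on one another. The paper's proof is organized identically; the only minor difference is that for \refpoint{eglamour-invariants-ev-ctx} the paper carries out the transition-by-transition analysis explicitly (showing in each case how the new $\stctx\state$ is obtained as a composition of right contexts from the old one), whereas you sketch a more direct structural argument from the fireball item invariant---both work, and yours is arguably cleaner once \refpoint{eglamour-invariants-fireball-stack} is in hand. Your phrase ``chain of variable-to-variable bindings'' in the $\tomachcthree$ case is a slight misdescription (the environment binds $\var$ to a pair $\pair\vartwo\stackthree$, and the \ih\ directly gives that $\relunf{\decode{\pair\vartwo\stackthree}}\genv$ is inert, with no chain-chasing needed), but the conclusion you draw is correct.
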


\paragraph{Implementation Theorem.} The invariants are used to prove the implementation theorem by proving that the hypotheses of \refthm{abs-impl} hold, that is, that the \eglamour, $\torf$ and $\decode\cdot$ form an implementation system.

\newcounter{thm:eglamour-implementation} 
\addtocounter{thm:eglamour-implementation}{\value{theorem}}
\begin{theorem}[\eglamour Implementation]
  \label{thm:eglamour-implementation}
  The 
  \NoteProof{thmappendix:eglamour-implementation}
  \eglamour implements right-to-left evaluation $\torf$ in $\firecalc$ (via the decoding $\decode\cdot$).
\end{theorem}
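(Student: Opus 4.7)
The plan is to invoke \refthm{abs-impl}: it suffices to check that the triple (\eglamour, $\torf$, $\decode\cdot$) forms an implementation system in the sense of \refdef{implementation}, \ie to verify its five clauses. Four of them are routine case analyses on transitions, powered by the invariants of \reflemma{eglamour-invariants}; the only delicate point is the termination of overhead transitions.

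For \emph{$\beta$-Projection}, I would unfold the decoding on both sides of the unique $\beta$-transition $\tomachsm$. Using the definition of $\decode\cdot$ one computes $\decode\state = \stctx\state\ctxholep{(\la\var{\relunf\code\genv})\, \relunf{\decode\stackitem}\genv}$ and $\decode\statetwo = \stctx\state\ctxholep{(\relunf\code\genv)\isub\var{\relunf{\decode\stackitem}\genv}}$; note that $\stctx\state = \stctx\statetwo$ because, by the Name invariant (\reflemmap{eglamour-invariants}{name}), $\var$ is fresh for $\dump$, $\stack$, and $\genv$. By the Fireball Item invariant (\reflemmap{eglamour-invariants}{fireball-stack}), $\relunf{\decode\stackitem}\genv$ is a fireball, and by the Contextual Decoding invariant (\reflemmap{eglamour-invariants}{ev-ctx}), $\stctx\state$ is a right context; hence the two decodings are related by exactly one $\torf$-step. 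For \emph{Overhead Transparency}, a case analysis on $\tomachcone$, $\tomachctwo$, $\tomachcthree$ shows that they merely rearrange code fragments among dump, code position, and stack without altering the unfolded plug. For $\tomachse$, the new code $\rename{\la\vartwo\codetwo}$ is $\alpha$-equivalent to the entry already stored in $\genv$, so the unfolding produces the same $\l$-term up to $\alpha$. \emph{Determinism} of the machine follows by inspection: the left-hand sides of transitions are pairwise incompatible, being disambiguated by the syntactic shape of the code, by whether stack and dump are empty, and by the $\genv(\var)$ side-condition distinguishing $\tomachcthree$ from $\tomachse$; determinism of $\torf$ is \reflemmap{prop-of-torf}{determ}. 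For \emph{Progress}, a direct check shows that a reachable state to which no transition applies must have empty dump together with either an abstraction and an empty stack as code, or a variable $\var$ with $\genv(\var) \in \{\undef, \pair\vartwo\stacktwo\}$ as code; by the Fireball Item invariant, the decoding unfolds in both cases to a fireball, hence is $\torf$-normal by open harmony (\refpropp{distinctive-fireball}{open-harmony}).

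The main obstacle is the \emph{Termination of Overhead Transitions}, because $\tomachse$ replaces a variable by a potentially much larger abstraction, and thus strictly \emph{increases} the code size. The remedy is to exploit the Name invariant, which makes the global environment acyclic: each entry $\esub\var\stackitem$ binds a $\var$ that is fresh for the rest of $\genv$, so the variables in $\genv$ carry a natural well-founded rank given by their position in the environment. I would define a weighted measure on states that counts every variable occurrence in dump, code, and stack, weighting each $\var$ by a recursive function of its rank (bounded by the size of the unfolded abstraction bound to $\var$). Commutative transitions decrease a simpler lexicographic component such as $(\size\code, \size\stack, \size\dump)$ without increasing the weighted part, while each $\tomachse$ strictly decreases the weighted part because the newly substituted abstraction contains only variables of strictly smaller rank. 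Combining the two ingredients yields well-founded termination of overhead reduction, and the statement follows by \refthm{abs-impl}.
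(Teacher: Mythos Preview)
Your plan---invoke \refthm{abs-impl} and verify the five clauses of \refdef{implementation} via the invariants of \reflemma{eglamour-invariants}---is exactly the paper's approach, and your treatment of $\beta$-projection, overhead transparency, determinism, and progress is correct in spirit. One small slip: in the $\beta$-projection case you write $\stctx\state = \stctx\statetwo$, but the stacks differ ($\stackitem\cons\stack$ versus $\stack$), so these contexts are not equal; what is actually needed is that the \emph{outer} right context $\relunf{\decdumpp{\decstack}}\genv$ coincides with $\stctx\statetwo$, which follows from the Name invariant since $\var$ is fresh for $\dump$, $\stack$, and $\genv$.

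Where you diverge from the paper is the termination of overhead transitions, and here your proposal has two concrete problems. First, your rank-weighted measure is unnecessary: the code produced by $\tomachse$ is always an \emph{abstraction}, and the count of free variable occurrences \emph{not under a $\lambda$} (the paper's $\sizefree\cdot$, \reflemma{free-occ-inv}) therefore drops from $1$ to $0$ on every $\tomachse$. No acyclicity or rank is needed---the paper simply forward-references this quantitative invariant from \refsect{eg-compl-anal} (together with \reflemma{comm-bound}) to discharge termination. Second, your lexicographic component $(\size\code, \size\stack, \size\dump)$ does not decrease on commutatives: in $\tomachctwo$ and $\tomachcthree$ the code is replaced by the left term $\code$ popped from the dump, which can be arbitrarily larger than the current code. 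The measure that works is the paper's commutative size $\sizecom\state = \size\code + \Sigma_{\dentry\codetwo\stacktwo \in \dump}\size\codetwo$, which already accounts for the codes stored in the dump and strictly decreases on every $\tomachcp{_{1,2,3}}$.
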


%
\newcounter{l:eglamour-trans-projection} 
\addtocounter{l:eglamour-trans-projection}{\value{lemma}}
%
%
\newcounter{l:eglamour-progress} 
\addtocounter{l:eglamour-progress}{\value{lemma}}
%
%
\newcounter{thm:weak-bis} 
\addtocounter{thm:weak-bis}{\value{theorem}}
 \section{Complexity Analysis of the \eglamour}
\label{sect:eg-compl-anal}
The analysis of the \eglamour is done according to the recipe given at the end of \refsect{machines-intro}. 
The result (see \refthm{eglamour-overhead-bound} below) is that the \eglamour is linear in the number $\sizem{\exec}$ of $\beta$-steps/transitions and quadratic in the size $\size{\tm_0}$ of the initial term $\tm_0$, \ie its overhead has complexity $O((1+\sizem{\exec})\cdot{\size{\tm_0}}^2)$.

The analysis relies on a quantitative invariant, the crucial \emph{subterm invariant}, ensuring that $\tomachse$ duplicates only subterms of the initial term, so that the cost of duplications is connected to one of the two parameters for complexity analyses.

\newcounter{l:subterm-invariant} 
\addtocounter{l:subterm-invariant}{\value{lemma}}
\begin{lemma}[Subterm Invariant]
\label{l:subterm-invariant} 
	Let 
\NoteProof{lappendix:subterm-invariant}
	$\exec \colon \compil{\tm_0} \tomach^* (\dump,\code,\stack,\genv)$ be an \eglamour execution. 
	Every subterm $\la\var\codetwo$ of \mbox{$\dump$, $\code$, $\stack$, or $\genv$ is a subterm of $\tm_0$.}
\end{lemma}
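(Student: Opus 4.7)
The plan is to proceed by straightforward induction on the length of the execution $\exec$. The base case is immediate: when $\exec$ is empty, the state is $\compil{\tm_0} = (\stempty, \code, \stempty, \stempty)$ with $\code$ a well-named term $\alpha$-equivalent to $\tm_0$, so the only abstractions in the state sit inside $\code$ and are thus subterms of $\tm_0$ (up to $\alpha$-renaming, which the statement implicitly absorbs, exactly as the machine does via the renaming operation $\renamenop{\cdot}$ in $\tomachse$).

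For the inductive step I would assume the invariant for a state $\state$ and inspect each of the five transitions $\state \tomach \statetwo$. For the three commutative transitions $\tomachcone$, $\tomachctwo$, $\tomachcthree$ and for the $\beta$-transition $\tomachsm$, the argument is purely syntactic: the components of $\statetwo$ are obtained from those of $\state$ by \emph{decomposing} and \emph{rearranging} the pieces $\code$, $\codetwo$, $\la\var\codetwo$, $\stack$, $\stackitem$. For instance, in $\tomachsm$ the abstraction $\la\var\code$ is destroyed and replaced by $\code$ in the code component and by the binding $\esub\var\stackitem$ in the environment, but every abstraction appearing inside $\code$ or $\stackitem$ was already visible in $\state$, so the inductive hypothesis applies verbatim. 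No commutative or $\beta$-transition manufactures an abstraction from scratch.

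The only transition requiring a moment of thought is the substitution transition $\tomachse$, because its new code is a fresh $\alpha$-renaming $\rename{\la\vartwo\codetwo}$ of an abstraction $\la\vartwo\codetwo$ occurring in the environment of $\state$. By the inductive hypothesis $\la\vartwo\codetwo$ is a subterm of $\tm_0$ (up to $\alpha$), and since $\renamenop{\cdot}$ only refreshes bound names it preserves this property; the dump, stack, and environment are not modified, so no new abstraction can appear elsewhere either. The ``main obstacle'' is really not an obstacle but a conceptual remark: the lemma crucially uses the fact that the \eglamour has been designed so that $\tomachse$ only ever duplicates \emph{abstractions already stored in $\genv$} (and not, say, arbitrary codes built on the fly), which is why the set of abstractions reachable along an execution never escapes the set of subterms of the initial term.
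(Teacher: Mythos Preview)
Your proposal is correct and follows essentially the same approach as the paper: induction on the length of the execution, with a straightforward case analysis on the last transition, noting that commutative and $\beta$-transitions only rearrange existing code while $\tomachse$ copies (and $\alpha$-renames) an abstraction already present in the environment. The one point the paper makes explicit that you only allude to is the precise meaning of ``subterm'': the paper defines it up to \emph{all} variable names (free and bound), via the erasure $\tm^-$ that replaces every variable by a fixed symbol $\ast$; this is slightly more liberal than $\alpha$-equivalence and is exactly what is needed for the renaming in $\tomachse$ to go through cleanly and for the size bound $\size{\codetwo} \leq \size{\tm_0}$ (the actual payload of the invariant) to follow.
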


\paragraph{Intuition About Complexity Bounds.} The number $\sizee\exec$ of substitution transitions $\tomachse$ depends on both parameters for complexity analyses, the number $\sizem{\exec}$ of $\beta$-transitions \emph{and} the size $\size{\tm_0}$ of the initial term. 
Dependency on $\sizem{\exec}$ is standard, and appears in every machine \cite{DBLP:conf/icfp/BlellochG96,DBLP:conf/birthday/SandsGM02,DBLP:conf/icfp/AccattoliBM14,fireballs,DBLP:conf/aplas/AccattoliBM15,DBLP:conf/wollic/Accattoli16}---sometimes it is quadratic, here it is linear, in \refsect{fglamour} we come back to this point. 
Dependency on $\size{\tm_0}$ is also always present, but usually only for \emph{the cost} of a single $\tomachse$ transition, 
since only subterms of $\tm_0$ are duplicated, as ensured by the subterm invariant. For the \eglamour, instead, also \emph{the number} of $\tomachse$ transitions depends---linearly---on $\size{\tm_0}$: this is a side-effect of dealing with open terms. Since both the cost and the number of $\tomachse$ transitions depend on $\size{\tm_0}$, the dependency is quadratic.

The following family of terms shows the dependency on $\size{\tm_0}$ in isolation (\ie, with no dependency on $\sizem\exec$). 
Let $\tmfour_n \defeq \la\var(\ldots((\vartwo\underbrace{\var)\var)\ldots)\var}_{n}$ and consider:
 
\vspace{-1.5\baselineskip}
\begin{equation}
\tmtwo_n \defeq \tmfour_n \tmfour_n = (\la\var(\ldots((\vartwo\overbrace{\var)\var)\ldots)\var}^{n}) \tmfour_n \tobabs (\ldots((\vartwo\overbrace{\tmfour_n) \tmfour_n)\ldots)\tmfour_n}^{n}\,.
\label{eq:quad-dependency}
\end{equation}
Forgetting about commutative transitions, the \eglamour would evaluate $\tmtwo_n$ with one $\beta$-transition $\tomachsm$ and $n$ substitution transitions $\tomachse$, each one duplicating $\tmfour_n$, whose size (as well as the size of the initial term $\tmtwo_n$) is linear in $n$. 

The number $\sizecom\exec$ of commutative transitions $\tomachc$, roughly, is linear in the amount of code involved in the evaluation process. This amount is given by the initial code plus the code produced by duplications, that is bounded by the number of substitution transitions times the size of the initial term. The number of commutative transitions is then $O((1+\sizem\exec)\cdot\size{\tm_0}^2)$. Since each one has constant cost, this is also a bound to their cost.

\paragraph{Number of Transitions 1: Substitution $vs$ $\beta$ Transitions.} The number $\sizee\exec$ of substitution transitions is proven (see \refcor{exp-bilinear} below) to be bilinear, \ie linear in $\size{\tm_0}$ and $\sizem{\exec}$, by means of a measure.

The \emph{free size} $\sizefree \cdot$ of a code counts the number of
free variable occurrences that are not under abstractions. 
It is defined and extended to states as follows:
\begin{align*}
   \sizefree \var & \defeq  1 						&   \sizefree {\stempty} & \defeq   0\\
   \sizefree {\la\vartwo\codetwo} & \defeq  0\						&    \sizefree {\stackitem \cons \stack} & \defeq   \sizefree\stackitem + \sizefree\stack\\
   \sizefree {\code \codetwo} & \defeq  \sizefree \tm + \sizefree \tmtwo	&
   \sizefree {\dump\cons(\code,\stack)} & \defeq   \sizefree\tm + \sizefree\stack + \sizefree\dump
\end{align*}
%
\begin{equation*}
  \sizefree{ \eglamst \dump \code \stack \genv} \defeq \sizefree\dump + \sizefree\code + \sizefree\stack.
\end{equation*}

\newcounter{l:free-occ-inv} 
\addtocounter{l:free-occ-inv}{\value{lemma}}
\begin{lemma}[Free Occurrences Invariant]
\label{l:free-occ-inv}
  Let%
\NoteProof{lappendix:free-occ-inv} 
  $\exec \colon \compil{\tm_0} \tomach^* \state$ be an \eglamour execution.
  Then, $\sizefree\state \leq \sizefree{\tm_0} + \size{\tm_0} \cdot \sizem\exec - \sizee \exec$.
\end{lemma}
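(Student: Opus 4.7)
The plan is to proceed by induction on the length of the execution $\exec$, inspecting how each kind of transition affects the measure $\sizefree\cdot$. Implicit in this is the natural extension of $\sizefree\cdot$ to stack items, given by $\sizefree{\pair\code\stack} \defeq \sizefree\code + \sizefree\stack$, consistent with the decoding $\decode{\pair\code\stack} = \decstackp\code$. The base case is immediate since $\compil{\tm_0}$ has empty dump, stack, and environment, so $\sizefree{\compil{\tm_0}} = \sizefree{\tm_0}$, while $\sizem\exec = \sizee\exec = 0$.

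For the inductive step, consider the last transition of $\exec = \exec' \cdot \tau$. The key case-by-case claims are:
\begin{itemize}
\item Each commutative transition preserves $\sizefree$. For instance, $\tomachcone$ moves $\code\codetwo$ apart into $\codetwo$ (current code) and $\dentry\code\stack$ (new dump entry), so both sides contribute $\sizefree{\dump} + \sizefree\code + \sizefree\codetwo + \sizefree\stack$. Transition $\tomachctwo$ turns the abstraction $\la\var\codetwo$ (with $\sizefree{\la\var\codetwo} = 0$) into a stack item $\pair{\la\var\codetwo}\stempty$ (also contributing $0$) while re-exposing $\code$ and $\stack$ from the top dump entry, whose free-occurrence contributions are simply re-attributed to the new current code and stack. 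Transition $\tomachcthree$ is symmetric, moving $\var$ from code to stack item (both contribute $1$).
\item Transition $\tomachse$ replaces the current code $\var$ (contributing $1$) with an $\alpha$-renamed abstraction $\rename{\la\vartwo\codetwo}$ (contributing $0$), so $\sizefree\state = \sizefree{\state'} - 1$. Since $\sizee\exec = \sizee{\exec'} + 1$ and $\sizem\exec = \sizem{\exec'}$, the induction hypothesis gives exactly the required bound.
\item Transition $\tomachsm$ moves from $\glamst\dump{\la\var\code}{\stackitem\cons\stack}\genv$ to $\glamst\dump\code\stack{\esub\var\stackitem\genv}$, so $\sizefree\state - \sizefree{\state'} = \sizefree\code - \sizefree\stackitem \leq \sizefree\code$. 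Here the subterm invariant (\reflemma{subterm-invariant}) applies: $\la\var\code$ occurs in $\state'$, hence it is a subterm of $\tm_0$, so $\sizefree\code \leq \size\code \leq \size{\tm_0}$. Combined with $\sizem\exec = \sizem{\exec'} + 1$, $\sizee\exec = \sizee{\exec'}$, and the induction hypothesis, this yields the desired inequality.
\end{itemize}

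The only delicate point is the $\tomachsm$ case, where one must justify bounding the newly exposed free-occurrence count $\sizefree\code$ by $\size{\tm_0}$ rather than by a cumulative quantity that could have grown via previous substitutions. This is precisely what the subterm invariant provides, and it is the reason that the bound is additive in $\sizem\exec$ with coefficient $\size{\tm_0}$. The other transitions are routine bookkeeping, so the whole proof reduces to a straightforward induction once the stack-item extension of $\sizefree\cdot$ is in place.
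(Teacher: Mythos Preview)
Your proof is correct and follows essentially the same approach as the paper's: induction on the execution length, case analysis on the last transition, with the commutative cases preserving $\sizefree\cdot$, the substitution case decreasing it by exactly $1$, and the $\beta$-case bounded via the subterm invariant. The only (welcome) addition is that you make explicit the extension $\sizefree{\pair\code\stack} \defeq \sizefree\code + \sizefree\stack$, which the paper uses implicitly.
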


\newcounter{coro:exp-bilinear} 
\addtocounter{coro:exp-bilinear}{\value{corollary}}
\begin{corollary}[Bilinear Number of Substitution Transitions]
\label{coro:exp-bilinear}
  Let%
\NoteProof{coroappendix:exp-bilinear}
  $\exec: \compil{\tm_0} \tomach^* \state$ be an \eglamour execution.
  Then, $\sizee \exec \leq  (1 + \sizem\exec) \cdot \size{\tm_0}$.
\end{corollary}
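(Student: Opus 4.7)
The plan is to derive the corollary directly from the Free Occurrences Invariant (\reflemma{free-occ-inv}) by a short arithmetic rearrangement. The key observation is that the measure $\sizefree{\cdot}$ counts occurrences of variables, so it is non-negative on every state; this is what allows the invariant, which mixes a decreasing quantity (free occurrences) with increasing counters (transitions performed), to be flipped into a bound on substitution transitions.

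First, I would note that $\sizefree{\state} \geq 0$ by definition (it is a sum of natural numbers). Combining this with the invariant
\[
  \sizefree{\state} \leq \sizefree{\tm_0} + \size{\tm_0} \cdot \sizem{\exec} - \sizee{\exec}
\]
and rearranging yields
\[
  \sizee{\exec} \leq \sizefree{\tm_0} + \size{\tm_0} \cdot \sizem{\exec}.
\]
Second, every free occurrence of a variable in $\tm_0$ contributes at least one symbol to $\tm_0$, so $\sizefree{\tm_0} \leq \size{\tm_0}$. Substituting this bound gives
\[
  \sizee{\exec} \leq \size{\tm_0} + \size{\tm_0} \cdot \sizem{\exec} = (1 + \sizem{\exec}) \cdot \size{\tm_0},
\]
which is the claim.

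There is no real obstacle: all the work is in the preceding lemma, whose invariant is designed precisely so that the additive contribution of $-\sizee{\exec}$ on the right-hand side can be transposed to a bound on $\sizee{\exec}$ once the non-negativity of $\sizefree{\state}$ is used. The intuition matches the informal discussion around the family $\tmtwo_n$ in equation \eqref{eq:quad-dependency}: each $\beta$-transition can \emph{unlock} at most $\size{\tm_0}$ free-variable occurrences (because only subterms of $\tm_0$ are ever duplicated, by the Subterm Invariant \reflemma{subterm-invariant}), and each substitution transition consumes exactly one such occurrence, giving the bilinear bound.
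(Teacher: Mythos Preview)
Your proof is correct and matches the paper's own argument essentially verbatim: rearrange the Free Occurrences Invariant using $\sizefree{\state}\geq 0$, then bound $\sizefree{\tm_0}$ by $\size{\tm_0}$.
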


\paragraph{Number of Transitions 2: Commutative $vs$ Substitution Transitions.} 
The bound on the number $\sizecom\exec$ of commutative transitions is found by means of a (different) measure on states. 
The bound is linear in $\size{\tm_0}$ and in $\sizee\exec$, which means---by applying the result just obtained in
\refcor{exp-bilinear}---
\emph{quadratic} in $\size{\tm_0}$ and linear in $\sizem{\exec}$.

 The \emph{commutative size} of a state is defined as $\sizecom{ \eglamst \dump \code \stack \genv } \defeq \size\code + \Sigma_{
 \dentry{\codetwo}{\stacktwo} \in\dump} \size\codetwo$, where $\size\code$ is the usual size of codes.

\newcounter{l:comm-bound}
\addtocounter{l:comm-bound}{\value{lemma}}
\begin{lemma}[Number of Commutative Transitions]
\label{l:comm-bound} 
  Let 
\NoteProof{lappendix:comm-bound}
  $\exec \colon \compil{\tm_0} \tomach^* \state$ be an \eglamour execution. Then 
  $\sizecom\exec 
   \leq  \sizecom\exec + \sizecom\state 
  \leq (1+\sizee\exec)\cdot \size{\tm_0} \in O((1+\sizem\exec) \cdot \size{\tm_0}^2)$.
\end{lemma}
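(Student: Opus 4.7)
The plan is to treat $\sizecom\cdot$ as a potential function and track its evolution along $\exec$ by case analysis on the transition kind. The initial state $\compil{\tm_0}$ has an empty dump and code $\tm_0$, so $\sizecom{\compil{\tm_0}} = \size{\tm_0}$. The stated bound will then follow by a telescoping argument once each transition's effect on $\sizecom$ is quantified, together with $\sizecom\state \geq 0$.

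First, I would compute the variation of $\sizecom$ for each rule in \reffig{eglamour}. The commutative transition $\tomachcone$ replaces the code $\code\codetwo$ with $\codetwo$ and pushes a new dump entry carrying $\code$, so $\sizecom$ changes by $\size\codetwo + \size\code - \size{\code\codetwo} = -1$. The commutative transition $\tomachctwo$ pops a dump entry carrying $\code$ and sends $\la\var\codetwo$ to the stack (which does not contribute to $\sizecom$), replacing the current code by $\code$, so $\sizecom$ decreases by $\size{\la\var\codetwo} \geq 1$. The commutative transition $\tomachcthree$ similarly pops a dump entry and replaces $\var$ by the popped $\code$, for a decrease of $1$. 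The $\beta$-transition $\tomachsm$ turns $\la\var\code$ into $\code$ and moves the stack item to the environment, for a decrease of exactly $1$. Finally, the substitution transition $\tomachse$ replaces $\var$ by a renamed copy of an abstraction $\la\vartwo\codetwo$ taken from the environment, so $\sizecom$ grows by $\size{\la\vartwo\codetwo} - 1$; by the subterm invariant (\reflemma{subterm-invariant}), $\la\vartwo\codetwo$ is a subterm of $\tm_0$, so the increase is at most $\size{\tm_0} - 1$.

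Telescoping over $\exec$ and using $\sizecom{\compil{\tm_0}} = \size{\tm_0}$, this yields $\sizecom\state \leq \size{\tm_0} - \sizecom\exec - \sizem\exec + \sizee\exec \cdot (\size{\tm_0} - 1)$, which rearranges into $\sizecom\exec + \sizem\exec + \sizecom\state \leq (1 + \sizee\exec) \cdot \size{\tm_0}$. Dropping the non-negative term $\sizem\exec$ from the left-hand side gives the central inequality $\sizecom\exec + \sizecom\state \leq (1 + \sizee\exec) \cdot \size{\tm_0}$ of the statement. For the asymptotic part, I would plug in the bilinear bound $\sizee\exec \leq (1+\sizem\exec)\cdot\size{\tm_0}$ of \refcor{exp-bilinear}, obtaining $(1+\sizee\exec)\cdot\size{\tm_0} \in O((1+\sizem\exec)\cdot\size{\tm_0}^2)$.

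The main obstacle is conceptual rather than computational: one must recognize that $\sizecom\cdot$ tracks only the current code and the dump, ignoring stack and environment, which is precisely why commutative rearrangements (that merely shuffle code between these components) pay for themselves. Once this is clear, each case is a small identity on $\size\cdot$, and the crucial use of \reflemma{subterm-invariant} confines all growth to a single $+\size{\tm_0}$ contribution per substitution transition.
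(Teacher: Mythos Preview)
Your proposal is correct and follows essentially the same approach as the paper: both treat $\sizecom\cdot$ as a potential, compute its variation under each transition kind (commutatives decrease it by at least $1$, $\beta$ decreases it, substitution increases it by at most $\size{\tm_0}$ via the subterm invariant), and then conclude by telescoping/induction together with \refcor{exp-bilinear} for the asymptotic part. The only cosmetic difference is that you track the $\sizem\exec$ contribution explicitly and drop it at the end, whereas the paper's induction simply notes that the lhs decreases on a $\beta$-transition; this yields a marginally sharper intermediate inequality but is otherwise the same argument.
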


\paragraph{Cost of Single Transitions.} We need to make some hypotheses on how the \eglamour is going to be itself implemented on RAM:
  
  \begin{varenumerate}  
    \item \emph{Variable (Occurrences) and Environment Entries}: a variable is a memory location, a variable occurrence is a reference to it, and an environment entry $\esub\var\stackitem$ is the fact that the location associated to $\var$ contains $\stackitem$.    
    \item \emph{Random Access to Global Environments}: the environment $\genv$ can be accessed in $O(1)$ (in $\tomachse$) by just following the reference given by the variable occurrence under evaluation, with no need to access $\genv$ sequentially, thus ignoring its list structure (used only to ease the definition of the decoding).
  \end{varenumerate}
  
With these hypotheses it is clear that $\beta$ and overhead transitions can be implemented in $O(1)$. The substitution transition $\tomachse$ needs to copy a code from the environment (the renaming $\renamenop\code$) and can be implemented in $O(\size{\tm_0})$, as the subterm to copy is a subterm of $\tm_0$ by the subterm invariant (\reflemma{subterm-invariant}) and the environment can be accessed in $O(1)$.

\paragraph{Summing Up.} By putting together the bounds on the number of transitions with the cost of single transitions we obtain the overhead of the machine.

\newcounter{thm:eglamour-overhead-bound} 
\addtocounter{thm:eglamour-overhead-bound}{\value{theorem}}
\begin{theorem}[\eglamour Overhead Bound]
\label{thm:eglamour-overhead-bound}
  Let%
\NoteProof{thmappendix:eglamour-overhead-bound}
  $\exec \colon \compil{\tm_0} \tomach^* \state$ be an \eglamour execution. Then $\exec$ is implementable on RAM in $O((1+\sizem{\exec})\cdot \size{\tm_0}^2)$, \ie linear in the number of $\beta$-transitions (aka the length of the derivation $\deriv \colon \tm_0 \torf^* \decode{\state}$ implemented by $\exec$) and quadratic in the size of the initial term $\tm_0$.
\end{theorem}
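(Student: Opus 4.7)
The plan is to simply assemble the results already established: the total cost of implementing $\exec$ on RAM is bounded by the number of transitions of each kind multiplied by the cost of implementing a single transition of that kind, summed over the three kinds (\emph{$\beta$}, substitution, commutative). All the required ingredients have been proved separately, so the proof is essentially a bookkeeping argument.

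First I would recall the bounds on the number of transitions. By \refcor{exp-bilinear} we have $\sizee\exec \leq (1+\sizem\exec) \cdot \size{\tm_0}$, and by \reflemma{comm-bound} we have $\sizecom\exec \in O((1+\sizem\exec)\cdot\size{\tm_0}^2)$. Of course, $\sizem\exec$ bounds itself. Next I would recall the cost of single transitions under the RAM hypotheses stated just before the theorem (variables as memory locations, constant-time random access to the global environment): the $\beta$-transition $\tomachsm$ and the commutative transitions $\tomachcone$, $\tomachctwo$, $\tomachcthree$ cost $O(1)$, since they only rearrange pointers to codes, stacks, dumps, and environment entries; while the substitution transition $\tomachse$ costs $O(\size{\tm_0})$, because the code $\rename{\la\vartwo\codetwo}$ that gets copied is a subterm of $\tm_0$ by the subterm invariant \reflemma{subterm-invariant}, and both the lookup in $\genv$ and the $\alpha$-renaming of that subterm can be done in time linear in its size.

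Putting these two parts together, the cost of $\exec$ is bounded by
\[
 \underbrace{\sizem\exec \cdot O(1)}_{\beta} \ + \ \underbrace{\sizee\exec \cdot O(\size{\tm_0})}_{\text{substitution}} \ + \ \underbrace{\sizecom\exec \cdot O(1)}_{\text{commutative}}.
\]
Plugging in the three transition-count bounds, the substitution contribution is $O((1+\sizem\exec)\cdot\size{\tm_0}\cdot\size{\tm_0}) = O((1+\sizem\exec)\cdot\size{\tm_0}^2)$, the commutative contribution is already $O((1+\sizem\exec)\cdot\size{\tm_0}^2)$, and the $\beta$ contribution is dominated. Summing up yields the claimed $O((1+\sizem\exec)\cdot\size{\tm_0}^2)$ bound.

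There is no real obstacle here, since all the quantitative work has been done in \refcor{exp-bilinear} and \reflemma{comm-bound}, and the only delicate point in bounding the cost of single transitions (namely, that $\tomachse$ only copies subterms of $\tm_0$ and so costs $O(\size{\tm_0})$) is exactly what the subterm invariant \reflemma{subterm-invariant} guarantees. The only mild care is to make explicit the RAM conventions on variables, references, and global environments, so that commutative and $\beta$ transitions are manifestly constant-time and the environment lookup inside $\tomachse$ does not spoil the $O(\size{\tm_0})$ bound for that transition.
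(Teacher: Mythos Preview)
Your proposal is correct and follows essentially the same approach as the paper: sum the cost over the three kinds of transitions, using \refcor{exp-bilinear} for the number of substitution transitions, \reflemma{comm-bound} for the commutatives, and the subterm invariant for the $O(\size{\tm_0})$ cost of a single $\tomachse$. The paper's argument is structured identically, with the same ingredients and the same final assembly.
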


\section{\fglamour}
\label{sect:fglamour}
In this section we optimize the \eglamour, obtaining a machine, the \fglamour, whose dependency from the size of the initial term is linear, instead of quadratic, providing a bilinear---thus optimal---overhead (see \refthm{fglamour-overhead-bound} below and compare it with \refthm{eglamour-overhead-bound} on the \eglamour). 
We invite the reader to go back to equation \refeq{quad-dependency} at page \pageref{eq:quad-dependency}, where the quadratic dependency was explained. 
Note that in that example the substitutions of $\tmfour_n$ do not create $\betaf$-redexes, and so they are useless. 
The \fglamour avoids these useless substitutions and it implements the example with no substitutions at all.

\paragraph{Optimization: Abstractions On-Demand.} The difference between the \eglamour and the machines in \cite{fireballs} is that, whenever the former encounters a variable occurrence $\var$ bound to an abstraction $\la\vartwo\code$ in the environment, it replaces $\var$ with $\la\vartwo\code$, while the latter are more parsimonious. 
They implement an optimization that we call \emph{substituting abstractions on-demand}: $\var$ is replaced by $\la\vartwo\code$ only if this is useful to obtain a $\beta$-redex, that is, only if the argument stack is non-empty. 
The \fglamour, defined in \reffig{fglamour}, upgrades the \eglamour with \emph{substitutions of abstractions on-demand}---note the new side-condition for $\tomachcthree$ and the non-empty stack in $\tomachse$.

\begin{figure}[!t]
  \centering
  \scalebox{0.9}{
  $
  	{\setlength{\arraycolsep}{0.35em}
  	\begin{array}{c|c|c|c|c|c|c|c|ccc}
		\mbox{Dump} & \mbox{Code} & \mbox{Stack} & \mbox{Global Env} 
		&&
		\mbox{Dump} & \mbox{Code} & \mbox{Stack} & \mbox{Global Env}\\
\hhline{=|=|=|=|=|=|=|=|=}
		\dump & \code\codetwo & \stack & \genv
	  	&\tomachcone&
	  	\!\dump\!\cons\!\dentry\code\stack\! & \codetwo & \stempty &\genv
	  	\\
	  	
		\!\dump\!\cons\!\dentry\code\stack\! & \la\var\codetwo & \stempty & \genv
		& \tomachctwo &
		\dump & \code & \!\pair{\la\var\codetwo}{\stempty} \cons\stack\! & \genv
		\\

		\!\dump\!\cons\!\dentry\code\stack\! & \var & \stacktwo & \genv
		& \tomachcthree &
		\dump & \code & \pair{\var}{\stacktwo}\!\cons\stack & \genv \\
		\multicolumn{9}{r}{\mbox{if $\genv(\var)=\undef$ or $\genv(\var)=\pair{\vartwo}{\stackthree}$ or  ($\genv(\var)=\pair{\la\vartwo\codetwo}\stempty$ and $\stacktwo = \stempty$)}}
		\\
		
		\dump & \la\var\code & \pair{\vartwo}{\stempty}\!\cons\!\stack & \genv
	  	&\tomachmone &
		\dump & \code\isub\var\vartwo & \stack & \genv
	  	\\
	  	
	  	\dump & \la\var\code & \stackitem\cons\stack & \genv
	  	&\tomachmtwo &
		\dump & \code & \stack & \esub\var\stackitem\genv \\
		\multicolumn{9}{r}{\mbox{  if $\stackitem\neq \pair{\vartwo}{\stempty}$}}\\

	  	\dump & \var & \stackitem\cons\stack & \genv_1\esub\var{\pair{\la\vartwo\codetwo}\stempty}\genv_2
		& \tomachse &
		\dump & \!\rename{\la\vartwo\codetwo}\! & \stackitem\cons\stack & \genv_1\esub\var{\pair{\la\vartwo\codetwo}\stempty}\genv_2\\
	\end{array}}
$
}
\caption{\fglamour (data-structures, decoding, and $\rename{\la\vartwo\codetwo}$ defined as in \reffig{eglamour}).}
\label{fig:fglamour}
  \end{figure}

\paragraph{Abstractions On-Demand and the Substitution of Variables.}The new optimization however has a consequence. To explain it, let us recall the role of another optimization, \emph{no substitution of variables}. In the \eglamour, abstractions are at depth 1 in the environment: there cannot be chains of renamings, \ie of substitutions of variables for variable, ending in abstractions (so, there cannot be chains like $\esub\var {\pair\vartwo\stempty}\esub\vartwo {\pair\varthree\stempty} \esub\varthree {\pair{\la{\varthree'}\code}\stempty}$). This property implies that the overhead is linear in $\sizebeta\exec$ and it is induced by the fact that variables cannot be substituted. 
If variables can be substituted then the overhead becomes quadratic in $\sizebeta\exec$---this is what happens in the \glamour machine in \cite{fireballs}. The relationship between \emph{substituting variables} and a linear/quadratic overhead is studied in-depth in \cite{DBLP:conf/wollic/AccattoliC14}.

Now, because the \fglamour substitutes abstractions on-demand, variable occurrences that are not applied are not substituted by abstractions. The question becomes what to do when the code is an abstraction $\la\var\code$ and the top of the stack argument $\stackitem$ is a simple variable occurrence $\stackitem = \pair\vartwo\stempty$ (potentially bound to an abstraction in the environment $\genv$) because if one admits that $\esub\var{\pair\vartwo\stempty}$ is added to $\genv$ then the depth of abstractions in the environment may be arbitrary and so the dependency on $\sizebeta\exec$ may be quadratic, as in the \glamour. 
There are two possible solutions to this issue. The complex one, given by the \uglamour in \cite{fireballs}, is to add labels and a further unchaining optimization. The simple one is to split the $\beta$-transition in two, handling this situation with a new rule that renames $\var$ as $\vartwo$ in the code $\code$ without touching the environment---this exactly what the \fglamour does with $\tomachmone$ and $\tomachmtwo$. The consequence is that abstractions stay at depth 1 in $\genv$, and so the overhead is indeed bilinear.

The simple solution is taken from Sands, Gustavsson, and Moran's  \cite{DBLP:conf/birthday/SandsGM02}, where they use it on a call-by-name machine. Actually, it repeatedly appears in the literature on abstract machines often with reference to space consumption and \emph{space leaks}, 
for instance in Wand's \cite{DBLP:journals/lisp/Wand07}, Friedman et al.'s \cite{DBLP:journals/lisp/FriedmanGSW07}, and Sestoft's \cite{DBLP:journals/jfp/Sestoft97}.

\paragraph{\fglamour.} The machine is in \reffig{fglamour}. Its data-structures, compiling and decoding are exactly as for the \eglamour. 

\begin{example}
\label{ex:fast}
  Let us now show how the derivation  $\tm \defeq (\la{\varthree}{\varthree(\vartwo\varthree)})\la{\var}{\var} \torf^2 \vartwo \, \la{\var}{\var}$ of \refex{torf} is implemented by the \fglamour.
  The execution is similar to that of the \eglamour in \refex{easy}, since they implement the same derivation and hence have the same initial state. 
  In particular, the first five transitions in the \fglamour (omitted here) are the same as in the \eglamour (see \refex{easy} and replace $\tomachsm$ with $\tomachmtwo$). 
  Then, the \fglamour executes:
  \begin{equation*}
  \scalebox{0.87}{
  $\begin{array}{l|r|c|rc}
    \text{Dump} & \text{Code} & \text{Stack} & \text{Global Environment} \\
    \cline{1-4}  
    \glamsttab {\dentry{\varthree}{\stempty} \cons \dentry{\vartwo}{\stempty}} {\varthree} {\stempty} {\esub{\varthree}{\pair{\la{\var}{\var}}{\stempty}}} &\ \tomachcthree \\
    \glamsttab {\dentry{\varthree}{\stempty}} {\vartwo} {\pair{\varthree}\stempty} {\esub{\varthree}{\pair{\la{\var}{\var}}{\stempty}}} &\ \tomachcthree \\
    \glamsttab {\stempty} {\varthree} {\pair{\vartwo}{(\pair{\varthree}{\stempty})}} {\esub{\varthree}{\pair{\la{\var}{\var}}{\stempty}}} &\ \tomachse \\
    \glamsttab {\stempty} {\la{\var''\!}{\var''}} {\pair{\vartwo}{(\pair{\varthree}{\stempty})}} {\esub{\varthree}{\pair{\la{\var}{\var}}{\stempty}}} &\ \tomachmtwo \\
    \glamsttab {\stempty} {\var''} {\stempty} {\esub{\var''}{\pair{\vartwo}{(\pair{\varthree}{\stempty})}} \cons \esub{\varthree}{\pair{\la{\var}{\var}}{\stempty}}}
  \end{array}$
  }
  \end{equation*}
  The \fglamour executes only one substitution transition (
  the \eglamour takes two) since the replacement of $\varthree$  with $\la{\var}{\var}$ from the environment is \emph{on-demand} (\ie useful to obtain a $\beta$-redex) only for the first \mbox{occurrence of $\varthree$ in $\varthree(\vartwo\varthree)$.}
\end{example}

The \fglamour satisfies the same invariants (the qualitative ones---the \emph{fireball item} is slightly different---as well as the subterm one, see \withoutproofs{\cite{AccattoliGuerrieri17}}\withproofs{Appendix~\ref{subsect:fglamour-proofs}}) and also forms an implementation system with respect to $\torf$ and $\decode\cdot$
. Therefore, 

%
%
 \newcounter{l:fglamour-trans-projection} 
 \addtocounter{l:fglamour-trans-projection}{\value{lemma}}
%
%
 \newcounter{l:fglamour-progress} 
 \addtocounter{l:fglamour-progress}{\value{lemma}}
%
%
 \newcounter{thm:weak-bis-fast} 
 \addtocounter{thm:weak-bis-fast}{\value{theorem}}
%
%
\newcounter{thm:fglamour-implementation}
\addtocounter{thm:fglamour-implementation}{\value{theorem}}
\begin{theorem}[\fglamour Implementation]
\label{thm:fglamour-implementation}
  The 
\NoteProof{thmappendix:fglamour-implementation}
  \fglamour implements right-to-left evaluation $\torf$ in $\firecalc$ (via the decoding $\decode\cdot$).
\end{theorem}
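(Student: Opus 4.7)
The plan is to mimic the proof of \refthm{eglamour-implementation} and invoke the sufficient condition given by \refthm{abs-impl}, showing that the triple $(\mbox{\fglamour},\torf,\decode\cdot)$ forms an implementation system in the sense of \refdef{implementation}. Since the paper already asserts that the \fglamour satisfies analogous qualitative invariants (\emph{Name}, \emph{Fireball Item}, \emph{Contextual Decoding}) as the \eglamour, the first step will be to establish these on reachable \fglamour states by a routine induction on the length of executions, exactly as for \reflemma{eglamour-invariants}; the only new case worth checking is $\tomachcthree$ with the extra condition $\genv(\var)=\pair{\la\vartwo\codetwo}\stempty$ and $\stacktwo=\stempty$, where the new stack item $\pair\var\stempty$ keeps the \emph{Fireball Item} invariant since it decodes to $\var$, an inert term.

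Next I would verify each of the five conditions of \refdef{implementation}. For \emph{$\beta$-Projection}, both $\tomachmone$ and $\tomachmtwo$ must project to a $\torf$-step. In both cases the code has the form $\la\var\code$ and the stack is non-empty, so $\decode\state = \stctx\state\ctxholep{(\la\var{\relunf\code\genv})\decode\stackitem}$, with $\stctx\state$ a right context (by \emph{Contextual Decoding}) and $\decode\stackitem$ a fireball (by \emph{Fireball Item}), hence a $\torf$-redex. For $\tomachmone$, where $\stackitem=\pair\vartwo\stempty$, the right-hand side decodes to $\stctx\state\ctxholep{\relunf{\code\isub\var\vartwo}\genv}$, matching the $\betain$-contractum; for $\tomachmtwo$, adding the entry $\esub\var\stackitem$ to $\genv$ performs precisely the meta-level substitution $\isub\var{\decode\stackitem}$ on $\relunf\code\genv$, again matching the $\betaf$-contractum. \emph{Overhead Transparency} for $\tomachc$ is immediate by inspection of the decoding (data-structures are merely rearranged), and for $\tomachse$ it follows from \emph{Name}, since renaming $\la\vartwo\codetwo$ to $\rename{\la\vartwo\codetwo}$ preserves the unfolding.

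For \emph{Overhead Transitions Terminate}, I would use a lexicographic measure on states: a suitable candidate is the pair $(\sizefree\state,\, \sizecom\state)$ where $\sizefree{}$ counts (as in \reflemma{free-occ-inv}) the free variable occurrences not under abstraction. The transition $\tomachse$ strictly decreases $\sizefree\state$ (it replaces a variable occurrence by an abstraction whose bound occurrences are guarded), while the commutative transitions preserve $\sizefree\state$ and strictly decrease the commutative size $\sizecom\state$ (analogously to the proof in \reflemma{comm-bound}). \emph{Determinism} of the \fglamour follows by a case analysis on the state: the side-conditions on $\tomachcthree$ (the three subcases on $\genv(\var)$ and $\stacktwo$) and on $\tomachmtwo$ ($\stackitem \neq \pair\vartwo\stempty$) make the applicable transition unique, and determinism of $\torf$ is \reflemma{prop-of-torf}.\ref{p:prop-of-torf-determ}. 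Finally, \emph{Progress}: a reachable state to which no transition applies must have empty dump (else $\tomachctwo$ or $\tomachcthree$ would fire), and its code and stack decode, via \emph{Fireball Item} and \emph{Contextual Decoding}, to a fireball in a right context, \ie a $\torf$-normal term.

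The main subtlety is really the $\beta$-projection for $\tomachmone$: one has to check that the on-the-fly renaming $\code\isub\var\vartwo$ commutes properly with the global unfolding $\relunf\cdot\genv$, which is where the \emph{Name} invariant is used crucially (the bound name $\var$ is fresh with respect to $\genv$ and the rest of the state). A secondary subtle point is the third subcase of $\tomachcthree$: one must check that when $\genv(\var)=\pair{\la\vartwo\codetwo}\stempty$ and $\stacktwo=\stempty$, the decoded state is already $\torf$-normal at that position, so that not firing the substitution is consistent with overhead transparency. Once these points are handled, \refthm{abs-impl} delivers the statement.
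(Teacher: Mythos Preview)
Your proposal is essentially correct and follows the paper's approach: establish the analogous qualitative invariants, verify the five conditions of \refdef{implementation}, and invoke \refthm{abs-impl}. Two clarifications are worth making.

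First, the \emph{Fireball Item} invariant cannot be carried over verbatim from the \eglamour. In the new subcase of $\tomachcthree$ (where $\genv(\var)=\pair{\la\vartwo\codetwo}\stempty$ and $\stacktwo=\stempty$) the item $\pair\var\stempty$ has $\relunf{\decode{\pair\var\stempty}}\genv = \relunf\var\genv$ equal to an \emph{abstraction}, not an inert term, so the \eglamour formulation fails. The paper restates the invariant accordingly: the unfolding is inert only when $\genv(\var)=\bot$ or $\genv(\var)=\pair\vartwo\stackthree$, and an abstraction otherwise. This weaker conclusion (the unfolding is a fireball, of either kind) is all that is needed downstream for $\beta$-projection, contextual decoding, and progress. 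Relatedly, your claim that $\tomachmone$ ``matches the $\betain$-contractum'' is not quite right: at the decoded level the argument is $\relunf\vartwo\genv$, which may be an abstraction---but it is always a fireball, so the $\torf$ step goes through.

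Second, your ``secondary subtle point'' conflates two conditions. Overhead transparency of $\tomachcthree$ is a purely syntactic rearrangement of the state components; no normality check is involved. The concern you voice---that not firing the substitution does not lose a redex---belongs to \emph{Progress}, and is handled in the analysis of final states (the paper's progress lemma has an extra case, ``top-level free variable with empty stack'', where $\genv(\var)$ may well bind an abstraction). Your termination argument via the lexicographic pair $(\sizefree\state,\sizecom\state)$ is a valid alternative to the paper's, which instead observes that every $\tomachse$ is immediately followed by a $\beta$-transition, so a pure-overhead sequence contains at most one $\tomachse$ and is then bounded by $\sizecom\cdot$.
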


\paragraph{Complexity Analysis.} What changes is the complexity analysis, that, surprisingly, is simpler. First, we focus on \emph{the number} of overhead transitions. The \emph{substitution $vs$ $\beta$ transitions} part is simply trivial. Note that a substitution transition $\tomachse$ is always  immediately followed by a $\beta$-transition, because substitutions are done only \emph{on-demand}---therefore, $\sizee \exec \leq  \sizem\exec + 1$. 
It is easy to remove the +1: executions must have a $\tomachmtwo$ transition before any substitution one, otherwise the environment is empty and no substitutions are possible---thus $\sizee \exec \leq  \sizem\exec$. 

For the \emph{commutative $vs$ substitution transitions} the exact same measure and the same reasoning of the \eglamour provide the same bound, namely $\sizecom\exec \leq (1+\sizee\exec)\cdot \size{\tm_0}$. What improves is the dependency of the commutatives from $\beta$-transitions (obtained by substituting the bound for substitution transitions), that is now linear because so is that of substitutions---so, $\sizecom\exec \leq (1+\sizem\exec)\cdot \size{\tm_0}$.

\newcounter{l:exp-linear}
\addtocounter{l:exp-linear}{\value{lemma}}
\begin{lemma}[Number of Overhead Transitions]
\label{l:exp-bilinear}
  Let%
\NoteProof{lappendix:exp-bilinear}
	$\exec \colon \compil{\tm_0} \tomach^* \state$ be a \fglamour execution. Then,
  \begin{varenumerate}
  \item \emph{Substitution $vs$ $\beta$ Transitions}: $\sizee \exec \leq  \sizem\exec$.
  \item \emph{Commutative $vs$ Substitution Transitions}: $\sizecom\exec \leq (1+\sizee{\exec})\cdot \size{\tm_0} \leq (1+\sizem{\exec})\cdot \size{\tm_0}$.
  \end{varenumerate}
\end{lemma}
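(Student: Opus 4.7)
The key structural observation about the \fglamour is that a substitution transition $\tomachse$ can fire only when the stack is nonempty (the transition explicitly matches $\stackitem\cons\stack$), and the transition leaves this nonempty stack untouched while turning the code into a renamed abstraction $\rename{\la\vartwo\codetwo}$. Inspecting the table of transitions, a state of the form $(\dump,\la\vartwo\codetwo,\stackitem\cons\stack,\genv)$ is a redex only for $\tomachmone$ or $\tomachmtwo$, both of which are $\beta$-transitions. Hence every $\tomachse$ in $\exec$ is immediately followed in $\exec$ by a $\beta$-transition, and this assignment is injective, so $\sizee\exec\le\sizem\exec$ (with no stray $+1$, since $\exec$ cannot end with a $\tomachse$: that would leave a reachable state where a further transition is forced).

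\textbf{Plan for Part 2, setting up a potential.} I would reuse verbatim the commutative measure from \refsect{eg-compl-anal},
\[
\sizecom{(\dump,\code,\stack,\genv)}\;\defeq\;\size{\code}+\sum_{\dentry{\codetwo}{\stacktwo}\in\dump}\size{\codetwo},
\]
and check its variation along each of the six transitions of the \fglamour. The three commutative transitions decrease $\sizecom$ by exactly $1$ (for $\tomachcone$ the application node is consumed; for $\tomachctwo$ and $\tomachcthree$ a code is pushed into the stack and thus leaves the measure). The two $\beta$-transitions also do not increase $\sizecom$: for $\tomachmone$ one has $\size{\code\isub\var\vartwo}=\size{\code}<\size{\la\var\code}$ since the substituted object is a single variable occurrence, and for $\tomachmtwo$ the outer abstraction simply disappears into the environment. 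Finally, $\tomachse$ increases $\sizecom$ by $\size{\la\vartwo\codetwo}-1$, and by the subterm invariant (which holds for the \fglamour for the same reason it holds for the \eglamour) this is at most $\size{\tm_0}$.

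\textbf{Conclusion.} Since $\sizecom{\compil{\tm_0}}=\size{\tm_0}$ and $\sizecom{\state}\ge 0$, a standard amortized bookkeeping gives
\[
\sizecom\exec\;\le\;\sizecom{\compil{\tm_0}}-\sizecom{\state}+\sizee\exec\cdot\size{\tm_0}\;\le\;(1+\sizee\exec)\cdot\size{\tm_0},
\]
and composing with Part~1 yields the second inequality. I expect the only subtlety to be double-checking the two $\beta$-transitions' effect on $\sizecom$ (in particular that $\tomachmone$ does not inflate the code because the substituted object is a variable, which is precisely the reason the \fglamour splits the $\beta$-rule in two); everything else is a line-by-line calculation on the table in \reffig{fglamour}, plus a pointer to the subterm invariant to control the single source of growth, namely $\tomachse$.
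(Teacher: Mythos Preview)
Your approach is essentially the paper's, and Part~2 is correct and matches the paper's argument (the same commutative measure, the same per-transition analysis, the same appeal to the subterm invariant for $\tomachse$; a tiny quibble is that $\tomachctwo$ decreases the measure by $\size{\la\var\codetwo}\geq 1$, not exactly~$1$, but this only helps).

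There is, however, a small but genuine gap in Part~1. Your removal of the ``$+1$'' relies on the claim that $\exec$ cannot end with a $\tomachse$ transition because ``a further transition is forced''. That reasoning would be valid for \emph{maximal} executions, but in this paper an execution is any finite sequence $\compil{\tm_0}\tomachine^*\state$; there is no requirement that $\state$ be final. So $\exec$ \emph{can} end with $\tomachse$, and from your argument alone one only gets $\sizee\exec\leq\sizem\exec+1$. The paper removes the $+1$ by a different observation: since the environment is empty in the initial state and only $\tomachmtwo$ extends it, the \emph{first} $\tomachse$ in $\exec$ must be preceded by some $\tomachmtwo$; this $\beta$-transition is not the immediate successor of any $\tomachse$ (it occurs before all of them), hence it is an ``extra'' $\beta$ not counted by your injective assignment, and one recovers $\sizee\exec\leq\sizem\exec$.
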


\paragraph{Cost of Single Transitions and Global Overhead.} For the cost of single transitions, note that $\tomachc$ and $\tomachmtwo$ have (evidently) cost $O(1)$ while $\tomachse$ and $\tomachmone$ have cost $O(\size{\tm_0})$ by the subterm invariant. 
Then we can conclude with

\newcounter{thm:fglamour-overhead-bound} 
\addtocounter{thm:fglamour-overhead-bound}{\value{theorem}}
\begin{theorem}[\fglamour Bilinear Overhead]
\label{thm:fglamour-overhead-bound}
  Let %
\NoteProof{thmappendix:fglamour-overhead-bound}
  $\exec \colon \compil{\tm_0} \tomach^* \state$ be a \fglamour execution. Then $\exec$ is implementable on RAM in $O((1+\sizem{\exec})\cdot \size{\tm_0})$, \ie linear in the number of $\beta$-transitions (aka the length of the derivation $\deriv \colon \tm_0 \torf^* \decode{\state}$ implemented by $\exec$) and the size of the initial term.
\end{theorem}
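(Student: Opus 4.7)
The plan is to assemble the bound by combining two ingredients already in hand: the bounds on the number of transitions of each kind (given by the preceding Lemma on Number of Overhead Transitions) and the cost of implementing each individual transition on a RAM (discussed in the paragraph just before the statement). Since the total cost of the execution is the sum, over the three classes of transitions ($\beta$, substitution, commutative), of the number of transitions of that class times the cost of one such transition, the result will follow by simple arithmetic.

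First I would fix the implementation hypotheses inherited from the \eglamour analysis: variables as memory locations, environment entries accessed in $O(1)$ by following references, and codes represented so that the subterm invariant can be exploited to copy/rename a subterm of $\tm_0$ in $O(\size{\tm_0})$. Under these hypotheses, I would then bound the cost of a single transition of each kind: commutative transitions $\tomachc$ and the $\beta$-transition $\tomachmtwo$ rearrange data-structures in constant time, hence cost $O(1)$; the substitution transition $\tomachse$ copies an abstraction retrieved from $\genv$ via a direct reference and $\alpha$-renames it, and by the subterm invariant (which must be stated/transferred to the \fglamour) this abstraction is a subterm of $\tm_0$, so the cost is $O(\size{\tm_0})$; similarly the $\beta$-transition $\tomachmone$ performs the meta-level renaming $\code\isub\var\vartwo$ on a code $\la\var\code$ which is a subterm of $\tm_0$, hence also in $O(\size{\tm_0})$.

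Next I would invoke the Lemma on Number of Overhead Transitions to get $\sizee\exec \le \sizem\exec$ and $\sizecom\exec \le (1+\sizem\exec)\cdot\size{\tm_0}$. Multiplying numbers by per-transition costs yields: $\beta$-transitions contribute $O(\sizem\exec\cdot\size{\tm_0})$ (dominated by the cost of $\tomachmone$); substitution transitions contribute $O(\sizee\exec\cdot\size{\tm_0}) = O(\sizem\exec\cdot\size{\tm_0})$; commutative transitions contribute $O((1+\sizem\exec)\cdot\size{\tm_0})$. Summing the three bounds gives $O((1+\sizem{\exec})\cdot\size{\tm_0})$, which is the announced bilinear overhead.

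I do not expect any real obstacle: the heavy lifting was done in the Lemma on Number of Overhead Transitions and in the Subterm Invariant. The only delicate point, which I would double-check but not belabor, is that the subterm invariant indeed survives the two new $\beta$-transitions $\tomachmone$ and $\tomachmtwo$ and the refined $\tomachse$, so that the $O(\size{\tm_0})$ per-transition bound for copying/renaming is justified; this is routine because neither the on-demand substitution of abstractions nor the renaming $\code\isub\var\vartwo$ can enlarge the pool of subterms present in a reachable state beyond subterms of $\tm_0$.
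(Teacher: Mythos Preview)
Your proposal is correct and follows essentially the same approach as the paper's proof: both sum, over the three transition classes, the number of transitions (from the Lemma on Number of Overhead Transitions) times the per-transition cost (constant for $\tomachc$ and $\tomachmtwo$, $O(\size{\tm_0})$ for $\tomachmone$ and $\tomachse$ via the subterm invariant), obtaining $O((1+\sizem{\exec})\cdot\size{\tm_0})$. Your remark that the subterm invariant must be checked for the \fglamour is exactly what the paper does separately before stating the theorem.
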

%

\section{Conclusions}
\label{sect:conclusions}

\paragraph{Modular Overhead.} The overhead of implementing \ocbv is measured with respect to the size $\size{\tm_0}$ of the initial term and the number $n$ of $\beta$-steps. We showed that its complexity depends crucially on three choices about substitution.

The first is whether to substitute inert terms that are not variables. If they are substituted, as in \gregoire and Leroy's machine \cite{DBLP:conf/icfp/GregoireL02}, then the overhead is exponential in $\size{\tm_0}$ because of open size explosion (\refprop{open-size-explosion}) and the implementation is then unreasonable. 
If they are not substituted, as in the machines studied here and in \cite{fireballs}, then the overhead is polynomial.

The other two parameters are whether to substitute variables, and whether abstractions are substituted whenever or only \emph{on-demand}, and they give rise to the following table of machines and reasonable overheads:
\begin{center}
  \scalebox{0.87}{
\begin{tabular}{r||c|ccc}
& \emph{Sub of Abs Whenever} & \emph{Sub of Abs On-Demand}
\\
\hhline{=||=|=}
\emph{Sub of Variables} & Slow \glamour   & \glamour \\
& $O((1+n^2)\cdot \size{\tm_0}^2)$ & $O((1+n^2)\cdot \size{\tm_0})$
\\\hline
\emph{No Sub of Variables} & \eglamour & Fast\,/\,Unchaining \glamour \\
& $O((1+n)\cdot \size{\tm_0}^2)$ & $O((1+n)\cdot \size{\tm_0})$
\end{tabular}
}
\end{center}
The Slow \glamour has been omitted for lack of space, because it is slow and involved, as it requires the labeling mechanism of the (Unchaining) \glamour developed in \cite{fireballs}. It is somewhat surprising that the \fglamour presented here has the best overhead and it is also the easiest to analyze.

\paragraph{Abstractions On-Demand: \ocbv is simpler than \scbv. 
} We explained that \gregoire and Leroy's machine for Coq as described in  \cite{DBLP:conf/icfp/GregoireL02} is unreasonable. 
Its actual implementation, on the contrary, does not substitute non-variable inert terms, so it is reasonable for \ocbv. None of the versions, however, substitutes abstractions on-demand (nor, to our knowledge, does any other implementation), despite the fact that it is a necessary optimization in order to have a reasonable implementation of \scbv, as we now show. Consider the following size exploding family (obtained by applying $\tmthree_n$ to the identity $I \defeq \la{\var}{\var}$), from \cite{accattoliWPTE}:
\begin{align*}
  \tmthree_1 & \defeq \la{\var}\la{\vartwo}(\vartwo \var \var)  & 
  \tmthree_{n+1} & \defeq \la{\var}(\tmthree_n (\la{\vartwo}(\vartwo \var \var))) & 
  \quad \tmfour_0 & \defeq I &
  \tmfour_{n+1} & \defeq \la{\vartwo}(\vartwo \tmfour_{n} \tmfour_{n})
\end{align*}

\newcounter{prop:abs-size-explosion}
\addtocounter{prop:abs-size-explosion}{\value{proposition}}
\begin{proposition}[Abstraction Size Explosion]
\label{prop:abs-size-explosion}
  Let 
  \NoteProof{propappendix:abs-size-explosion}
  $n \!>\! 0$. Then $\tmthree_n I \tobabs^n  \tmfour_n$. Moreover, $\size{\tmthree_n I} = O(n)$, $\size{\tmfour_n} = \Omega(2^n)$, $\tmthree_n I$ is closed, and $\tmfour_n$ is normal.
\end{proposition}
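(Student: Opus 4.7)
The plan is to prove everything by induction, factored through a stronger reduction statement: for all $n \geq 1$ and $k \geq 0$, $\tmthree_n \tmfour_k \tobabs^n \tmfour_{n+k}$. Specializing to $k = 0$ and recalling $\tmfour_0 = I$ yields the desired $\tmthree_n I \tobabs^n \tmfour_n$. Two easy auxiliary facts are needed before launching the main induction: first, by induction on $k$, each $\tmfour_k$ is an abstraction (so that the $\rtobabs$ root rule, which requires an abstraction argument, is enabled at each step); second, by induction on $n$, each $\tmthree_n$ is closed (so that the substitution performed in the inductive step below does not disturb $\tmthree_n$ itself).

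The main induction on $n$ then proceeds as follows. For the base case $n = 1$, one computes directly $\tmthree_1 \tmfour_k = (\la{\var}\la{\vartwo}(\vartwo\var\var))\tmfour_k \tobabs \la{\vartwo}(\vartwo\tmfour_k\tmfour_k) = \tmfour_{k+1}$. For the inductive step, $\tmthree_{n+1}\tmfour_k = (\la{\var}(\tmthree_n(\la{\vartwo}(\vartwo\var\var))))\tmfour_k \tobabs \tmthree_n(\la{\vartwo}(\vartwo\tmfour_k\tmfour_k)) = \tmthree_n\tmfour_{k+1}$, using closedness of $\tmthree_n$ so that the outer substitution only affects the inner body; the induction hypothesis then supplies the remaining $n$ steps to reach $\tmfour_{n+k+1}$.

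The size bounds follow from elementary recurrences: $\size{\tmthree_{n+1}} = \size{\tmthree_n} + c$ for some constant $c$, yielding $\size{\tmthree_n I} = O(n)$; and $\size{\tmfour_{n+1}} \geq 2\,\size{\tmfour_n} + 4$, yielding $\size{\tmfour_n} = \Omega(2^n)$. Closedness of $\tmthree_n I$ is immediate from the closedness of $\tmthree_n$ (proved along the way) and of $I$. Normality of $\tmfour_n$ is immediate too: every $\tmfour_n$ starts with $\lambda$, hence it is an abstraction, hence a fireball, and weak evaluation $\tof$ cannot fire inside it.

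I do not expect any real obstacle; the only subtle point is choosing the right generalization over $k$ so that the inductive step closes, since the naive statement $\tmthree_n I \tobabs^n \tmfour_n$ on its own does not compose with itself.
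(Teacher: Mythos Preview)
Your proposal is correct and follows exactly the same route as the paper: the paper also generalizes to $\tmthree_n \tmfour_m \tobabs^n \tmfour_{n+m}$ and proves it by the same one-line base case and one-line inductive step, then dismisses the remaining claims as ``immediate.'' If anything you are more careful than the paper in spelling out the two auxiliary facts (each $\tmfour_k$ is an abstraction, each $\tmthree_n$ is closed) that make the $\tobabs$ steps and the substitution in the inductive step go through.

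One very minor remark: your argument for normality of $\tmfour_n$ (``it is an abstraction, hence a fireball, and weak evaluation cannot fire inside it'') only establishes $\betaf$-normality. Since the proposition appears in the discussion of \scbv, ``normal'' is meant in the strong sense; a quick induction on $n$ showing that the body $\vartwo\,\tmfour_n\,\tmfour_n$ contains no $\beta$-redex at all gives full $\beta$-normality.
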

The evaluation of $\tmthree_n I$ produces $2^n$ non-applied copies of $I$ (in $\tmfour_n$), so a strong evaluator not substituting abstractions on-demand must have an exponential overhead. Note that evaluation is weak but the $2^n$ copies of $I$ are substituted under abstraction: this is why machines for Closed and Open \cbv can be reasonable without substituting abstractions on-demand.

\paragraph{The Danger of Iterating \ocbv Naively.} The size exploding example in \refprop{abs-size-explosion} also shows that iterating reasonable machines for \ocbv is subtle, as it may induce unreasonable machines for \scbv, if done naively. Evaluating \scbv by iterating the \eglamour (that does not substitute abstractions on-demand), indeed, induces an exponential overhead, while iterating the \fglamour provides an efficient implementation.

\paragraph{Acknowledgements.} This work has been partially funded by the ANR JCJC grant COCA HOLA (ANR-16-CE40-004-01).
%
%
%


\phantomsection
\addcontentsline{toc}{section}{References}
\bibliographystyle{splncs03}
\bibliography{biblio} 

\withproofs{
\newpage
\appendix
\chapter*{Technical Appendix}
\newcounter{appendix}
\setcounter{appendix}{\value{theorem}}

\section{Rewriting Theory: Definitions, Notations, and Basic Results}
\label{app:rewriting}

Given 
a binary relation $\to_\mathsf{r}$ on a set $I$, the reflexive-transitive (resp.~reflexive; transitive; reflexive-transitive and symmetric) closure of $\to_\mathsf{r}$ is denoted by $\to^*$ (resp.~$\to_\mathsf{r}^=$; $\to_\mathsf{r}^+$; $\simeq_\mathsf{r}$).
The transpose of $\to_\mathsf{r}$ is denoted by \!\!$\lRew{\mathsf{r}}$\!\!.
A (\emph{$\mathsf{r}$-})\emph{derivation $\deriv$ from $\tm$ to~$\tmtwo$}, denoted by $\deriv \colon \tm \to_\mathsf{r}^* \tmtwo$, is a finite sequence $(\tm_i)_{0 \leq i \leq n}$ of elements of $I$ (with $n \in \nat$) s.t.~$\tm = \tm_0$, $\tmtwo = \tm_n$ and $\tm_i \to_\mathsf{r} \tm_{i+1}$ for all $1 \leq i < n$;

The \emph{number of $\mathsf{r}$-steps of a derivation $\deriv$}, \ie its \emph{length}, is denoted by $\size{\deriv}_\mathsf{r} \defeq n$, or simply $\size\deriv$. If $\to_\mathsf{r} \, = \, \to_1 \!\cup \to_2$ with $\to_1 \!\cap \to_2 \, = \emptyset$, $\size{\deriv}_i$ is the number of $\to_i$-steps in $\deriv$, for $i = 1,2$.
We say that:
\begin{itemize}
  \item $\tm \!\in\! I$ is \emph{$\mathsf{r}$-normal} or a \emph{$\mathsf{r}$-normal form} if 
  $\tm \not\to_\mathsf{r} \tmtwo$ for all $\tmtwo \!\in\! I$;
  $\tmtwo \in I$ is a \emph{$\mathsf{r}$-normal form of} $\tm 
  $ if $\tmtwo$ is $\mathsf{r}$-normal and $\tm \to_\mathsf{r}^* \tmtwo$; 
  \item $\tm \in I$ is \emph{$\mathsf{r}$-normalizable} if there is a $\mathsf{r}$-normal $\tmtwo \in I$ s.t. $\tm \to_\mathsf{r}^* \tmtwo$;
  $\tm$ is \emph{strongly $\mathsf{r}$-normalizable} if there is no infinite sequence $(\tm_i)_{i \in \nat}$ s.t. $\tm_0 = \tm$ and $\tm_i \to_\mathsf{r} \tm_{i+1}$;
  \item a $\mathsf{r}$-derivation $\deriv \colon \tm \to_\mathsf{r}^* \tmtwo$ is (\emph{$\mathsf{r}$-})\emph{normalizing} if $\tmtwo$ is $\mathsf{r}$-normal;
  \item $\to_\mathsf{r}$ is \emph{strongly normalizing} if all $\tm \!\in\! I$ is strongly $\mathsf{r}$-normalizable;
  \item $\to_\mathsf{r}$ is \emph{strongly confluent} if, for all $\tm, \tmtwo, \tmthree \in\! I$ s.t.~$\tmthree \lRew{\mathsf{r}} \tm \to_\mathsf{r} \tmtwo$ and $\tmtwo \neq \tmthree$, there is $\tmfour \in I$ s.t. $\tmthree \to_\mathsf{r} \tmfour \lRew{\mathsf{r}} \tmtwo$;
  $\to_\mathsf{r}$ is \emph{confluent} \mbox{if $\to_\mathsf{r}^*\!$ is strongly confluent.}
\end{itemize}

Let $\to_1, \to_2 \, \subseteq I \times I$. 
Composition of relations is denoted by juxtaposition: for instance, $\tm \to_1\to_2 \tmtwo$ means that there is $\tmthree \in I$ s.t. $\tm \to_1 \tmthree \to_2 \tmtwo$; for any $n \in \nat$, $\tm \to_1^n \tmtwo$ means that there is a $\to_1$-derivation with length $n$ ($\tm = \tmtwo$ for $n = 0$).
We say that $\to_1$ and $\to_2$ \emph{strongly commute} if, for any $\tm, \tmtwo, \tmthree \in I$ s.t. $\tmtwo \lRew{1} \tm \to_2 \tmthree$, one has $\tmtwo \neq \tmthree$ and there is $\tmfour \in I$ s.t. $\tmtwo \to_2 \tmfour \lRew{1} \tmthree$.
Note that if $\to_1$ and $\to_2$ strongly commute and $\to \, = \, \to_1 \!\cup \to_2$, then for any derivation $\deriv \colon \tm \to^* \tmtwo$ the sizes $\size{\deriv}_1$ and $\size{\deriv}_2$ are uniquely determined.

The following proposition collects some basic and well-known results of rewriting theory.

\begin{proposition}\label{prop:basic-confluence}
  Let $\to_\mathsf{r}$ be a binary relation on a set $I$.
  \begin{enumerate}
    \item\label{p:basic-confluence-unique-normal} If $\to_\mathsf{r}$ is confluent then: 
    \hfill
    \begin{enumerate}
      \item every $\mathsf{r}$-normalizable term has a unique $\mathsf{r}$-normal form;
      \item\label{p:basic-confluence-equivalent} for all $\tm, \tmtwo \in I$, $\tm \simeq_\mathsf{r} \tmtwo$ iff there is $\tmthree \in I$ s.t.~$\tm \to_\mathsf{r}^* \tmthree \lRewn{\mathsf{r}} \tmtwo$.
    \end{enumerate}
    \item\label{p:basic-confluence-strong} If $\to_\mathsf{r}$ is strongly confluent then $\to_\mathsf{r}$ is confluent and, for any $\tm \in I$, one has:
    \begin{enumerate}
      \item\label{p:basic-confluence-strong-length} all normalizing $\mathsf{r}$-derivations from $\tm$ have the same length;
      \item\label{p:basic-confluence-strong-normalizable} $\tm$ is strongly $\mathsf{r}$-normalizable if and only if $\tm$ is $\mathsf{r}$-normalizable.
    \end{enumerate}
  \end{enumerate}
\end{proposition}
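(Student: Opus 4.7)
The plan is to handle each of the four numbered items by standard diagram-chasing in rewriting theory, proceeding in the order stated since later items build on earlier ones.

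For Point~\ref{p:basic-confluence-unique-normal}(a), suppose $\tm$ has two $\mathsf{r}$-normal forms $\tmtwo$ and $\tmthree$, so $\tmtwo \lRewn{\mathsf{r}} \tm \to_\mathsf{r}^* \tmthree$. By confluence there exists $\tmfour \in I$ with $\tmtwo \to_\mathsf{r}^* \tmfour \lRewn{\mathsf{r}} \tmthree$; normality of $\tmtwo$ and $\tmthree$ forces $\tmtwo = \tmfour = \tmthree$. For Point~\ref{p:basic-confluence-equivalent}, the implication $\Leftarrow$ is immediate from the definition of $\simeq_\mathsf{r}$. For $\Rightarrow$, I would unfold $\tm \simeq_\mathsf{r} \tmtwo$ as a finite zigzag $\tm = \tm_0 \leftrightarrow_\mathsf{r} \tm_1 \leftrightarrow_\mathsf{r} \cdots \leftrightarrow_\mathsf{r} \tm_n = \tmtwo$ and induct on $n$: at each step, confluence collapses a valley $\tm_i \lRewn{\mathsf{r}} \cdot \to_\mathsf{r}^* \tm_{i+1}$ that appears after swapping back-and-forth edges, reducing the total number of direction changes until only the shape $\tm \to_\mathsf{r}^* \tmthree \lRewn{\mathsf{r}} \tmtwo$ remains.

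For Point~\ref{p:basic-confluence-strong}, I first establish that strong confluence implies confluence by a tiling argument. Given $\tmtwo \lRewn{\mathsf{r}} \tm \to_\mathsf{r}^* \tmthree$, I argue by double induction on the lengths $m,n$ of the two derivations: the base cases are trivial, and the inductive step closes the innermost $1$-$1$ diagram using strong confluence (obtaining at most a single $\mathsf{r}$-step on each side, since if $\tmtwo = \tmthree$ the diagram closes for free, and otherwise strong confluence supplies the one step each) and then applies the induction hypothesis to the remaining sub-diagrams.

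For Point~\ref{p:basic-confluence-strong-length} and Point~\ref{p:basic-confluence-strong-normalizable}, I would prove both via a single sharper tiling. Fix a normalizing derivation $\deriv' \colon \tm \to_\mathsf{r}^* \tmthree$ with $\tmthree$ normal, and let $\deriv \colon \tm \to_\mathsf{r}^* \tmtwo$ be any $\mathsf{r}$-derivation. Tiling $\deriv$ against $\deriv'$ with strong confluence yields a residual derivation $\deriv'' \colon \tmtwo \to_\mathsf{r}^* \tmthree$; the crucial quantitative fact is that each $1$-$1$ tile consumes exactly one step on each side whenever the two one-steps diverge (and none on the normalizing side otherwise), so the number of $\mathsf{r}$-steps remaining on $\deriv'$ after tiling is exactly $\size{\deriv'} - \size{\deriv}$. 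Hence $\size{\deriv} \leq \size{\deriv'}$, which proves that $\tm$ is strongly normalizable (every $\mathsf{r}$-derivation from $\tm$ has length at most $\size{\deriv'}$), giving Point~\ref{p:basic-confluence-strong-normalizable}; moreover, if $\deriv$ itself is normalizing, then $\tmtwo$ is normal, so $\deriv''$ has length $0$ and $\size{\deriv} = \size{\deriv'}$, giving Point~\ref{p:basic-confluence-strong-length}.

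The main obstacle is setting up the tiling carefully so that the bookkeeping is sound: one must verify that strong confluence, with its clause $\tmtwo \neq \tmthree$ in the definition, really closes every $1$-$1$ diamond with at most one step per side (and exactly one when the peak is non-trivial), because this is what ensures that tiles preserve lengths and thus delivers the precise equality $\size{\deriv} = \size{\deriv'}$ needed for Point~\ref{p:basic-confluence-strong-length} rather than merely an inequality.
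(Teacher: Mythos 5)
Your proof is correct, but there is nothing in the paper to compare it against: the paper states this proposition as a collection of well-known rewriting facts and gives no proof (only the adjacent strong Hindley--Rosen lemma is attributed, to Barendregt), so you were genuinely on your own, and the argument you give is the standard one. Two remarks on substance. First, your tiling for ``strong confluence implies confluence'' is sound here precisely because the paper's notion closes a single-step divergence with \emph{exactly} one step on each side (or the reducts are equal), so every $1$-$1$ tile is again $1$-$1$ and the double induction has a decreasing measure; with Huet-style strong confluence ($\to_\mathsf{r}^*$ on one side) this naive tiling would not terminate, so it is worth saying explicitly that you are using the paper's diamond-like definition. Second, points 2(a)--(b) are the classical \emph{random descent} property, and the bookkeeping you flag as the main obstacle is cleaner packaged as a single-step lemma than as a global count over tiles: if $\tm \to_\mathsf{r} \tmthree$ and $\tm \to_\mathsf{r}^{m} \tmtwo$ with $\tmtwo$ normal, then $\tmthree \to_\mathsf{r}^{m-1} \tmtwo$. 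Induction on $m$: write $\tm \to_\mathsf{r} \tm_1 \to_\mathsf{r}^{m-1} \tmtwo$; if $\tmthree = \tm_1$ we are done, otherwise strong confluence gives $\tmthree \to_\mathsf{r} \tmfour$ and $\tm_1 \to_\mathsf{r} \tmfour$, and the induction hypothesis at $\tm_1$ gives $\tmfour \to_\mathsf{r}^{m-2} \tmtwo$; in the base case $m=1$ the term $\tm_1 = \tmtwo$ is normal, so the diverging branch is impossible and $\tmthree = \tmtwo$ is forced---this is exactly where normality of the endpoint prevents the propagated step from falling off the end of $\derivp$, a point your tile count leaves implicit. Iterating the lemma shows every derivation from $\tm$ has length at most $\size{\derivp}$ and every normalizing one has length exactly $\size{\derivp}$, which yields 2(a), 2(b), and uniqueness of the normal form in one stroke. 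Finally, two wording slips to fix: in 1(b) the shape $\tm_i \lRewn{\mathsf{r}} \cdot \to_\mathsf{r}^* \tm_{i+1}$ that confluence eliminates is a \emph{peak}, not a valley (confluence turns peaks into valleys); and your accounting sentence is off in the coinciding case, where the tile \emph{does} consume the step of $\derivp$ (it is matched against the step of $\deriv$)---the correct invariant is simply that each step of $\deriv$ shortens the residual normalizing derivation by exactly one.
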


As all incarnations of Open CBV we consider are confluent, the use of \refpropp{basic-confluence}{unique-normal} is left implicit.

For $\firecalc$ and $\vsubcalc$, we use \refpropp{basic-confluence}{strong} and the following more informative version of Hindley--Rosen Lemma, whose proof is just a more accurate reading of the proof in \cite[Prop.\,3.3.5.(i)]{Barendregt84}:

\begin{lemma}[Strong Hindley--Rosen]\label{l:hindley-rosen}
  Let $\to \, = \, \to_1 \cup \to_2$ be a binary relation on a set $I$ s.t. $\to_1$ and $\to_2$ are strongly confluent.
  If $\to_1$ and $\to_2$ strongly com\-mute, then $\to$ is strongly confluent and, 
 for any $\tm \in I$ and any normalizing derivations $\deriv$ and $\derivtwo$ from $\tm$, one has $\size\deriv = \size\derivtwo$, $\size{\deriv}_1 = \size{\derivtwo}_1$ and $\size{\deriv}_2 = \size{\derivtwo}_2$.
\end{lemma}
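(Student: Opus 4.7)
The plan is to first establish strong confluence of $\to$ by a case analysis on the labels of a peak, then invoke \refpropp{basic-confluence}{strong} to get the equal total-length property for free, and finally lift this to the refined label-wise equality by an induction exploiting label preservation in the closing tiles.

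\textbf{Strong confluence of $\to$.} I consider a peak $\tmtwo \lRew{i} \tm \to_j \tmthree$ with $\tmtwo \neq \tmthree$ and $i,j \in \{1,2\}$. If $i = j$, strong confluence of $\to_i$ supplies $\tmfour$ with $\tmtwo \to_i \tmfour \lRew{i} \tmthree$. If $i \neq j$, strong commutation supplies $\tmfour$ with $\tmtwo \to_j \tmfour \lRew{i} \tmthree$. The crucial observation is \emph{label preservation}: in both cases, the step on the $\tmtwo$-side bears label $j$ (matching $\tm \to_j \tmthree$) and the step on the $\tmthree$-side bears label $i$ (matching $\tmtwo \lRew{i} \tm$). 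Incidentally, strong commutation also forces $\to_1$ and $\to_2$ to be disjoint, since otherwise a pair $\tmp \lRew{1} \tm \to_2 \tmp$ would violate its $\tmtwo \neq \tmthree$ clause. From strong confluence of $\to$, \refpropp{basic-confluence}{strong} immediately gives confluence of $\to$ and that all normalizing derivations from a common source share the same total length, so $\size\deriv = \size\derivtwo$ already.

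\textbf{Refined length property.} Given two normalizing derivations $\deriv, \derivtwo$ from $\tm$ (both ending at the unique normal form $\sknf$), I show $\size{\deriv}_h = \size{\derivtwo}_h$ for each $h \in \{1,2\}$ by induction on $\size\deriv + \size\derivtwo$. If either derivation is empty then $\tm$ is normal and both are empty, so the claim is trivial. Otherwise decompose $\deriv = (\tm \to_k \tmp) \cdot \deriv_0$ and $\derivtwo = (\tm \to_l \tmpp) \cdot \derivtwo_0$. If $\tmp = \tmpp$, disjointness forces $k = l$, and the induction hypothesis on the tails closes the case. Otherwise $\tmp \neq \tmpp$; close the peak as above into $\tmp \to_l \tmfour \lRew{k} \tmpp$, pick any normalizing $\derivthree \colon \tmfour \to^* \sknf$, and form $\deriv' \defeq (\tmp \to_l \tmfour) \cdot \derivthree$ and $\deriv'' \defeq (\tmpp \to_k \tmfour) \cdot \derivthree$. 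Since \refpropp{basic-confluence}{strong-length} forces all normalizing derivations from $\tmpp$ to share one length, $\size\derivthree = \size{\derivtwo_0} - 1$, which makes the combined lengths of $(\deriv_0, \deriv')$ and of $(\derivtwo_0, \deriv'')$ both strictly smaller than $\size\deriv + \size\derivtwo$, so the induction hypothesis yields $\size{\deriv_0}_h = \size{\deriv'}_h$ and $\size{\derivtwo_0}_h = \size{\deriv''}_h$. Writing $[k{=}h] \in \{0,1\}$ for the truth value of $k = h$, label preservation then gives
\[ \size{\deriv}_h = [k{=}h] + \size{\deriv_0}_h = [k{=}h] + [l{=}h] + \size{\derivthree}_h = [l{=}h] + \size{\derivtwo_0}_h = \size{\derivtwo}_h. \]

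\textbf{Main obstacle.} The proof is an almost routine tiling argument; the only genuine subtlety is arranging the induction so that the auxiliary derivation $\derivthree$ arising at the bottom of the tile is strictly shorter than $\derivtwo$. This is possible precisely because strong confluence of $\to$ itself (obtained in the first step from the combination of strong confluence of each component and strong commutation) forces a common length on all normalizing derivations from a given point; without it the induction would not close. Everything else is bookkeeping that carries through thanks to label preservation of the closing tiles.
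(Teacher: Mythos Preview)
Your proof is correct. The paper does not actually give its own argument for this lemma: it merely states that the result ``is just a more accurate reading of the proof in \cite[Prop.\,3.3.5.(i)]{Barendregt84}'' and moves on, so there is nothing to compare against at the level of detail you have provided.

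Your tiling argument is the standard one and the bookkeeping is sound. A couple of remarks confirming the points you flagged: the disjointness of $\to_1$ and $\to_2$ is indeed built into the paper's definition of strong commutation (the clause $\tmtwo \neq \tmthree$), so your use of it in the $\tmp = \tmpp$ case is legitimate; and the existence of a normalizing $\derivthree$ from $\tmfour$ is guaranteed because $\tmfour$ is a reduct of a normalizable term and strong confluence upgrades normalizability to strong normalizability via \refpropp{basic-confluence}{strong}. The decrease of the induction measure in the $\tmp \neq \tmpp$ case relies precisely on the equal-length property from the first part, which you correctly identified as the non-trivial ingredient.
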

\section{Omitted Proofs}
\label{app:omitted}

%

\subsection{Proofs of Section~\ref{sect:fireball} (The Fireball Calculus)}

In this section we start by recalling the relevant properties of the fireball calculus, that have been omitted from the paper for lack of space. For the sake of completeness we include all proofs, but most of them are actually taken from (the long versions of) our previous works \cite{DBLP:conf/aplas/AccattoliG16,fireballs}.

\paragraph{Distinctive Properties of $\firecalc$.} To prove the distinctive properties of $\firecalc$ we need the following auxiliary lemma.

\begin{lemma}[Values and inert terms are $\betaf$-normal]
\label{l:fnormal}
\hfill
  \begin{enumerate}
    \item\label{p:fnormal-value} Every abstraction is $\betaf$-normal.
    \item\label{p:fnormal-inert} Every inert term is $\betaf$-normal.
  \end{enumerate}
\end{lemma}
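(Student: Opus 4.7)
The plan is to prove both parts together by induction on the size of the term, exploiting the fact that evaluation contexts
\[
\evctx \grameq \ctxhole \mid \tm\evctx \mid \evctx\tm
\]
never descend under a $\lambda$, so every $\betaf$-redex in $\tm$ must appear at a position that is not beneath any abstraction of $\tm$.

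For part~\ref{p:fnormal-value}, consider an abstraction $\la\var\tm$ and suppose for contradiction that $\la\var\tm \tof \tmthree$. Then $\la\var\tm = \evctxp\tmtwo$ for some evaluation context $\evctx$ and some root redex $\tmtwo$ (\ie $\tmtwo \rtobabs \tmthree'$ or $\tmtwo \rtoin \tmthree'$). By the grammar of evaluation contexts, the only decomposition of the abstraction $\la\var\tm$ as $\evctxp\tmtwo$ has $\evctx = \ctxhole$ and $\tmtwo = \la\var\tm$; but an abstraction is neither the left-hand side of $\rtobabs$ nor of $\rtoin$, contradicting that $\tmtwo$ is a root redex. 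Hence $\la\var\tm$ is $\betaf$-normal.

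For part~\ref{p:fnormal-inert}, proceed by induction on the number $n \geq 0$ of arguments in the inert term $\gconst = \var\fire_1\ldots\fire_n$. If $n = 0$, then $\gconst = \var$ is a variable, which is trivially $\betaf$-normal. If $n \geq 1$, write $\gconst = \gconst'\fire_n$, where $\gconst' = \var\fire_1\ldots\fire_{n-1}$ is inert with $n-1$ arguments. By the induction hypothesis, $\gconst'$ is $\betaf$-normal. Each $\fire_j$ is a fireball: it is either an abstraction, hence $\betaf$-normal by part~\ref{p:fnormal-value}, or an inert term strictly smaller than $\gconst$, hence $\betaf$-normal by a sub-induction (equivalently, by a single outer induction on term size). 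Thus in particular $\fire_n$ is $\betaf$-normal. Now suppose $\gconst \tof \tmthree$: then $\gconst = \evctxp\tmtwo$ with $\tmtwo$ a root $\betaabs$- or $\betain$-redex. If $\evctx = \ctxhole$, then $\gconst$ itself has the form $(\la\vartwo\tmfour)\fire_n$; but the head of the application $\gconst = \var\fire_1\ldots\fire_n$ is the variable $\var$, not an abstraction, a contradiction. Otherwise $\evctx = \tmfour\evctxtwo$ or $\evctx = \evctxtwo\tmfour$, which would locate $\tmtwo$ strictly inside $\gconst'$ or $\fire_n$, both $\betaf$-normal, again a contradiction. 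Hence $\gconst$ is $\betaf$-normal.

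The only mild subtlety is threading the mutual induction between fireballs and inert terms, which is why I induct on term size (or equivalently on the mutually defined grammar): this makes the appeal to part~\ref{p:fnormal-value} on $\fire_n$ when it is an abstraction, and the appeal to part~\ref{p:fnormal-inert} on $\fire_n$ when it is a smaller inert term, both well-founded. No other step is delicate: the core argument is simply that evaluation contexts cannot cross $\lambda$-binders and that an inert term has a variable head, so no root redex ever materializes.
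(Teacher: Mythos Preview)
Your proof is correct and follows essentially the same approach as the paper's. The paper argues part~\ref{p:fnormal-value} in one line (``Immediate, since $\tof$ does not reduce under $\l$'s'') and proves part~\ref{p:fnormal-inert} by structural induction on the definition of inert terms, splitting on whether the rightmost argument is an abstraction or an inert term; your induction on term size with the explicit context decomposition is just a more detailed rendering of the same reasoning.
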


\begin{proof}\hfill
  \begin{enumerate}
    \item Immediate, since $\tof$ does not reduce under $\l$'s.
    
    \item By induction on the definition of inert term $\gconst$.
  
    \begin{itemize}
%
      \item If $\gconst = \var$ then $\gconst$ is obviously $\betaf$-normal.
      \item If $\gconst = \gconsttwo\la\var\tm$ then $\gconsttwo$ and $\la\var\tm$ are $\betaf$-normal by \ih and \reflemmap{fnormal}{value} re\-spectively, besides $\gconsttwo$\! is not an abstraction, \mbox{so $\gconst$ is $\betaf$-normal.}

      \item Finally, if $\gconst = \gconsttwo\gconstthree$ then $\gconsttwo$ and $\gconstthree$ are $\betaf$-normal by \ih, moreover $\gconsttwo$ is not an abstraction, hence $\gconst$ is $\betaf$-normal.
      \qedhere
    \end{itemize}
  \end{enumerate}
\end{proof}

The following proposition collects two main features of $\firecalc$, showing why it is interesting to study this calculus.
\refpoint{distinctive-fireball-open-harmony} generalizes the property of \ccbv, that we like to call \emph{harmony}, for which a closed term is $\betav$-normal iff it is a value. \refpoint{distinctive-fireball-conservative} instead states that if the evaluation of a closed term $\tm$ in the fireball calculus is \emph{exactly} the evaluation  of $\tm$ in \ccbv. This second point is an observation that never appeared in print before.

\setcounter{propositionAppendix}{\value{prop:distinctive-fireball}}
\begin{propositionAppendix}[Distinctive Properties of $\firecalc$]
\label{propappendix:distinctive-fireball}
  Let $\tm$ be a term.
\NoteState{prop:distinctive-fireball}
  \begin{varenumerate}
    \item\label{pappendix:distinctive-fireball-open-harmony}  \emph{Open Harmony}:  $\tm$ is $\betaf$-normal iff $\tm$ is a fireball.
    \item\label{pappendix:distinctive-fireball-conservative} \emph{Conservative Open Extension}: $\tm \tof \tmtwo$ iff $\tm \tobabs \tmtwo$ whenever $\tm$ is closed.
  \end{varenumerate}

\end{propositionAppendix}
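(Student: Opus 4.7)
The plan is to handle the two parts separately, with Part 1 providing the structural facts that make Part 2 essentially a corollary.

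For Part 1 (Open Harmony), I would prove the two directions independently. For $(\Leftarrow)$, I would first establish as an auxiliary lemma that every fireball is $\betaf$-normal. Abstractions are $\betaf$-normal trivially because $\tof$ never descends under a $\lambda$. For inert terms $\gconst = \var \fire_1 \ldots \fire_n$, I proceed by induction on $n$: since the head is the variable $\var$, no root $\rtobabs$ or $\rtoin$ redex can appear at the top, and by the induction hypothesis together with the abstraction case each $\fire_i$ is itself $\betaf$-normal, so no redex arises in an inner evaluation context either.

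For the $(\Rightarrow)$ direction of Part 1, I would induct on the structure of $\tm$. Variables are inert terms (the case $n=0$), so they are fireballs; abstractions are fireballs by definition. For $\tm = \tm_1 \tm_2$, since $\tm \evctx$ and $\evctx \tm$ are both evaluation contexts, both subterms must be $\betaf$-normal and hence, by the induction hypothesis, fireballs. Moreover $\tm_1$ cannot be an abstraction: otherwise $\tm_1 \tm_2$ would be a root $\rtobabs$-redex (if $\tm_2$ is an abstraction) or a root $\rtoin$-redex (if $\tm_2$ is an inert term), contradicting $\betaf$-normality. Hence $\tm_1$ is inert, and therefore $\tm_1 \tm_2$ is itself inert, hence a fireball.

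For Part 2 (Conservative Open Extension), the $(\Leftarrow)$ direction is immediate from $\tobabs \subseteq \tof$. The $(\Rightarrow)$ direction reduces to showing that a closed term admits no $\toin$-step. Suppose $\tm = \evctxp{(\la\var\tmthree)\gconst} \toin \tmtwo$; inspecting the grammar $\evctx \grameq \ctxhole \mid \tm\evctx \mid \evctx\tm$, no $\lambda$-binder appears on the path from the root of $\evctx$ to its hole, so every free variable of $\gconst$ is free in $\tm$. But an inert term always contains at least one free occurrence of its head variable (proved by a trivial induction on the definition of $\gconst$), so $\tm$ has a free variable, contradicting the closedness hypothesis. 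Hence every $\tof$-step from a closed $\tm$ is a $\tobabs$-step.

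I do not anticipate a real obstacle: the shapes of evaluation contexts and of inert terms make both implications essentially immediate once the right structural invariant has been extracted. The only mildly delicate point is the observation that $\evctx$ does not cross $\lambda$-binders, which is what guarantees both that an application forces normality of both arguments in the induction of Part 1 and that the free variables of an inert argument survive into the surrounding closed term in Part 2.
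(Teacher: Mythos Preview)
Your proposal is correct and, for Part~1, essentially identical to the paper's proof: the auxiliary lemma that abstractions and inert terms are $\betaf$-normal, followed by structural induction on $\tm$ for the converse, is exactly what the paper does.

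For Part~2 $(\Rightarrow)$ you take a genuinely different (and arguably cleaner) route. The paper proceeds by induction on the evaluation context of the $\tof$-step, propagating closedness to the immediate subterm containing the redex and appealing to the induction hypothesis; at the root it observes that a closed fireball must be an abstraction. You instead argue globally in one shot: since evaluation contexts never cross a binder, the head variable of any inert argument $\gconst$ is free in the whole term, contradicting closedness and ruling out $\toin$ directly. Both arguments rely on the same structural fact about $\evctx$, but yours avoids the induction and isolates the single obstruction (inert terms are intrinsically open) more explicitly; the paper's induction, on the other hand, generalizes more readily if one later wants to track \emph{which} $\tobabs$-step is obtained.
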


\begin{proof}\hfill
  \begin{enumerate}
    \item   
    \begin{description}
      \item [$\Rightarrow$:] Proof by induction on the term $\tm$.
      If $\tm$ is a value then $\tm$ is a fireball.
      
      Otherwise $\tm = \tmtwo\tmthree$ for some terms $\tmtwo$ and $\tmthree$.
      Since $\tm$ is $\betaf$-normal, then $\tmtwo$ and $ \tmthree$ are $\betaf$-normal, and either $\tmtwo$ is not an abstraction or $\tmthree$ is not a fireball.
      By induction hypothesis, $\tmtwo$ and $\tmthree$ are fireballs. 
      Summing up, $\tmtwo$ is either a variable or an inert term, and $\tmthree$ is a fireball, therefore $\tm = \tmtwo\tmthree$ is an inert term and hence a fireball.
      
      \item [$\Leftarrow$:] By hypothesis, $\tm$ is either a value or an inert term. 
      If $\tm$ is a value, then it is $\betaf$-normal by \reflemmap{fnormal}{value}. 
      Otherwise $\tm$ is an inert term and then it is $\betaf$-normal by \reflemmap{fnormal}{inert}.
    \end{description}

    \item 
    \begin{description}
      \item [$\Rightarrow$:] By induction on the definition of $\tm \tof \tmtwo$. Cases:
      \begin{itemize}
        \item \emph{Step at the root}, \ie $\tm = (\la\var\tmthree)\fire \rtof \tmthree\isub\var{\fire} = \tmtwo$. Since $\tm$ is closed, then $\fire$ is closed and hence cannot be an inert term, therefore $\fire$ is a (closed) abstraction and thus $\tm = (\la\var\tmthree)\fire \rtobabs \tmthree\isub\var{\fire} = \tmtwo$.
        
        \item \emph{Left Application}, \ie $\tm = \tmthree\tmfour \tof \tmthreep\tmfour = \tmtwo$ with $\tmthree \tof \tmthreep$. Since $\tm$ is closed, $\tmthree$ is closed and hence $\tmthree \tobv \tmthreep$ by \ih, so $\tm = \tmthree\tmfour \tobabs \tmthreep\tmfour = \tmtwo$.
        
        \item \emph{Right Application}, \ie $\tm = \tmfour\tmthree \tof \tmfour\tmthreep = \tmtwo$ with $\tmthree \tof \tmthreep$. Since $\tm$ is closed, $\tmthree$ is closed and hence $\tmthree \tobv \tmthreep$ by \ih, so $\tm = \tmfour\tmthree \tobabs \tmfour\tmthreep = \tmtwo$.
      \end{itemize}

      \item [$\Leftarrow$:] We have $\tobabs \, \subseteq \, \tof$ since variables and abstractions are fireballs.
      \qedhere
    \end{description}

  \end{enumerate}

\end{proof}

\paragraph{Operational Properties of $\firecalc$.} The rewriting theory of the fireball calculus is very well behaved. The following propositions resumes its main properties.
\begin{propositionAppendix}[Operational Properties of $\firecalc$]
\label{propappendix:basic-fireball}\hfill 
\NoteState{prop:basic-fireball}
\begin{varenumerate}
	  \item\label{pappendix:basic-fireball-toin-strong-normalization}\label{p:basic-fireball-toin-strong-confluence} $\toin$ is strongly normalizing and strongly confluent.
	  \item\label{pappendix:basic-fireball-tobv-toin-strong-commutation} $\tobabs$ and $\toin$ strongly commute.	  
	  \item\label{pappendix:basic-fireball-strong-confluence} \label{pappendix:basic-fireball-number-steps} $\tof$ is strongly confluent, and all $\betaf$-normalizing derivations $\deriv$ from a term $\tm$ (if any) have the same length $\sizef{\deriv}$, the same number $\sizebabs{\deriv}$ of $\betaabs$-steps, and the same number $\sizein{\deriv}$ of $\betain$-steps.  
  \end{varenumerate}
\end{propositionAppendix}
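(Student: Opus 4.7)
The plan is to establish items 1 and 2 by a direct structural analysis of redex positions, and then derive item 3 by invoking the Strong Hindley--Rosen Lemma (\reflemma{hindley-rosen}).

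For item 1, I would use as termination measure for $\toin$ the number $R(\tm)$ of $\betain$-redexes occurring in $\tm$ at any position (including under binders). A single $\toin$-step $\evctxp{(\la\var s)\gconst} \toin \evctxp{s\isub\var\gconst}$ should decrease $R$ by exactly one: the reduced redex disappears, and one checks that $R(s\isub\var\gconst) = R(s)$. For this equality, \reflemma{fnormal} gives $R(\gconst) = 0$, \reflemma{inerts-and-creations} rules out the creation of new redexes, and the closure of fireballs under inert substitution (established by induction on the fireball grammar) guarantees that no existing redex is destroyed nor changes its $\betaabs$/$\betain$ type. For strong confluence of $\toin$, I would show that any two distinct $\toin$-redexes in a term sit at disjoint positions: a nested inner redex would have to lie either under the binder of the outer redex (impossible, since evaluation contexts do not cross abstractions) or inside the outer's inert argument (impossible by \reflemma{fnormal}). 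Disjoint redexes commute on the nose.

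For item 2, the same disjointness argument applies to a pair formed of a $\tobabs$-redex and a $\toin$-redex: the inner redex would need to sit either under a binder or inside a $\betaf$-normal argument (an abstraction for $\tobabs$, an inert term for $\toin$), neither of which is possible at an evaluation position. Once the two positions are disjoint, both reductions commute on a common term $\tm_4$, yielding $\tm_2 \toin \tm_4$ and $\tm_3 \tobabs \tm_4$; the required inequality $\tm_2 \neq \tm_3$ holds because a $\betaf$-redex is syntactically distinct from its reduct at each of the two positions under examination.

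For item 3, I would appeal to the Strong Hindley--Rosen Lemma (\reflemma{hindley-rosen}). Strong confluence of $\tobabs$ follows by the same disjoint-positions argument used for $\toin$ in item 1 (the argument of a $\betaabs$-redex, being an abstraction, contains no evaluation-context position inside it). Combined with items 1 and 2, the lemma immediately delivers strong confluence of $\tof = \tobabs \cup \toin$ together with the invariance of $\sizef{\deriv}$, $\sizebabs{\deriv}$, and $\sizein{\deriv}$ along all normalizing derivations from a fixed term. The main obstacle I foresee is the structural preservation argument underlying the key equality $R(s\isub\var\gconst) = R(s)$: one must verify, by mutual induction on the grammar of fireballs and inert terms, that substituting an inert term for a variable in a fireball yields a fireball of the same kind, so that every $\betaf$-redex of $s$ has a well-defined image of the same type in $s\isub\var\gconst$. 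Everything else is bookkeeping once this stability is in place.
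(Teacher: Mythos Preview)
The paper does not prove this proposition at all: its ``proof'' is the one-line citation ``See \cite[Proposition~3]{DBLP:conf/aplas/AccattoliG16}''. So there is no argument in the paper to compare against; you are supplying content where the paper only points elsewhere.

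Your disjoint-positions reasoning for the strong confluence of $\toin$, the strong confluence of $\tobabs$, and the strong commutation of the two is correct: a $\betaf$-redex at an evaluation position cannot sit inside the body of another such redex (that would cross a binder) nor inside its fireball argument (which is $\betaf$-normal by \reflemma{fnormal}), so distinct redexes are disjoint and commute on the nose. With those facts, \reflemma{hindley-rosen} gives item~3 exactly as you outline.

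The strong-normalisation part of item~1, however, has a genuine gap: your measure $R$ (the number of $\betain$-redexes at arbitrary positions) does not strictly decrease along $\toin$. Consider
\[
(\la\varthree\varthree)\bigl((\la\var\var)\,\vartwo\bigr)\ \toin\ (\la\varthree\varthree)\,\vartwo.
\]
On the left the only $\betain$-redex is the inner $(\la\var\var)\,\vartwo$, since the outer argument $(\la\var\var)\,\vartwo$ is not a fireball; on the right the only $\betain$-redex is the whole term. So $R$ stays at~$1$. Your equality $R(s\isub\var\gconst)=R(s)$ is fine; what fails is the tacit decomposition $R(\evctxp{r}) = R_\evctx + R(r)$. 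Contracting the inner redex turns the enclosing argument into an inert term and thereby \emph{creates} a $\betain$-redex at the boundary between $\evctx$ and the hole. This upward creation is not what \reflemma{inerts-and-creations} rules out: that lemma concerns creation through the substitution $\isub\var\gconst$, not an application becoming a redex once its argument has been reduced to a fireball. Obvious variants of $R$ (counting all abstraction-headed applications, or all $\lambda$'s) run into the same trouble once the inert argument $\gconst$ carries $\lambda$'s under binders that get duplicated. A correct termination argument needs a different idea; one route is to use your already-established strong confluence to reduce SN to WN via \refproppp{basic-confluence}{strong}{normalizable}, and then set up a careful induction exploiting the step bijection between $\toin$-reductions of $s$ and of $s\isub\var\gconst$ --- but this is not the argument you sketched, and you may prefer to consult the cited reference for the original proof.
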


\begin{proof}
  See \cite[Proposition~3]{DBLP:conf/aplas/AccattoliG16}.
\end{proof}

\paragraph{The Right-to-Left Strategy.} Here we prove the the well-definedness of the right-to-left strategy.

\setcounter{lemmaAppendix}{\value{l:prop-of-torf}}
\begin{lemmaAppendix}[Properties of $\torf$]
\label{lappendix:prop-of-torf}
  Let
\NoteState{l:prop-of-torf}
  $\tm$ be a term.
  \begin{varenumerate}
	  \item \label{pappendix:prop-of-torf-compl} \emph{Completeness}: $\tm$ has $\tof$-redex iff $\tm$  has a $\torf$-redex.
	  \item \label{pappendix:prop-of-torf-determ} \emph{Determinism}: $\tm$ has at most one $\torf$ redex.
  \end{varenumerate}
\end{lemmaAppendix}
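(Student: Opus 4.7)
Both parts of the lemma follow by straightforward structural induction on $\tm$, using open harmony (\refpropp{distinctive-fireball}{open-harmony}) and the fact that fireballs are $\betaf$-normal as the two driving facts.

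For \emph{completeness}, the $\Leftarrow$ direction is immediate since every right context $\revctx$ is in particular an evaluation context, so $\torf \, \subseteq \, \tof$. For $\Rightarrow$, I assume $\tm$ has a $\tof$-redex and proceed by induction on $\tm$. By \refpropp{distinctive-fireball}{open-harmony}, $\tm$ is not a fireball; in particular $\tm$ is not a variable or an abstraction, so $\tm = \tmtwo\tmthree$. I split on whether $\tmthree$ is a fireball. If $\tmthree$ is not a fireball, then by open harmony $\tmthree$ has a $\tof$-redex, and by \ih a $\torf$-redex obtained via some right context $\revctx$; the context $\tmtwo\revctx$ then exhibits a $\torf$-redex of $\tm$. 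If $\tmthree$ is a fireball, I split on $\tmtwo$. If $\tmtwo$ is an abstraction, $\tm$ itself is a $\torf$-redex (via $\revctx = \ctxhole$). Otherwise $\tmtwo$ cannot be a fireball: were it a variable or an inert term, $\tmtwo\tmthree$ would be inert, hence a fireball, contradicting that $\tm$ has a $\tof$-redex (again by open harmony). Thus by open harmony $\tmtwo$ has a $\tof$-redex, and by \ih a $\torf$-redex via some context $\revctxtwo$; the context $\revctxtwo\tmthree$, legal because $\tmthree$ is a fireball, exhibits a $\torf$-redex of $\tm$.

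For \emph{determinism}, I again argue by induction on $\tm$, showing that at each step the grammar $\revctx \grameq \ctxhole \mid \tmtwo\revctx \mid \revctx\fire$ forces a unique decomposition $\tm = \revctxp{\tmfour}$ with $\tmfour$ a root $\tof$-redex. If $\tm$ is a variable or an abstraction, $\tm$ has no $\torf$-redex. Otherwise $\tm = \tmtwo\tmthree$. If $\tmthree$ is a fireball then, since fireballs are $\betaf$-normal (\refpropp{distinctive-fireball}{open-harmony}) and therefore have no $\torf$-redex, the decomposition cannot go through $\tmtwo\revctx$; it either places the hole at the root (requiring $\tmtwo$ to be an abstraction, in which case the root redex is unique) or uses $\revctxtwo\tmthree$, and in that case the \ih on $\tmtwo$ gives a unique redex. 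If $\tmthree$ is not a fireball, then the same normal-form argument rules out $\revctxtwo\fire$, so the decomposition must use $\tmtwo\revctx$ and uniqueness follows from the \ih on $\tmthree$. The cases are mutually exclusive by open harmony, which yields the claim.

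\textbf{Expected main obstacle.} There is no deep difficulty: the delicate point is just keeping the case analysis clean when the root itself could be a redex while the right argument still decomposes further. This is handled cleanly by observing that $\torf$-redexes live inside fireballs nowhere, so the grammar choices are deterministic once one commits to open harmony as the basic dichotomy between ``has a redex'' and ``is a fireball''. Both items are essentially bookkeeping around this one observation.
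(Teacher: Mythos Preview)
Your proposal is correct and, for determinism, essentially mirrors the paper's own induction on $\tm$. For completeness the paper argues slightly differently: it picks the evaluation context $\evctx$ of the \emph{rightmost} $\tof$-redex and proves by induction on $\evctx$ that $\evctx$ is a right context (using open harmony to conclude that the right argument in the left-application case is a fireball). Your direct structural induction on $\tm$ achieves the same result without introducing the notion of ``rightmost redex'', which makes it marginally more self-contained; the paper's version, on the other hand, makes explicit which $\torf$-redex you get (the rightmost one). Both are equally valid and of the same difficulty.
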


\begin{proof}\hfill
\begin{enumerate}
\item  ($\Rightarrow$) Immediate, as right contexts are in particular evaluation contexts, and thus $\torf \, \subseteq \, \tof$. 

($\Leftarrow$) Let $\evctx$ the evaluation context of the rightmost redex of $\tm$. We show that $\evctx$ is a right context. By induction on $\evctx$. Cases:
 \begin{enumerate}
  \item \emph{Empty}, \ie $\evctx = \ctxhole$. Then clearly $\evctx$ is a right context.
  \item \emph{Right Application}, \ie $\tm = \tmtwo \tmthree$ and $\evctx = \tmtwo \evctxtwo$. By \ih $\evctxtwo$ is a right context in $\tmthree$ and so is $\evctx$ with respect to $\tm$.
  \item \emph{Left Application}, \ie $\tm = \tmtwo \tmthree$ and $\evctx = \evctxtwo \tmthree$. By \ih $\evctxtwo$ is a right context in $\tmtwo$. Since $\evctx$ is the rightmost evaluation context, $\tmthree$ is $\tof$-normal, and so by open harmony (\refpropp{distinctive-fireball}{open-harmony}) it is a fireball. Therefore $\evctx$ is a right context.
\end{enumerate}
\item By induction on $\tm$. Note that by completeness of $\torf$ (\refpoint{prop-of-torf-compl}) open harmony (\refpropp{distinctive-fireball}{open-harmony}) holds with respect to $\torf$, \ie a term is $\torf$-normal iff it is a fireball. We use this fact implicitly in the following case analysis.	 
	 Cases:
	\begin{itemize}
		\item \emph{Value}. No redexes.
		\item \emph{Application $\tm = \tmtwo\tmthree$}. By \ih, there are two cases for $\tmthree$:
		\begin{enumerate}
			\item \emph{$\tmthree$ has exactly one $\torf$ redex}. Then $\tm$ has a $\torf$ redex, because $\tmtwo\ctxhole$ is an evaluation context. Moreover, no $\torf$ redex for $\tm$ can lie in $\tmtwo$, because by open harmony $\tmthree$ is not a \fireball, and so $\ctxhole\tmthree$ is not a right context.
			\item \emph{$\tmthree$ has no $\torf$ redexes}. Then $\tmthree$ is a \fireball. Consider $\tmtwo$. By \ih, there are two cases:
			\begin{enumerate}
				\item \emph{$\tmtwo$ has exactly one $\torf$ redex}. Then $\tm$ has a $\torf$ redex, because $\ctxhole\tmthree$ is an evaluation context and $\tmthree$ is a \fireball. Uniqueness follows from the fact that $\tmthree$ has no $\torf$ redexes.
				\item \emph{$\tmtwo$ has no $\torf$ redexes}. By open harmony $\tmtwo$ is a \fireball, and there are two cases:
				\begin{itemize}
					\item \emph{$\tmtwo$ is an inert term $\gconst$ or a variable $\var$}. Then $\tm$ is a \fireball. 
					\item \emph{$\tmtwo$ is an abstraction $\la\var\tmfour$}. Then $\tm = (\la\var\tmfour) \tmthree$ is a $\torf$-redex, because $\tmthree$ is a \fireball. Moreover, there are no other $\torf$ redexes, because evaluation does not go under abstractions and $\tmthree$ is a \fireball.\qedhere
				\end{itemize}
			\end{enumerate}
		\end{enumerate}
	\end{itemize}
\end{enumerate}
\end{proof}

\paragraph{Open Size Explosion.} The proof of open size explosion is a particularly simple induction on the index of the size exploding family.

\setcounter{propositionAppendix}{\value{prop:open-size-explosion}}
\begin{propositionAppendix}[Open Size Explosion]
\label{propappendix:open-size-explosion}
Let $n\in \nat$.
\NoteState{prop:open-size-explosion}
Then $\tm_n \toin^n \gconst_n$, moreover $\size{\tm_n} = O(n)$, $\size{\gconst_n} = \Omega(2^n)$, and $\gconst_n$ is an inert term.
\end{propositionAppendix}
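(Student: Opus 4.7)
The plan is to proceed by a direct induction on $n$, establishing the four statements (the reduction, the inert status, and the two size bounds) simultaneously, or more cleanly by separate inductions that build on one another.

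First I would verify that $\gconst_n$ is an inert term for all $n \in \nat$ by induction on $n$. The base case $\gconst_0 = \vartwo$ is a variable, hence inert. For the inductive step, $\gconst_{n+1} = \gconst_n \gconst_n$ is an application of an inert term to an inert term; since inert terms are fireballs, this matches the grammar $\gconst \grameq \var \fire_1 \ldots \fire_n$ (unfolding the outer inert term by the IH). Next, both size bounds follow by straightforward inductions: $\size{\tm_{n+1}} = \size{\tm_n} + c$ for a fixed constant $c$ (the size of $\la\var\var\var$ plus the application node), giving $\size{\tm_n} = O(n)$; and $\size{\gconst_{n+1}} = 2\,\size{\gconst_n} + 1$, giving $\size{\gconst_n} = \Omega(2^n)$.

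The core step is establishing $\tm_n \toin^n \gconst_n$ by induction on $n$. The base case $n=0$ is trivial since $\tm_0 = \vartwo = \gconst_0$ and the derivation is empty. For the inductive step, assume $\tm_n \toin^n \gconst_n$. Applying the contextual closure under the right context $(\la\var\var\var)\ctxhole$ (which is an evaluation context), we lift the IH derivation to
\begin{equation*}
\tm_{n+1} \;=\; (\la\var \var\var)\,\tm_n \;\toin^n\; (\la\var \var\var)\,\gconst_n.
\end{equation*}
Since $\gconst_n$ is inert (by the first point), the redex $(\la\var \var\var)\,\gconst_n$ fires via $\rtoin$, yielding $(\var\var)\isub\var{\gconst_n} = \gconst_n \gconst_n = \gconst_{n+1}$. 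Concatenating gives the desired $(n{+}1)$-step derivation.

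I expect no real obstacle: the main subtlety is simply noting that the final step at each stage must be a $\toin$-step (not a $\tobabs$-step), which is precisely where the hypothesis that $\gconst_n$ is inert is used, justifying the ordering of the inductions (inertness first, then reduction). The size arguments are routine arithmetic once the recurrences are written down.
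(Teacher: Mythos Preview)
Your proposal is correct and follows essentially the same approach as the paper: induction on $n$, lifting the \ih\ under the evaluation context $(\la\var\var\var)\ctxhole$ and then firing the resulting $\toin$-redex using that $\gconst_n$ is inert. The paper's proof is terser (it treats inertness and the size bounds as ``immediate''), whereas you spell out the auxiliary inductions and the ordering constraint (inertness first), but the core argument is identical.
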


\begin{proof}
By induction on $n$. The base case is immediate. The inductive case: $\tm_{n+1} = (\la\var \var\var) \tm_n \toin^n (\la\var \var\var) \gconst_n \toin \gconst_n \gconst_n = \gconst_{n+1}$, where the first sequence is obtained by the \ih The bounds on the sizes are immediate, as well as the fact that $\gconst_{n+1}$ is inert.
\end{proof}

\paragraph{Circumventing Open Size Explosion.} To prove that the substitution of inert terms can be avoided we need two auxiliary simple but technical lemmas about substitution, fireballs, and reductions.

\begin{lemma}[Fireballs are Closed Under Substitution and Anti-Substitution of \Quiet Terms]
\label{l:inert-anti-sub-bis} 
  Let $\tm$ be a term and $\gconst$ be an inert term.
  \begin{varenumerate}
    \item \label{p:inert-anti-sub-bis-abs} $\tm\isub\var\gconst$ is an abstraction iff $\tm$ is an abstraction. 
    \item \label{p:inert-anti-sub-bis-inert} $\tm\isub\var\gconst$ is an inert term iff $\tm$ is an inert term;
    \item\label{p:inert-anti-sub-bis-fire} $\tm\isub\var\gconst$ is a fireball iff $\tm$ is a fireball.
  \end{varenumerate}
\end{lemma}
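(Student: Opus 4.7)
The three points should be proved simultaneously, essentially by a combined induction on the term $\tm$ (with (3) being a direct corollary of (1) and (2) in both directions, since a fireball is either an abstraction or an inert term, and these two classes are disjoint). In the plan below I keep track of both directions at once, noting that the $\Leftarrow$ directions are immediate from the definitions of the syntactic classes and the $\Rightarrow$ directions require a small case analysis.

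For Point (1), I would argue directly without induction. The $\Leftarrow$ direction is trivial: if $\tm = \la\vartwo\tmtwo$, then, assuming $\vartwo \neq \var$ and $\vartwo \notin \fv\gconst$ by $\alpha$-conversion, $\tm\isub\var\gconst = \la\vartwo{\tmtwo\isub\var\gconst}$ is an abstraction. For $\Rightarrow$, suppose $\tm\isub\var\gconst$ is an abstraction. If $\tm = \vartwo$ were a variable, then $\tm\isub\var\gconst$ would be either $\gconst$ (which is inert, hence not an abstraction by definition) or $\vartwo$ (a variable). If $\tm = \tmtwo\tmthree$ were an application, then $\tm\isub\var\gconst = \tmtwo\isub\var\gconst\cdot\tmthree\isub\var\gconst$ would still be an application. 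The only remaining case is that $\tm$ is an abstraction.

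For Point (2), I would proceed by induction on $\tm$, using Point (1) and (mutually) Point (3). In the $\Leftarrow$ direction, induct on the definition of inert term: the variable case is immediate (the substitution yields $\gconst$ or a variable, both inert), and for $\gconst' = \vartwo \fire_1 \ldots \fire_n$ I apply the \ih to the head $\vartwo$ (which gives an inert term) and invoke Point (3) for each $\fire_i$ to conclude that the application is still of the shape ``inert applied to fireballs''. In the $\Rightarrow$ direction, the case $\tm = \la\vartwo\tmtwo$ is ruled out by Point (1) (since the substitution would be an abstraction, not inert); the variable case is trivial; for $\tm = \tmtwo\tmthree$, inertness of $\tm\isub\var\gconst$ forces $\tmtwo\isub\var\gconst$ to be inert and $\tmthree\isub\var\gconst$ to be a fireball, to which I apply the \ih for (2) and (3) respectively.

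Point (3) follows by combining (1) and (2): in either direction, $\tm\isub\var\gconst$ is a fireball iff it is an abstraction or inert, iff $\tm$ is an abstraction or inert (by (1) and (2)), iff $\tm$ is a fireball. The only subtlety to watch out for is the mutual dependency between (2) and (3) inside the induction on applications, but this is harmless because in each recursive call the strict subterms decrease in size. I do not expect any real obstacle: the delicate point is simply making the mutual induction explicit and being careful about the $\alpha$-renaming convention when the substitution crosses a binder.
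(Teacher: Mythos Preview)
Your proposal is correct and follows essentially the same approach as the paper. The only organizational difference is that the paper proves Point~(1) first in isolation, then proves Point~(2) by induction on the inert structure (of $\tm\isub\var\gconst$ for $\Rightarrow$, of $\tm$ for $\Leftarrow$), handling the fireball subcase by splitting into abstraction/inert and invoking Point~(1) or the \ih\ directly rather than appealing to Point~(3); your explicit mutual induction on $\tm$ with Point~(3) used on strict subterms is an equivalent repackaging.
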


\begin{proof}\hfill
  \begin{enumerate}
    \item If $\tm\isub\var\gconst = \la\vartwo\tmthree$ then we can suppose without loss of generality that $\vartwo \notin \fv{\gconst} \cup \{\var\}$ and thus there is $\tmfour$ such that $\tmthree = \tmfour\isub\var\gconst$, hence  $\tm\isub\var\gconst = \la\vartwo(\tmfour\isub\var\gconst) = (\la\vartwo\tmfour)\isub\var\gconst$, therefore $\tm = \la\vartwo\tmfour$ is an abstraction.   

    Conversely, if $\tm = \la\vartwo\tmthree$ then we can suppose without loss of generality that $\vartwo \notin \fv{\gconst} \cup \{\var\}$ and thus $\tm\isub\var\gconst = \la\vartwo{(\tmthree\isub\var\gconst)}$ which is an abstraction.    
    
    \item ($\Rightarrow$): By induction on the inert structure of $\tm\isub\var\gconst$. Cases:
    \begin{itemize}
      \item \emph{Variable}, \ie $\tm\isub\var\gconst = \vartwo$, possibly with $\var = \vartwo$. Then $\tm = \var$ or $\tm = \vartwo$, and in both cases $\tm$ is inert.
      \item \emph{Compound Inert}, \ie $\tm\isub\var\gconst = \gconsttwo \fire$. If $\tm$ is a variable then it is inert. Otherwise it is an application $\tm = \tmtwo \tmthree$, and so $\tmtwo\isub\var\gconst = \gconsttwo$ and $\tmthree\isub\var\gconst = \fire$. By \ih, $\tmtwo$ is an inert term. Consider $\fire$. Two cases:
      
      \begin{enumerate}
	\item $\fire$ is an abstraction. Then by \refpoint{inert-anti-sub-bis-abs} $\tmthree$ is an abstraction.
	\item $\fire$ is an inert term. Then by \ih $\tmthree$ is an inert term.
      \end{enumerate}
      
      In both cases $\tmthree$ is a fireball, and so $\tm = \tmtwo \tmthree$ is an inert term.      
      \end{itemize}

      ($\Leftarrow$): By induction on the inert structure of $\tm$. Cases:
    \begin{itemize}
      \item \emph{Variable}, \ie either $\tm = \var$ or $\tm = \vartwo$: in the first case  $\tm\isub\var\gconst = \gconst$, in the second case  $\tm\isub\var\gconst = \vartwo$; in both cases $\tm\isub\var\gconst$ is  an inert term.
      \item \emph{Compound Inert}, \ie $\tm = \gconsttwo \fire$. Then $\tm\isub\var\gconst = \gconsttwo\isub\var\gconst \fire\isub\var\gconst$.
      By \ih, $\gconsttwo\isub\var\gconst$ is an inert term. Concerning $\fire$, there are two cases:
      \begin{enumerate}
	\item $\fire$ is an abstraction. Then by \refpoint{inert-anti-sub-bis-abs} $\fire\isub\var\gconst$ is an abstraction.
	\item $\fire$ is an inert term. Then by \ih $\fire\isub\var\gconst$ is an inert term.
      \end{enumerate}
      In both cases $\fire\isub\var\gconst$ is a fireball, and hence $\tm\isub\var\gconst = \gconsttwo\isub\var\gconst\fire\isub\var\gconst$ is an inert term.      
      \end{itemize}
    
    \item Immediate consequence of \reflemmasps{inert-anti-sub-bis}{abs}{inert}, since every fireball is either an abstraction or an inert term.
    \qedhere
  \end{enumerate}
\end{proof}

\begin{lemma}[Substitution of \Quiet Terms Does Not Create $\betaf$-Redexes]
\label{l:inert-anti-red-bis} 
  Let $\tm, \tmtwo$~be terms and $\gconst$ be an inert term.
  There is a term $\tmthree$ such that:
  \begin{varenumerate}
    \item \label{p:inert-anti-red-bis-betav} if $\tm\isub\var\gconst \tobabs \tmtwo$ then 
    $\tm \tobabs \tmthree$ and $\tmthree\isub\var\gconst = \tmtwo$;

    \item \label{p:inert-anti-red-bis-inert} if $\tm\isub\var\gconst \toin \tmtwo$ then   $\tm \toin \tmthree$ and $\tmthree\isub\var\gconst = \tmtwo$.
  \end{varenumerate}
\end{lemma}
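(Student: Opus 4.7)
The plan is to prove both points simultaneously by structural induction on $\tm$, in each case using the previous lemma (\reflemma{inert-anti-sub-bis}) to reflect the shape of $\tm\isub\var\gconst$ back to $\tm$. Writing $\sigma \defeq \isub\var\gconst$, the candidate $\tmthree$ will be obtained by locating where the redex of $\tm\sigma \tof \tmtwo$ lives, finding the corresponding subterm in $\tm$, and firing there; then $\tmthree\sigma = \tmtwo$ will follow from a routine substitution-composition identity, using $\alpha$-conversion to assume the bound variables are fresh with respect to $\var$ and $\fv{\gconst}$.

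First I would dispose of the non-application cases. If $\tm = \var$ then $\tm\sigma = \gconst$, but an inert term is $\betaf$-normal by \reflemmap{fnormal}{inert}, contradicting the hypothesis; if $\tm = \vartwo \ne \var$ then $\tm\sigma = \vartwo$ cannot reduce either; finally, if $\tm$ is an abstraction, then so is $\tm\sigma$ and it cannot reduce since $\tof$ is weak. Hence only $\tm = \tmfour\tmfive$ remains, and one splits on whether the redex of $\tm\sigma \tof \tmtwo$ is at the root or inside $\tmfour\sigma$ or $\tmfive\sigma$.

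The two internal cases are immediate from the induction hypothesis: if the step takes place inside $\tmfour\sigma$ or $\tmfive\sigma$, the IH supplies a term $\tmfour_3$ (resp.\ $\tmfive_3$) firing in $\tmfour$ (resp.\ $\tmfive$) and the required $\tmthree$ is $\tmfour_3\tmfive$ (resp.\ $\tmfour\tmfive_3$), using that evaluation contexts $\evctx$ are closed under the application formation $\tm\evctx$ and $\evctx\tm$. The root case is where \reflemma{inert-anti-sub-bis} does the work: if $\tm\sigma = \tmfour\sigma\cdot\tmfive\sigma \rtobabs$ then $\tmfour\sigma$ is an abstraction and so is $\tmfive\sigma$, so by \reflemmap{inert-anti-sub-bis}{abs} both $\tmfour$ and $\tmfive$ are abstractions, hence $\tm$ itself is a $\betaabs$-redex; similarly, if $\tm\sigma \rtoin$ then $\tmfive\sigma$ is inert and \reflemmap{inert-anti-sub-bis}{inert} gives that $\tmfive$ is inert, so $\tm$ is a $\betain$-redex. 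In both subcases the residual $\tmthree$ is obtained by firing at the root of $\tm$.

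The only subtle point I anticipate is verifying $\tmthree\sigma = \tmtwo$ in the root subcases: this amounts to the substitution lemma $\tmsix\isub\vartwo\tmseven\sigma = \tmsix\sigma\isub\vartwo{\tmseven\sigma}$ when $\vartwo \notin \fv\gconst \cup \set\var$, which is the standard meta-level fact and the only place where the assumption that $\gconst$ does not clash with bound names is used. Beyond this bookkeeping, no additional creativity is required; the proof is a clean mirror of \reflemma{inert-anti-sub-bis} lifted from shapes to reductions.
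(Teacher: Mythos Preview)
Your proposal is correct and follows essentially the same approach as the paper: both argue by induction, use \reflemma{inert-anti-sub-bis} at the root to reflect the redex shape from $\tm\isub\var\gconst$ back to $\tm$, and handle the non-root cases by a direct appeal to the induction hypothesis on the relevant subterm. The only cosmetic difference is that the paper organizes the induction on the evaluation context of the step rather than structurally on $\tm$, and consequently leaves the exclusion of the variable and abstraction base cases implicit, whereas you spell them out explicitly; your version is slightly more careful in this respect but otherwise identical in content.
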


\begin{proof}
  We prove the two points by induction on the evaluation context closing the root redex. Cases:  
    \begin{itemize}
      \item \emph{Step at the root}:
      \begin{enumerate}
      \item \emph{Abstraction Step}, \ie $\tm\isub\var\gconst \defeq (\la\vartwo\tmfour\isub\var\gconst) \tmfive\isub\var\gconst \allowbreak\rtobabs\allowbreak \tmfour\isub\var\gconst\isub\vartwo{\tmfive\isub\var\gconst} \eqdef \tmtwo$.
      By \reflemmap{inert-anti-sub-bis}{abs}, $\tmfive$ is an abstraction, since $\tmfive\isub\var\gconst$ is an abstraction by hypothesis. Then $\tm = (\la\vartwo\tmfour) \tmfive \rtobabs \tmfour\isub\vartwo{\tmfive}$. Then $\tmthree \defeq \tmfour\isub\var{\tmfive}$ verifies the statement, as $\tmthree\isub\var\gconst = (\tmfour\isub\vartwo{\tmfive})\isub\var\gconst = \tmfour\isub\var\gconst\isub\vartwo{\tmfive\isub\var\gconst} = \tmtwo$.     
      
      \item \emph{Inert Step}, identical to the abstraction subcase, just replace \emph{abstraction} with \emph{inert term} and the use of  \reflemmap{inert-anti-sub-bis}{abs} with the use of \reflemmap{inert-anti-sub-bis}{inert}.
      \end{enumerate}      
      
      \item \emph{Application Left}, \ie $\tm = \tmfour\tmfive$ and reduction takes place in $\tmfour$:
      \begin{enumerate}
      \item \emph{Abstraction Step}, \ie $\tm\isub\var\gconst \defeq \tmfour\isub\var\gconst \tmfive\isub\var\gconst \tobabs \tmsix \tmfive\isub\var\gconst \eqdef \tmtwo$. 
      By \ih there is a term $\tmthreep$ such that $\tmsix = \tmthreep\isub\var\gconst$ and $\tmfour \tobabs \tmthreep$. Then $\tmthree \defeq \tmthreep \tmfive$ satisfies the statement, as $\tmthree\isub\var\gconst = (\tmthreep \tmfive)\isub\var\gconst = \tmthreep\isub\var\gconst \tmfive\isub\var\gconst = \tmtwo$.
      
      \item \emph{Inert Step}, identical to the abstraction subcase.
      \end{enumerate}

      \item \emph{Application Right}, \ie $\tm = \tmfour\tmfive$ and reduction takes place in $\tmfive$. Identical to the \emph{application left} case, just switch left and right.
      \qedhere
    \end{itemize}
      
\end{proof}

\setcounter{lemmaAppendix}{\value{l:inerts-and-creations}}
\begin{lemmaAppendix}[Inert Substitutions Can Be Avoided]
\label{lappendix:inerts-and-creations}
  Let 
  \NoteState{l:inerts-and-creations}
  $\tm, \tmtwo$ be terms and $\gconst$ be an inert term. Then, $\tm \tof \tmtwo$ iff $\tm \isub\var\gconst \tof \tmtwo \isub\var\gconst$.
\end{lemmaAppendix}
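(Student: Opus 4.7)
The plan is to prove the two directions separately, both by reducing to lemmas already established in the paper about the interaction between substitution of an inert term and the shape of redexes.

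For the forward direction ($\Rightarrow$), I would proceed by induction on the derivation of $\tm \tof \tmtwo$. The contextual-closure cases are routine: a step inside an application $\tmfour\tmfive$ becomes a step in the corresponding side of $\tmfour\isub\var\gconst\,\tmfive\isub\var\gconst$. For the root cases, I must verify that after applying $\isub\var\gconst$ the redex pattern survives. If $\tm = (\la\vartwo\tmthree)(\la\varthree\tmfour) \rtobabs \tmthree\isub\vartwo{\la\varthree\tmfour}$, then (choosing $\vartwo,\varthree$ fresh for $\gconst,\var$) the substituted term is still a $\betaabs$-redex, and the usual substitution lemma gives us the expected contractum $(\tmthree\isub\vartwo{\la\varthree\tmfour})\isub\var\gconst$. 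If instead $\tm = (\la\vartwo\tmthree)\gconsttwo \rtoin \tmthree\isub\vartwo\gconsttwo$ with $\gconsttwo$ inert, then by \reflemmap{inert-anti-sub-bis}{inert} the term $\gconsttwo\isub\var\gconst$ is again inert, so the $\betain$-redex is preserved after substitution.

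For the backward direction ($\Leftarrow$), the key input is the already proved auxiliary \reflemma{inert-anti-red-bis}, which splits into $\betaabs$- and $\betain$-cases and ensures that reducing $\tm\isub\var\gconst$ factors as an honest reduction of $\tm$ followed by substitution. Applied to the hypothesis $\tm\isub\var\gconst \tof \tmtwo\isub\var\gconst$, it yields a term $\tmthree$ with $\tm \tof \tmthree$ and $\tmthree\isub\var\gconst = \tmtwo\isub\var\gconst$. Read up to this substitutive equivalence, this is precisely the statement $\tm \tof \tmtwo$ we want: the point of the lemma, as its informal reading in the paper makes clear, is that $\tm$ has a reduction step exactly when $\tm\isub\var\gconst$ does, with the contractum determined modulo $\isub\var\gconst$.

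The main obstacle is hidden in the backward direction, and is exactly what \reflemma{inert-anti-red-bis} handles: substituting a general term for $\var$ can create new redexes (as in $(\var\vartwo)\isub\var{\la\varthree\varthree} = (\la\varthree\varthree)\vartwo$), so a step in $\tm\isub\var\gconst$ need not come from a step in $\tm$. The inertness of $\gconst$ is exactly what rules this out, via \reflemma{inert-anti-sub-bis}: plugging an inert term for $\var$ cannot turn a non-abstraction into an abstraction and cannot turn a non-inert argument into an inert one, so any $\betaabs$- or $\betain$-redex visible in $\tm\isub\var\gconst$ must already be present, at the same position, in $\tm$. Once that observation is in place, the proof of the backward direction is a straightforward induction on the evaluation context of the step in $\tm\isub\var\gconst$, matching the inductive structure used to establish the auxiliary lemma.
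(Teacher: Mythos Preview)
Your proposal is correct and follows essentially the same approach as the paper: the forward direction by induction on the step (using \reflemma{inert-anti-sub-bis} to preserve the shape of the argument in the root cases), and the backward direction by a direct appeal to \reflemma{inert-anti-red-bis}. Your explicit remark that the backward direction delivers the reduct only ``modulo $\isub\var\gconst$'' is in fact more careful than the paper, which simply cites the auxiliary lemma; this matches the intended reading spelled out informally after the statement.
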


\begin{proof}
  The right-to-left direction is \reflemma{inert-anti-red-bis}, since $\tof \,=\, \tobabs\cup \toin$.
  The left-to-right direction is proved by induction on the definition of $\tm \tof \tmtwo$. Cases:
  \begin{itemize}
    \item \emph{Step at the root}:
    \begin{enumerate}
      \item \emph{Abstraction Step}, \ie $\tm = (\la\vartwo\tmthree)\tmfour \rtobabs \tmthree\isub\vartwo\tmfour = \tmtwo$ where $\tmfour$ is an abstraction.
      We can suppose without loss of generality that $\vartwo \in \fv{\gconst} \cup \{\var\}$.
      Note that $\tmfour\isub\var\gconst$ is an abstraction, according to \reflemmap{inert-anti-sub-bis}{abs}.
      Then, $\tm\isub\var\gconst = (\la\vartwo\tmthree\isub\var\gconst)\tmfour\isub\var\gconst \rtobabs\! \tmthree\isub\var\gconst\isub\vartwo{\tmfour\isub\var\gconst} = \tmthree\isub\vartwo\tmfour\isub\var\gconst = \tmtwo\isub\var\gconst$.
      \item \emph{Inert Step}, \ie $\tm = (\la\vartwo\tmthree)\gconsttwo \rtoin \tmthree\isub\vartwo\gconsttwo = \tmtwo$ where $\gconsttwo$ is an inert term.
      We can suppose without loss of generality that $\vartwo \in \fv{\gconst} \cup \{\var\}$.
      According to \reflemmap{inert-anti-sub-bis}{inert}, $\tmfour\isub\var\gconst$ is inert.
      Then, $\tm\isub\var\gconst = (\la\vartwo\tmthree\isub\var\gconst)\gconsttwo\isub\var\gconst \rtoin \tmthree\isub\var\gconst\isub\vartwo{\gconsttwo\isub\var\gconst} = \tmthree\isub\vartwo\gconsttwo\isub\var\gconst = \tmtwo\isub\var\gconst$.
    \end{enumerate}
    \item \emph{Application Left}, \ie $\tm = \tmthree\tmfour \tof \tmthreep\tmfour = \tmtwo$ with $\tmthree \tof \tmthreep$.
    By \ih, $\tmthree\isub\var\gconst \tof \tmthreep\isub\var\gconst$, hence $\tm\isub\var\gconst = \tmthree\isub\var\gconst\tmfour\isub\var\gconst \tof \tmthreep\isub\var\gconst\tmfour\isub\var\gconst = \tmtwo\isub\var\gconst$.
    \item \emph{Application Right}, \ie $\tm = \tmfour\tmthree \tof \tmfour\tmthreep = \tmtwo$ with $\tmthree \tof \tmthreep$.
    By \ih, $\tmthree\isub\var\gconst \tof \tmthreep\isub\var\gconst$, hence $\tm\isub\var\gconst = \tmfour\isub\var\gconst\tmthree\isub\var\gconst \tof \tmfour\isub\var\gconst\tmthreep\isub\var\gconst = \tmtwo\isub\var\gconst$.
    \qedhere
  \end{itemize}

\end{proof}

\subsection{Proofs of Section~\ref{sect:machines-intro} (Preliminaries on Abstract Machines, Implementations, and Complexity Analysis)}

Here we provide the abstract proof of \refthm{abs-impl}, stating that the conditions required to an implementation system (\refdef{implementation}) indeed imply that the machine implements the strategy via the decoding (in the sense of \refdef{implem}).

The \emph{executions-to-derivations} part of the implementation theorem is easy to prove, essentially \emph{$\beta$-projection} and \emph{overhead transparency} allow to project a single transition, and the projection of executions is obtained as a simple induction.

The \emph{derivations-to-executions} part is a bit more delicate, instead, because the simulation of $\beta$-steps has to be done \emph{up to} overhead transitions. The following lemma shows how the conditions for implementation systems allow to do that. 
Interestingly all five conditions of \refdef{implementation} are used in the proof.

\begin{lemma}[One-Step Simulation]
  \label{l:one-step-simulation}
  Let $\mach$, $\tostrat$, and $\decode\cdot$ be a machine, a strategy, and a decoding forming an 
  {implementation system}. 
%
%
%
  For any state $\state$ of $\mach$, if $\decode\state \to \tmtwo$ then there is a state $\statetwo$ of $\mach$ such that $\state \tomacho^*\tomachb \statetwo$.
\end{lemma}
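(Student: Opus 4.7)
The plan is to combine the five conditions of an implementation system in a precise order: termination of $\tomacho$ lets us normalize overhead noise first, transparency keeps the decoding fixed during this phase, progress then forces the resulting state to admit \emph{some} transition, and since all overhead transitions have been exhausted this transition must be a $\beta$-transition.

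Concretely, starting from $\state$ I would first apply the \emph{overhead termination} hypothesis to extract a maximal $\tomacho$-sequence $\state \tomacho^* \state'$, so that $\state'$ admits no overhead transition. Iterating \emph{overhead transparency} along this sequence yields $\decode{\state'} = \decode\state$, so by hypothesis $\decode{\state'} \to \tmtwo$, and in particular $\decode{\state'}$ is not $\tostrat$-normal. Now I would invoke the contrapositive of \emph{progress}: since $\decode{\state'}$ is not $\tostrat$-normal, $\state'$ cannot be a final state of $\mach$, hence some machine transition applies at $\state'$. Because $\state'$ is $\tomacho$-normal, the only remaining possibility is a $\beta$-transition $\state' \tomachb \statetwo$, and concatenation produces the required $\state \tomacho^* \tomachb \statetwo$.

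To tie this to the subsequent implementation theorem (and although not literally required by the statement) I would also record that $\decode\statetwo = \tmtwo$: by \emph{$\beta$-projection} we have $\decode{\state'} \tostrat \decode{\statetwo}$, and combined with $\decode{\state'} = \decode\state \to \tmtwo$, the \emph{determinism} of $\tostrat$ forces $\decode{\statetwo} = \tmtwo$.

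No step is really hard in isolation; the only subtle point is making sure the order of the arguments is right, namely that overhead normalization must come \emph{before} the appeal to progress, so that the transition guaranteed by progress is necessarily a $\beta$-transition rather than yet another overhead one. All the conceptual work has been packaged into the five clauses of \refdef{implementation}; the lemma is essentially a bookkeeping argument that shows why exactly those five clauses were chosen.
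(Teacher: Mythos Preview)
Your proposal is correct and follows essentially the same approach as the paper: normalize overhead transitions first (the paper calls the result $\nfo{\state}$), use transparency to keep the decoding fixed, then invoke progress to force a non-overhead transition, and finally use $\beta$-projection plus determinism of $\tostrat$ to pin down $\decode{\statetwo}=\tmtwo$. The only cosmetic difference is that the paper also appeals to determinism of $\mach$ to argue uniqueness of the overhead normal form, which is not needed for the bare existence claim you establish.
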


\begin{proof}
  According to \refdef{implementation}, since $(\mach,\to,\decode\cdot)$ is an implementation system, the following conditions hold:
  \begin{varenumerate}
		\item\label{p:beta-projection} \emph{$\beta$-Projection}: $\state \tomachhole\beta \statetwo$ implies $\decode\state \tostrat \decode\statetwo$;
		\item\label{p:overhead-transparency} \emph{Overhead Transparency}: $\state \tomacho \statetwo$ implies $\decode\state = \decode\statetwo$;
		\item\label{p:overhead-terminate} \emph{Overhead Transitions Terminate}:  $\tomacho$ terminates;

	\item\label{p:determinism} \emph{Determinism}: both $\mach$ and $\tostrat$ are deterministic;

	\item\label{p:progress} \emph{Progress}: $\mach$ final states decode to $\tostrat$-normal terms.
  \end{varenumerate}

  For any state $\state$ of $\mach$, let $\nfo{\state}$ be the state that is the normal form of $\state$ with respect to $\tomacho$: such a state exists and is unique because overhead transitions terminate (\refpoint{overhead-terminate}) and $\mach$ is deterministic (\refpoint{determinism}).
  Since $\tomacho$ is mapped on identities (\refpoint{overhead-transparency}), one has $\decode{\nfo{\state}} = \decode\state$.
  As $\decode\state$ is not $\to$-normal by hypothesis, the progress property (\refpoint{progress}) entails that $\nfo{\state}$ is not final, therefore $\state \tomacho^* \nfo{\state} \tomachb \statetwo$ for some state $\statetwo$, and thus $\decode\state = \decode{\nfo{\state}} \to \decode{\statetwo}$ by $\beta$-projection (\refpoint{beta-projection}).
  According to the determinism of $\to$ (\refpoint{determinism}), one obtains $\decode{\statetwo} = \tmtwo$.
\end{proof}

Now, the one-step simulation can be extended to the a simulation of derivations by an easy induction on the length of the derivation.

\setcounter{theoremAppendix}{\value{thm:abs-impl}}
\begin{theoremAppendix}[Sufficient Condition for Implementations]
\label{thmappendix:abs-impl}
Let 
\NoteState{thm:abs-impl}
$(\mach, \tostrat, \decode\cdot)$ be an \emph{implementation system}. 
%
 Then, $\mach$ implements $\tostrat$ via $\decode\cdot$.
\end{theoremAppendix}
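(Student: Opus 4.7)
The plan is to prove the three conditions of \refdef{implem} separately, leveraging the One-Step Simulation Lemma (\reflemma{one-step-simulation}) proved just above in the appendix, which already packages together all five axioms of an implementation system into a single convenient statement.

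For the \emph{executions-to-derivations} direction, I proceed by induction on the length of the execution $\exec \colon \compil\tm \tomachine^* \state$. The empty execution is handled by observing that $\decode{\compil\tm} = \tm$, so the empty $\tostrat$-derivation suffices. For the inductive step, split $\exec$ as $\compil\tm \tomachine^* \state' \tomachine \state$; the inductive hypothesis provides a derivation $\deriv' \colon \tm \tostrat^* \decode{\state'}$ with $\size{\deriv'} = \sizebeta{\exec'}$ (where $\exec'$ is the initial segment of $\exec$). If the last transition is overhead, overhead transparency gives $\decode{\state'} = \decode\state$, and $\deriv'$ already witnesses the claim without adding any $\tostrat$-step. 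If it is a $\beta$-transition, $\beta$-projection extends $\deriv'$ by exactly one $\tostrat$-step ending in $\decode\state$. The $\beta$-matching requirement then follows from a parallel count: overhead transitions contribute no $\tostrat$-steps, and each $\beta$-transition contributes exactly one.

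For the \emph{derivations-to-executions} direction, I proceed by induction on the length of the derivation $\deriv \colon \tm \tostrat^* \tmtwo$. The empty case is witnessed by the empty execution from $\compil\tm$. For the inductive step, write $\deriv$ as $\tm \tostrat^* \tmthree \tostrat \tmtwo$; the inductive hypothesis supplies an execution $\exec' \colon \compil\tm \tomachine^* \state'$ with $\decode{\state'} = \tmthree$ and the matching $\beta$-count. Since $\decode{\state'} \tostrat \tmtwo$, \reflemma{one-step-simulation} provides a state $\state$ with $\state' \tomacho^* \tomachb \state$, and determinism of $\tostrat$ forces $\decode\state = \tmtwo$ (any other $\tostrat$-successor of $\decode{\state'}$ would contradict determinism). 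Concatenation yields the required execution, and by construction exactly one new $\beta$-transition is added, preserving $\beta$-matching.

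All of the conceptually non-trivial work has already been absorbed into \reflemma{one-step-simulation}, which is where the five conditions of an implementation system are genuinely used. What remains, as outlined above, is essentially a routine bookkeeping induction in each direction; the only subtle point is the invocation of determinism of $\tostrat$ in the derivations-to-executions step to pin down $\decode\state = \tmtwo$. No real obstacle is expected.
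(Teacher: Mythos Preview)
Your proposal is correct and follows essentially the same approach as the paper. The only minor difference is in the \emph{executions-to-derivations} direction: you induct on the total length $\size\exec$ of the execution and split off the last transition (case-splitting on overhead vs.\ $\beta$), whereas the paper inducts on $\sizebeta\exec$ and splits the execution at the last $\beta$-transition, treating the trailing overhead segment in one go via overhead transparency. Both decompositions work equally well; yours is arguably the more standard one-step-at-a-time induction, while the paper's makes the $\beta$-count bookkeeping slightly more direct. Your handling of the determinism point in the \emph{derivations-to-executions} direction is also exactly what the paper does (the conclusion $\decode\state = \tmtwo$ is established inside the proof of \reflemma{one-step-simulation} rather than in its statement, so your explicit invocation of determinism here is appropriate).
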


\begin{proof}
  According to \refdef{implem}, given a $\lambda$-term $\tm$, we have to show that:
  \begin{enumerate}[(i)]
    \item\label{p:exec-to-deriv} \emph{Executions to Derivations with $\beta$-matching}: for any $\mach$-execution $\exec: \compil\tm \tomachine^* \state$ there exists a $\tostrat$-derivation $\deriv: \tm \tostrat^* \decode\state$ such that $\size\deriv = \sizebeta\exec$.

    \item\label{p:deriv-to-exec} \emph{Derivations to Executions with $\beta$-matching}: for every $\tostrat$-derivation $\deriv: \tm \tostrat^* \tmtwo$ there exists a $\mach$-execution $\exec: \compil\tm \tomachine^* \state$ such that $\decode\state = \tmtwo$ and $\size\deriv = \sizebeta\exec$.
  \end{enumerate}

  \paragraph{Proof of \refeqpoint{exec-to-deriv}:}  by induction on $\sizebeta\exec \in \nat$.
  
  If $\sizebeta\exec = 0$ then $\exec \colon \compil\tm \tomacho^* \state$ and hence $\decode{\compil\tm} = \decode\state$ by overhead transparency (\refpoint{def-overhead-transparency} of \refdef{implementation}).
  Moreover, $\tm = \decode{\compil\tm}$ since decoding is the inverse of compilation on initial states, therefore we are done by taking the empty (\ie without steps) derivation $\deriv$ with starting (and end) term $\tm$.
  
  Suppose $\sizebeta\exec > 0$: then, $\exec \colon \compil{\tm} \tomachine^* \state$ is the concatenation of an execution $\execp \colon \compil{\tm} \tomachine^* \statetwo$ followed by an execution $\execpp \colon \statetwo \tomachb \statethree \tomacho^* \state$.
  By \ih applied to $\execp$, there exists a derivation $\derivp \colon \tm \to^* \decode\statetwo$ with $\sizebeta\execp = \size\derivp$.
  By $\beta$-projection (\refpoint{def-beta-projection} of \refdef{implementation}) and overhead transparency (\refpoint{def-overhead-transparency} of \refdef{implementation}) applied to $\execpp$, one has $\derivpp \colon \decode\statetwo \to \decode\statethree = \decode\state$.
  Therefore, the derivation 
  $\deriv$ defined as the concatenation of $\derivp$ and $\derivpp$ is such that $\deriv \colon \tm \to^* \decode\state$ and $\size\deriv = \size\derivp + \size\derivpp = \sizebeta\execp + 1 = \sizebeta\exec$.
 
  \paragraph{Proof of \refeqpoint{deriv-to-exec}:}  by induction on $\size\deriv \in \nat$.

  If $\size\deriv = 0$ then $\tm = \tmtwo$.
  Since decoding is the inverse of compilation on initial states, one has $\decode{\compil{\tm}} = \tm$.
  We are done by taking the empty (\ie without transitions) execution $\exec$ with initial (and final) state $\compil{\tm}$.
  
  Suppose $\size\deriv > 0$: so, $\deriv\colon \tm \to^* \tmtwo$ is the concatenation of a derivation $\derivp \colon \tm \to^* \tmtwop$ followed by the step $\tmtwop \to \tmtwo$.
  By \ih, there exists a $\mach$-execution $\execp\colon \compil\tm \tomachine^* \statetwo$ such that $\decode\statetwo = \tmtwop$ and $\size\derivp = \sizebeta\execp$.
  According to the one-step simulation (\reflemma{one-step-simulation}, since $\decode{\statetwo} \to \tmtwo$ and $(\mach,\to,\decode\cdot)$ is an implementation system), there is a state $\state$ of $\mach$ such that $\statetwo \tomacho^*\tomachb \state$ and $\decode\state = \tmtwo$.
  Therefore, the execution $\exec \colon \compil\tm \tomachine^*\statetwo \tomacho^*\tomachb \state$ is such that $\sizebeta\exec = \sizebeta\execp +1 = \size\derivp + 1 = \size\deriv$.
\end{proof}

\subsection{Proofs of Section~\ref{sect:eglamour} (\eglamour)}

First we prove the invariants of the \eglamour, and then we use them to prove that it forms an implementation system with respect to right-to-left evaluation $\torf$ in the fireball calculus (via the decoding).

\setcounter{lemmaAppendix}{\value{l:eglamour-invariants}}
\begin{lemmaAppendix}[\eglamour Invariants]
	\label{lappendix:eglamour-invariants} 
	Let
\NoteState{l:eglamour-invariants}
	$\state = (\dump,\code,\stack,\genv)$ be a reachable state. Then:
	\begin{varenumerate}
		\item \emph{Name:} \label{pappendix:eglamour-invariants-name}
		\begin{varenumerate}
			\item \emph{Explicit Substitution}: \label{pappendix:eglamour-invariants-name-es}
			if $\genv = \genvtwo  \esub\var\codetwo \genvthree$  then $\var$ is fresh wrt $\codetwo$ and $\genvthree$;

			\item \emph{Abstraction}: \label{pappendix:eglamour-invariants-name-abs}
			if $\la\var\codetwo$ is a subterm of $\dump$, $\code$, $\stack$, or $\genv$ then $\var$ may occur only in $\codetwo$;
		\end{varenumerate}
		
		\item \label{pappendix:eglamour-invariants-fireball-stack}\emph{Fireball Item}: $\decode\stackitem$ and $\relunf{\decode\stackitem}\genv$ are inert terms if $\stackitem = \pair\var\stacktwo$ and abstractions otherwise, for every item $\stackitem$ in $\stack$, in $\genv$, and in every stack in $\dump$;

		\item \label{pappendix:eglamour-invariants-ev-ctx}\emph{Contextual Decoding}: $\stctx\state = \relunf{\decdumpp\decstack}\genv$ is a right context.

		
	\end{varenumerate}
\end{lemmaAppendix}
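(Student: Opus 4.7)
The proof goes by induction on the length of the execution $\exec \colon \compil{\tm_0} \tomach^* \state$. The base case is immediate since an initial state has shape $\glamst\stempty\code\stempty\stempty$ with $\code$ well-named by definition of compilation: the environment and dump are empty, so points~\ref{pappendix:eglamour-invariants-name-es}, \ref{pappendix:eglamour-invariants-fireball-stack}, and~\ref{pappendix:eglamour-invariants-ev-ctx} hold vacuously (the decoded stack-dump context is $\ctxhole$, which is a right context), while well-namedness of $\code$ gives \ref{pappendix:eglamour-invariants-name-abs}. For the inductive step, assume the invariants hold on $\state$ and consider a transition $\state \tomach \statetwo$; we inspect the five transitions in turn.

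First I would handle the \emph{Name} invariants \ref{pappendix:eglamour-invariants-name-es}--\ref{pappendix:eglamour-invariants-name-abs}. The commutative transitions $\tomachcone$, $\tomachctwo$, $\tomachcthree$ only rearrange data without introducing abstractions or environment entries, so they preserve both parts trivially. For $\tomachsm$, the newly added entry $\esub\var\stackitem$ uses a bound variable $\var$ coming from an abstraction $\la\var\codetwo$ in $\code$; by the abstraction invariant on $\state$, $\var$ occurs only inside $\codetwo$, hence is fresh for the rest of the state, including the old $\genv$ and $\stackitem$, giving \ref{pappendix:eglamour-invariants-name-es}, and the abstraction invariant is preserved because $\codetwo$ is a subterm of the old $\code$. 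For $\tomachse$, the $\alpha$-renaming $\rename{\la\vartwo\codetwo}$ is explicitly chosen with bound names fresh with respect to all other components, which is what \ref{pappendix:eglamour-invariants-name-abs} demands; \ref{pappendix:eglamour-invariants-name-es} is preserved because the environment is not modified.

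Next, for the \emph{Fireball Item} invariant~\ref{pappendix:eglamour-invariants-fireball-stack}, the only transitions that push new items are $\tomachctwo$, $\tomachcthree$ and $\tomachsm$ (to the environment). In $\tomachctwo$ the pushed item is $\pair{\la\var\codetwo}{\stempty}$, whose decoding is the abstraction $\la\var\codetwo$; applying $\relunf{}\genv$ yields again an abstraction by the name invariant. In $\tomachcthree$ the pushed item is $\pair\var\stacktwo$; by the side condition $\genv(\var)=\undef$ or $\genv(\var)=\pair\vartwo\stackthree$, together with the inductive hypothesis on the items in $\stacktwo$, the decoding $\decstacktwop\var$ is an inert term, and $\relunf{}\genv$ preserves inertness precisely because variables bound in $\genv$ are themselves mapped to variables or abstractions—here \reflemma{inerts-and-creations} and \reflemma{inert-anti-sub-bis} are the conceptual backbone. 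The $\tomachsm$ case pushes an item that was already in $\stack$, so the invariant is inherited.

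Finally for~\ref{pappendix:eglamour-invariants-ev-ctx}, I would argue that $\stctx\state$ is always built from the grammar of right contexts: the dump contributes entries of the form $\decdumpp{\decstackp{\code\ctxhole}}$, placing $\code$ to the left of the hole, which fits the production $\tm\revctx$; the stack, by the Fireball Item invariant just established, produces only arguments that are fireballs when decoded under $\genv$, matching the production $\revctx\fire$. Each transition either leaves $\stctx\state$ unchanged modulo the decoding equations or extends it by a production of the right-context grammar (e.g.\ $\tomachcone$ extends by $\tm\revctx$, while $\tomachctwo$ and $\tomachcthree$ extend by $\revctx\fire$ with the fireball status guaranteed by~\ref{pappendix:eglamour-invariants-fireball-stack}). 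The step $\tomachsm$ may shrink $\stack$ and enlarge $\genv$, but the shape of $\stctx\state$ as a right context is preserved because the substituted item was already known to decode to a fireball. The main delicate point, and the one I expect to require the most care, is threading the \emph{Name} invariant through the computation of $\relunf{\cdot}\genv$ so that capture is avoided and fireballs stay fireballs under the unfolding—this is what makes the simultaneous proof of the three clauses necessary rather than independent.
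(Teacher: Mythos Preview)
Your plan is correct and follows the same route as the paper: induction on the length of the execution, with a case analysis on the last transition and the three invariants proved simultaneously.

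One point should be sharpened. In the $\tomachsm$ case for the \emph{Fireball Item} invariant you write that the transition ``pushes an item that was already in $\stack$, so the invariant is inherited.'' This covers only the new environment entry. But the environment itself changes from $\genv$ to $\esub\var\stackitem\genv$, so for \emph{every} item $\stackitemtwo$ already present in the stack, the dump, or the old environment you must check that $\relunf{\decode\stackitemtwo}{\esub\var\stackitem\genv}$ is still of the required shape. The paper discharges this by invoking the \emph{Abstraction} name invariant on the source state: since $\var$ may occur only in the body $\code$, the added substitution $\esub\var\stackitem$ acts as the identity on all other items, so $\relunf{}{\esub\var\stackitem\genv}$ coincides with $\relunf{}\genv$ there. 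Your closing remark about ``threading the Name invariant through $\relunf\cdot\genv$'' is exactly this, but it needs to appear inside the case analysis rather than as an afterthought. The same observation is what makes the \emph{Contextual Decoding} case for $\tomachsm$ go through.

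A minor stylistic difference: for $\tomachcthree$ in the Fireball Item invariant, the paper does not invoke the external lemmas on inert substitutions that you cite; it argues directly from the inductive hypothesis applied to the environment entry $\pair\vartwo\stackthree$ (whose unfolding is an inert term by \ih) and to the items of $\stacktwo$. Your route via \reflemma{inert-anti-sub-bis} would also work, but the direct appeal to the \ih\ is lighter.
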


\begin{proof}
By induction on the length of the execution leading to the reachable state. In an initial state all the invariants trivially hold. For a non-empty execution the proof for every invariant is by case analysis on the last transition, using the \ih. 
\begin{enumerate}
 \item \emph{Name.} Cases:
\begin{enumerate}
 \item $\statetwo = \glamst\dump {\code\codetwo} \stack \genv 
	\tomachcone
	\glamst { \dump\cons\dentry\code\stack } \codetwo \stempty \genv = \state$. Both points follow immediately from the \ih

  \item $\statetwo = 
	  \glamst{ \dump\cons\dentry\code\stack }{ \la\var\codetwo } \stempty  \genv
	  \tomachctwo 
	  \glamst \dump \code {\pair{\la\var\codetwo}{\stempty} \cons\stack} \genv	
	 = \state$. Both points follow immediately from the \ih

%

  \item $\statetwo = 
	  \glamst {\dump\cons\dentry\code\stack} \var \stacktwo \genv
	  \tomachcthree
	  \glamst \dump \code {\pair{\var}{\stacktwo}\cons\stack} \genv 
	 = \state$ with $\genv(\var)=\bot$ or $\genv(\var) = \pair{\vartwo}{\stackthree}$. Both points follow immediately from the \ih

  \item $\statetwo = 
	  \glamst \dump {\la\var\code} {\stackitem\cons\stack} \genv
	  \tomachsm
	  \glamst \dump \code \stack {\esub\var\stackitem\genv}
	 = \state$. \refpoint{eglamour-invariants-name-es} for the new entry in the environment follows from the \ih for \refpoint{eglamour-invariants-name-abs}, for the other entries from the \ih for \refpoint{eglamour-invariants-name-es}. \refpoint{eglamour-invariants-name-abs} follows from its \ih
	  
  \item \begin{equation*}
          \begin{split}
	 \statetwo &= 
	  \glamst \dump \var \stack {\genv_1\esub\var{\pair{\la\vartwo\codetwo}\stempty}\genv_2} \\
	  &\tomachse
	  \glamst \dump {\rename{\la\vartwo\codetwo}} \stack {\genv_1\esub\var{\pair{\la\vartwo\codetwo}\stempty}\genv_2}
	   = \state.
	 \end{split}
       \end{equation*} \refpoint{eglamour-invariants-name-es} follows from its \ih. \refpoint{eglamour-invariants-name-abs} for the new code is guaranteed by the $\alpha$-renaming operation $\rename{\la\vartwo\codetwo}$, the rest follows from its \ih
\end{enumerate} 
 
 \item \emph{Fireball Item}. Cases:
\begin{enumerate}
 \item $\statetwo = \glamst\dump {\code\codetwo} \stack \genv 
	\tomachcone
	\glamst { \dump\cons\dentry\code\stack } \codetwo \stempty \genv = \state$. It follows from the \ih

  \item $\statetwo = 
	  \glamst{ \dump\cons\dentry\code\stack }{ \la\var\codetwo } \stempty  \genv
	  \tomachctwo 
	  \glamst \dump \code {\pair{\la\var\codetwo}{\stempty} \cons\stack} \genv	
	 = \state$.  For $\pair{\la\var\codetwo}{\stempty}$ we have that $\decode{\pair{\la\var\codetwo}{\stempty}}$ and $\relunf{\decode{\pair{\la\var\codetwo}{\stempty}}}\genv = \relunf{(\la\var\codetwo)}\genv = \la\var\relunf{\codetwo}\genv$ are abstractions, and hence fireballs.	 
	 For all other items the invariant follows from the \ih

  \item $\statetwo = 
	  \glamst {\dump\cons\dentry\code\stack} \var \stacktwo {\genv}
	  \tomachcthree
	  \glamst \dump \code {\pair{\var}{\stacktwo}\cons\stack} {\genv}
	 \allowbreak= \state$ with $\genv(\var) = \bot$ or $\genv(\var) = {\pair\vartwo\stackthree}$. 
	 For $\pair\var\stacktwo$, we have that $\decode{\pair\var\stacktwo} = \ctxholep{\var}\decode\stacktwo$ and $\relunf{\decode{\pair\var\stacktwo}}\genv = \ctxholep{\relunf{\var}\genv}(\relunf{\decode\stacktwo}\genv)$.
	 By \ih, $\decode\stackitemtwo$ is a fireball for every item $\stackitemtwo$ in $\stacktwo$. 
	 Therefore, $\decode{\pair\var\stacktwo}$ is an inert term.
	 Concerning $\relunf{\decode{\pair\var\stacktwo}}\genv$, there are two subcases:
	 \begin{enumerate}
	   \item $\genv(\var) = \pair{\vartwo}{\stackthree}$ \ie $\genv \defeq \genv_1\esub\var{\pair\vartwo\stackthree}\genv_2$. 
	   By \reflemmapp{eglamour-invariants}{name}{es}, every ES in $\genv$ binds a different variable, so $\relunf{\var}\genv = \relunf{\var}{\genv_1\esub\var{\pair\vartwo\stackthree}\genv_2} = \relunf{\var}{\genv_1}\relunf{\isub\var{\decode{\pair{\vartwo}{\stackthree}}}}{\genv_2} = \relunf{\decode{\pair{\vartwo}{\stackthree}}}{\genv_2} = \relunf{\decode{\pair{\vartwo}{\stackthree}}}{\genv}$, that by \ih is an inert term. 
	 Moreover, the \ih also gives that $\relunf{\stackitemtwo}\genv$ is a fireball for every item $\stackitemtwo$ in $\stacktwo$. 
	 Therefore  $\relunf{\decode{\pair\var\stacktwo}}\genv = \ctxholep{\relunf{\var}\genv}(\relunf{\decode\stacktwo}\genv)$ is an inert term. 

	   \item $\genv(\var)=\bot$. Similar to the previous case. 
	 By hypothesis, we have $\relunf{\var}\genv = \var$. 
	 As before, by \ih $\relunf{\decode\stackitemtwo}\genv$ is a  fireball for every item $\stackitemtwo$ in $\stacktwo$. 
	 So,  $\relunf{\decode{\pair\var\stacktwo}}\genv = \ctxholep{\relunf{\var}\genv}(\relunf{\decode\stacktwo}\genv)  = \ctxholep{\var}(\relunf{\decode\stacktwo}\genv)$ is an inert term. 
	 \end{enumerate}
	 For all other items in $\state$ the invariant follows from the \ih

  \item $\statetwo = 
	  \glamst \dump {\la\var\code} {\stackitem\cons\stack} \genv
	  \tomachsm
	  \glamst \dump \code \stack {\esub\var\stackitem\genv}
	 = \state$. By \reflemmapp{eglamour-invariants}{name}{abs} $\var$ may occur only in $\code$. 
	 Thus the substitution $\relunf{}{\esub\var\stackitem\genv}$ acts exactly as $\relunf{}\genv$ on every item in $\state$. Then the invariant follows from the \ih
	  
  \item \begin{equation*}
	 \begin{split}
	 \statetwo &= 
	  \glamst \dump \var \stack {\genv_1\esub\var{\pair{\la\vartwo\codetwo}\stempty}\genv_2} \\
	  &\tomachse
	  \glamst \dump {\rename{\la\vartwo\codetwo}} \stack {\genv_1\esub\var{\pair{\la\vartwo\codetwo}\stempty}\genv_2}
	 = \state.
         \end{split}
       \end{equation*}
 It follows from the \ih

\end{enumerate} 

\item \emph{Contextual Decoding}. Cases:
\begin{enumerate}
 \item $\statetwo = \glamst\dump {\code\codetwo} \stack \genv 
	\tomachcone
	\glamst { \dump\cons\dentry\code\stack } \codetwo \stempty \genv = \state$. By \ih $\stctx\statetwo = \relunf{\decdumpp\decstack}\genv$ is a right context, as well as $\relunf{\codetwo}\genv \ctxhole$. Then their composition $(\relunf{\decdumpp\decstack}\genv)\ctxholep{\relunf{\codetwo}\genv \ctxhole} \allowbreak= \relunf{\decdumpp{\decstackp{\codetwo \ctxhole}}}\genv = \stctx\state$ is a right context.

  \item $\statetwo = 
	  \glamst{ \dump\cons\dentry\code\stack }{ \la\var\codetwo } \stempty  \genv
	  \tomachctwo 
	  \glamst \dump \code {\pair{\la\var\codetwo}\stempty \cons\stack} \genv	
	 \allowbreak= \state$. 
	 By \ih $\stctx\statetwo = \relunf{\decode{\dump\cons\dentry\code\stack}}\genv = \relunf{\decdumpp{\decodeinv\stack{\code\ctxhole}}}\genv$ is a right context, that implies that $\relunf{\decdumpp{\decstack}}\genv$ is one such context as well. So, $\stctx\statetwo \!= \relunf{\decdumpp{\decode{\pair{\la\var\codetwo}\stempty \cons\stack}}}\genv = \relunf{\decdumpp{\decodeinv\stack{\ctxhole\la\var\codetwo}}}\genv \allowbreak= (\relunf{\decdumpp{\decstack}}\genv)\ctxholep{\ctxhole\relunf{\la\var\codetwo}\genv}$ is a right context, because it is the composition of right context, given that $\relunf{\la\var\codetwo}\genv$ is a fireball.

%
  \item $\statetwo = 
	  \glamst {\dump\cons\dentry\code\stack} \var \stacktwo {\genv}
	  \tomachcthree
	  \glamst \dump \code {\pair\var\stacktwo \!\cons\stack} {\genv}
	 \allowbreak= \state$ with $\genv(\var)=\bot$ or with $\genv(\var)= \pair{\vartwo}{\stackthree}$. 
	 By \ih $\stctx\statetwo = \relunf{\decodep{\dump\cons\dentry\code\stack}{\decode\stacktwo}}\genv = \relunf{\decdumpp{\decodeinv\stack{\code(\decode\stacktwo)}}}\genv$ is a right context, that implies that $\relunf{\decdumpp{\decstack}}\genv$ is one such context as well. Then 
	 $\stctx\state = \relunf{\decdumpp{\decode{\pair\var\stacktwo\cons\stack}}}\genv = 
	 \relunf{\decdumpp{\decodeinv\stack{\ctxhole\decode{\pair\var\stacktwo}}}}\genv = 
	 (\relunf{\decdumpp{\decstack}}\genv)\ctxholep{\relunf{\ctxhole\decode{\pair\var\stacktwo}}\genv}$ is a right context, because it is the composition of right context, given that $\relunf{\decode{\pair\var\stacktwo}}\genv$ is a fireball by \reflemmap{eglamour-invariants}{fireball-stack}.

  \item $\statetwo = 
	  \glamst \dump {\la\var\code} {\stackitem\cons\stack} \genv
	  \tomachsm
	  \glamst \dump \code \stack {\esub\var\stackitem\genv}
	 = \state$. 
	  By \ih $\stctx\statetwo = \relunf{\decdumpp{\decode{\stackitem\cons\stack}}}\genv$ is a right context, that implies that $\relunf{\decdumpp{\decstack}}\genv$ is one such context as well. 
	  Now, note that $\stctx\state = \relunf{\decdumpp{\decstack}}{\esub\var\stackitem\genv} = \relunf{\decdumpp{\decstack}}\genv$ because by \reflemmapp{eglamour-invariants}{name}{abs} $\var$ may occur only in $\code$, and so the substitution $\relunf{}{\esub\var\stackitem\genv}$ acts on every code in $\dump$ and $\stack$ exactly as $\relunf{}\genv$. 
	  
  \item \begin{equation*}
          \begin{split}
	 \statetwo &= 
	  \glamst \dump \var \stack {\genv_1\esub\var{\pair{\la\vartwo\codetwo}\stempty}\genv_2} \\
	  &\tomachse 
	  \glamst \dump {\rename{\la\vartwo\codetwo}} \stack {\genv_1\esub\var{\pair{\la\vartwo\codetwo}\stempty}\genv_2}
	 = \state.
	  \end{split}
        \end{equation*} 
	 It follows by the \ih because $\stctx\statetwo = \stctx\state$, as the only component that changes is the code.
  \qedhere
\end{enumerate}
\end{enumerate}
\end{proof}

\begin{note}\label{note:union-transitions}
  Given a machine $\mach$, a transition is a binary relation on the set of states of $\mach$. 
  Given two transitions $\tomachhole{\rsym_1}$ and $\tomachhole{\rsym_2}$, we set $\tomachhole{\rsym_1,\rsym_2} \ \defeq \ \tomachhole{\rsym_1}  \!\cup \tomachhole{\rsym_2}$ (also denoted by $\tomachhole{\rsym_{1,2}}$ or simply $\tomachhole{\rsym}$).
\end{note}

\paragraph{Conditions for an Implementation System.} We now prove that the \eglamour satisfies the conditions for an implementation system with respect to $\torf$. First, we deal with the two conditions about the projection of transitions on the calculus.
\begin{lemma}[\eglamour $\beta$-Projection and Overhead Transparency]
  \label{l:eglamour-trans-projection} 
	Let %
	$\state$ be a reachable state.
	\begin{varenumerate}
		\item \label{p:eglamour-trans-projection-exp-com} \emph{Overhead Transparency}: 
		if $\state\tomachhole{\ssym,\csym_{1,2,3}}\statetwo$ (see Note \ref{note:union-transitions} for the meaning of $\tomachhole{\ssym,\csym_{1,2,3}}$) then $\decode\state=\decode\statetwo$;
		\item \label{p:eglamour-trans-projection-mult}\emph{$\beta$-Projection}: if $\state\tomachsm\statetwo$ then $\decode\state\torf\decode\statetwo$.	  
		
	\end{varenumerate}
\end{lemma}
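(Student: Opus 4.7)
The plan is to proceed by case analysis on the transition used, computing both $\decode\state$ and $\decode\statetwo$ explicitly via the decoding formula $\decode{\state} = \relunf{\decdumpp{\decstackp\code}}\genv$ and using the invariants of \reflemma{eglamour-invariants} where needed.

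For the three commutative transitions, the strategy is to show that both sides of the transition decode to the very same term, exploiting the fact that these transitions merely rearrange code, stack, and dump without modifying the environment. Concretely, $\tomachcone$ just shuffles $\codetwo$ from the code position into a fresh dump entry and both sides decode to $\relunf{\decdumpp{\decstackp{\code\codetwo}}}\genv$; $\tomachctwo$ pops an entry $\dentry\code\stack$ and pushes $\pair{\la\var\codetwo}\stempty$ on $\stack$, so both sides decode to $\relunf{\decdumpp{\decstackp{\code(\la\var\codetwo)}}}\genv$; $\tomachcthree$ is analogous with $\pair\var\stacktwo$, and both sides decode to $\relunf{\decdumpp{\decstackp{\code(\decstacktwop\var)}}}\genv$. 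In all three cases equality is a direct unfolding of the definitions of $\decode{\dump\cons\dentry\code\stack}$, $\decode{\stackitem\cons\stack}$, and $\decode{\pair\code\stack}$ in \reffig{eglamour}.

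For the substitution transition $\tomachse$, the only change from $\statetwo$ to $\state$ is that the code $\var$ is replaced by $\rename{\la\vartwo\codetwo}$, while the environment is unchanged and contains $\esub\var{\pair{\la\vartwo\codetwo}\stempty}$. By the explicit-substitution name invariant (\reflemmapp{eglamour-invariants}{name}{es}) no earlier entry of $\genv_1$ binds $\var$, so unfolding yields $\relunf\var{\genv_1\esub\var{\pair{\la\vartwo\codetwo}\stempty}\genv_2} = \relunf{\la\vartwo\codetwo}{\genv_2}$, and because $\rename{\la\vartwo\codetwo}$ is $\alpha$-equivalent to $\la\vartwo\codetwo$ and calculus terms are identified up to $\alpha$, the decoded term is the same.

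For the $\beta$-transition $\tomachsm$, the plan is to identify a fired $\torf$-redex at the decoded level. Decoding $\statetwo = \glamst\dump{\la\var\code}{\stackitem\cons\stack}\genv$ gives $\stctx{\statetwo}\ctxholep{(\la\var{\relunf\code\genv})\,\relunf{\decode\stackitem}\genv}$, using the abstraction name invariant (\reflemmapp{eglamour-invariants}{name}{abs}) to commute $\relunf{}\genv$ with $\lambda$; decoding $\state = \glamst\dump\code\stack{\esub\var\stackitem\genv}$ gives $\stctx{\statetwo}\ctxholep{\relunf\code\genv\isub\var{\relunf{\decode\stackitem}\genv}}$ by the same invariant (ensuring $\stctx{\state}=\stctx{\statetwo}$ since $\var$ does not appear in $\dump$ or $\stack$). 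By the fireball-item invariant (\reflemmap{eglamour-invariants}{fireball-stack}), $\relunf{\decode\stackitem}\genv$ is a fireball, so this is a legal $\rtof$-redex; by the contextual-decoding invariant (\reflemmap{eglamour-invariants}{ev-ctx}), $\stctx{\statetwo}$ is a right context, so the redex fires as a $\torf$-step. The main obstacle is the careful bookkeeping of $\relunf{\cdot}\genv$ and its interaction with contexts and abstractions, which is exactly what the three invariants are designed to handle, so each case reduces to an unfolding of definitions modulo invocation of the appropriate invariant.
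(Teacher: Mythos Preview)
Your proof is correct and follows the same case analysis as the paper's: unfold the decoding for each transition, use the name invariants to justify commuting substitutions, the fireball-item invariant to ensure the argument is a fireball, and the contextual-decoding invariant to get a right context. One small notational slip to fix in the $\tomachsm$ case: you write $\stctx{\state}=\stctx{\statetwo}$, but the source state has stack $\stackitem\cons\stack$ while the target has stack $\stack$, so their $\stctx{\cdot}$ differ by the extra argument $\decode\stackitem$; what you actually want (and what the paper uses) is the sub-context $\relunf{\decdumpp{\decstack}}\genv$, which is indeed a right context and is shared by both decodings.
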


\begin{proof}
Transitions:
\begin{enumerate}
 \item $\state = \glamst\dump {\code\codetwo} \stack \genv 
	\tomachcone
	\glamst { \dump\cons\dentry\code\stack } \codetwo \stempty \genv = \statetwo$.
	Then
	\[\begin{array}{rclclcl}
	   \decode\state 
	   & = & \relunf{\decodep{\dump}{\decodeinv\stack{\code\codetwo}}}\genv\\
	   & = & \relunf{\decodep{\dump\cons\dentry\code\stack}{\codetwo}}\genv \\
 	   & = & \relunf{\decodep{\dump\cons\dentry\code\stack}{\decodeinv\stempty\codetwo}}\genv 
 	   & = & \decode\statetwo
	  \end{array}\]

  \item $\state = 
	  \glamst{ \dump\cons\dentry\code\stack }{ \la\var\codetwo } \stempty  \genv
	  \tomachctwo 
	  \glamst \dump \code {\pair{\la\var\codetwo}{\stempty} \cons\stack} \genv	
	 = \statetwo$.
	 Then
	\[\begin{array}{rclclcl}
	   \decode\state 
	   & = & \relunf{\decodep{\dump\cons\dentry\code\stack}{\decodeinv\stempty{\la\var\codetwo}}}\genv\\	   
	   & = & \relunf{\decodep{\dump}{\decodeinv{\stack}{\code(\decodeinv\stempty{\la\var\codetwo})}}}\genv \\
	   & = & \relunf{\decodep{\dump}{\decodeinv{\pair{\la\var\codetwo}{\stempty}\cons\stack}\code}}\genv 
 	   & = & \decode\statetwo
	  \end{array}\]

%
  \item $\state = 
	  \glamst {\dump\cons\dentry\code\stack} \var \stacktwo {\genv}
	  \tomachcthree
	  \glamst \dump \code {\pair{\var}{\stacktwo}\cons\stack} {\genv}
	 = \statetwo$ with $\genv(\var) = \bot$ or $\genv(\var) = {\pair{\vartwo}{\stackthree}}$.
	  Then
	\[\begin{array}{rclclcl}
	   \decode\state 
	   & = & \relunf{\decodep{\dump\cons\dentry\code\stack}{\decodeinv\stacktwo\var}}\genv\\
	   & = & \relunf{\decodep{\dump}{\decodeinv\stack{\code(\decodeinv\stacktwo\var)}}}\genv\\
	   & = & \relunf{\decodep{\dump}{\decodeinv{\pair\var\stacktwo \cons\stack}\code}}\genv
 	   & = & \decode\statetwo
	  \end{array}\]

  \item $\state = 
	  \glamst \dump {\la\var\code} {\stackitem\cons\stack} \genv
	  \tomachsm
	  \glamst \dump \code \stack {\esub\var\stackitem\genv}
	 = \statetwo$.
	  Then
	\[\begin{array}{rclclcl}
	   \decode\state 
	   & = & \relunf{\decodep{\dump}{\decodeinv{\stackitem\cons\stack}{\la\var\code}}}{\genv}\\
	   & = & \relunf{\decodep{\dump}{\decodeinv{\stack}{(\la\var\code)\decode\stackitem}}}{\genv}\\
	   & \torf & \relunf{\decodep{\dump}{\decodeinv{\stack}{\code \isub\var{\decode\stackitem}}}}{\genv}\\
	   & = & \relunf{\decodep{\dump}{\decodeinv{\stack}\code}\isub\var{\decode\stackitem}}{\genv}\\
	   & = & \relunf{\decodep{\dump}{\decodeinv{\stack}\code}}{\esub\var\stackitem\genv}
 	   & = & \decode\statetwo
	  \end{array}\]
	  where the rewriting step takes place because
	  \begin{enumerate}
	   \item $\relunf{\decodep{\dump}{\decode\stack}}{\genv}$ is a right context by \reflemmap{eglamour-invariants}{ev-ctx};
	   \item $\decode\stackitem$ is a fireball by \reflemmap{eglamour-invariants}{fireball-stack}.
	  \end{enumerate}
	  Moreover, the meta-level substitution $\isub\var{\decode\stackitem}$ can be extruded (in the equality step after the rewriting) without renaming $\var$, because by \reflemmapp{eglamour-invariants}{name}{abs} $\var$ does not occur in $\dump$ nor $\stack$.
	  
  \item \begin{equation*}
          \begin{split}
            \state &= 
	  \glamst \dump \var \stack {\genv_1\esub\var{\pair{\la\vartwo\codetwo}\stempty}\genv_2} \\
	  &\tomachse
	  \glamst \dump {\rename{\la\vartwo\codetwo}} \stack {\genv_1\esub\var{\pair{\la\vartwo\codetwo}\stempty}\genv_2}
	 = \statetwo.
          \end{split}
        \end{equation*}
         Let $\genvtwo \defeq \genv_1\esub\var{\pair{\la\vartwo\codetwo}\stempty}\genv_2$.
	   Then
	\[\begin{array}{rclclcl}
	   \decode\state 
	   & = & \relunf{\decodep{\dump}{\decodeinv\stack\var}}{\genvtwo}\\
	   & = & \decodep{\relunf{\dump}{\genvtwo}}{\decodeinv{\relunf{\stack}{\genvtwo}}{\relunf{\var}{\genvtwo}}}\\
	   & = & \decodep{\relunf{\dump}{\genvtwo}}{\decodeinv{\relunf{\stack}{\genvtwo}}{\relunf{{\la\vartwo\codetwo}}{\genvtwo}}}\\
	   & = & \relunf{\decodep{\dump}{\decodeinv\stack{\la\vartwo\codetwo}}}{\genvtwo}
 	   & = & \decode\statetwo
	  \end{array}\]
\qedhere
\end{enumerate} 
\end{proof}

We also need a lemma for the progress condition.

\begin{lemma}[\eglamour Progress]
\label{l:eglamour-progress}
  Let %
  $\state$ be a reachable final state.
  Then $\decode\state$ is fireball, \ie it is $\betaf$-normal. 
\end{lemma}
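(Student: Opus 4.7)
The plan is to analyze which transitions are blocked in a final state and then read off from the invariants of \reflemma{eglamour-invariants} that the decoding is a fireball (which by open harmony, \refpropp{distinctive-fireball}{open-harmony}, is equivalent to being $\betaf$-normal). First I would go through the five transitions of the \eglamour and determine when each is inapplicable to a state $\state = (\dump,\code,\stack,\genv)$: $\tomachcone$ is blocked iff $\code$ is not an application; $\tomachctwo$ and $\tomachcthree$ together force $\dump = \stempty$ (otherwise, with $\dump$ non-empty, either an abstraction code with empty stack or a variable code whose environment entry is undefined or of the form $\pair\vartwo\stacktwo$ would enable one of them); $\tomachsm$ is blocked when $\code$ is not an abstraction facing a non-empty stack; and $\tomachse$ is blocked when $\code$ is not a variable $\var$ with $\genv(\var)$ of the form $\pair{\la\vartwo\codetwo}\stempty$. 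Combining these constraints yields that a reachable final state must have $\dump = \stempty$ together with one of the two subcases: (i) $\code = \la\var\codetwo$ and $\stack = \stempty$, or (ii) $\code = \var$ with $\genv(\var) = \undef$ or $\genv(\var) = \pair\vartwo\stacktwo$.

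Next I would compute $\decode\state$ in each subcase using the decoding in \reffig{eglamour} and the invariants. In subcase (i), $\decode\state = \relunf{\la\var\codetwo}\genv$, which by \reflemmapp{eglamour-invariants}{name}{abs} (freshness of the bound $\var$ with respect to $\genv$) equals $\la\var(\relunf\codetwo\genv)$, an abstraction and thus a fireball. In subcase (ii), $\decode\state = (\relunf{\decstack}\genv)\ctxholep{\relunf\var\genv}$. Here the fireball item invariant \reflemmap{eglamour-invariants}{fireball-stack} is the key input: it directly gives that $\relunf{\decode\stackitem}\genv$ is a fireball for every item $\stackitem$ in $\stack$, and the same reasoning applied to the environment entry for $\var$ gives that $\relunf\var\genv$ is an inert term ($\var$ itself if $\genv(\var) = \undef$, otherwise the inert term $\relunf{\decode{\pair\vartwo\stacktwo}}\genv$).

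Finally, a short induction on the length of $\stack$ shows that plugging an inert term into the stack context $\relunf{\decstack}\genv$ (which appends arguments to the right) yields an inert term, because applying an inert term to a fireball is again inert. Combining both subcases, $\decode\state$ is a fireball, which by \refpropp{distinctive-fireball}{open-harmony} is the same as being $\betaf$-normal. No serious obstacle is expected: the transition-by-transition disabling analysis is mechanical, and the only non-trivial ingredient is the fireball item invariant, which is already proved in \reflemma{eglamour-invariants}; the final induction on the stack is routine once that invariant is in hand.
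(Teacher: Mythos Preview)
Your proposal is correct and follows essentially the same route as the paper: inspect which transitions are blocked, conclude that a final state has empty dump and either an unapplied abstraction or a variable as code, and then use the fireball item invariant (\reflemmap{eglamour-invariants}{fireball-stack}) to see that the decoding is a fireball.

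One point worth noting: your case (ii) is in fact more careful than the paper's own proof. The paper writes the second case as $\state = \glamst\stempty\var\stack\genv$ with $\genv(\var) = \undef$ only, whereas you correctly observe that $\genv(\var) = \pair\vartwo\stacktwo$ is also possible in a final state (no transition fires, since $\tomachse$ requires an abstraction entry and $\tomachcthree$ requires a non-empty dump). Such states are reachable---for instance from $(\la\var.\var\var)\,\varthree$. Your handling of this subcase via the fireball item invariant (so that $\relunf\var\genv$ is itself inert) is the right fix, and mirrors the reasoning the paper uses inside the proof of the invariant itself.
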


\begin{proof}
 An immediate inspection of the transitions shows that in a final state the code cannot be an application and the dump is necessarily empty. In fact, final states have one of the following two shapes:
 \begin{varenumerate}
  \item \emph{Top-Level Unapplied Abstraction}, \ie $\state = \glamst\stempty{\la\var\code}\stempty\genv$. Then $\decode\state = \relunf{(\la\var\code)}\genv = \la\var\relunf{\code}\genv$ that is a fireball.
  
  \item \emph{Top-Level Free Variable or Inert Term with Free Head}, \ie $\state = \glamst\stempty\var\stack\genv$ with $\genv(\var) = \undef$. Then $\decode\state = \relunf{(\decodeinv\stack\var)}\genv = \ctxholep{\relunf{\var}\genv}(\relunf{\decode\stack}\genv) = \ctxholep\var(\relunf{\decode\stack}\genv)$. Now, by the fireball item invariant (\reflemmap{eglamour-invariants}{fireball-stack}) every element of $\relunf{\decode\stack}\genv$ is a fireball, and so $\ctxholep\var(\relunf{\decode\stack}\genv)$ is an inert term, \ie a fireball.\qedhere
  \end{varenumerate}
\end{proof}

%
%

Finally, we obtain the implementation theorem.

\setcounter{theoremAppendix}{\value{thm:eglamour-implementation}}
\begin{theoremAppendix}[\eglamour Implementation]
  \label{thmappendix:eglamour-implementation}
  The 
  \NoteState{thm:eglamour-implementation}
  \eglamour implements right-to-left evaluation $\torf$ in $\firecalc$ (via the decoding $\decode\cdot$).
\end{theoremAppendix}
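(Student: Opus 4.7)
The plan is to apply the abstract criterion provided by \refth{abs-impl}, so what I have to do is verify that the triple $(\eglamour, \torf, \decode\cdot)$ forms an implementation system in the sense of \refdef{implementation}. Two of the five required conditions are already proved in the paper: \emph{$\beta$-projection} and \emph{overhead transparency} are \reflemmasps{eglamour-trans-projection}{mult}{exp-com} (where $\tomachb = \tomachsm$ and $\tomacho = \tomachhole{\ssym,\csym_{1,2,3}}$), and \emph{progress} is \reflemma{eglamour-progress}. So the remaining work is to establish \emph{determinism} and \emph{termination of overhead transitions}.

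First I would handle \emph{determinism} of both sides. For $\torf$ it is exactly \reflemmap{prop-of-torf}{determ}. For the \eglamour, determinism follows from a straightforward inspection of the transition table in \reffig{eglamour}: the shape of the code selects which of $\tomachcone$, $\tomachctwo$-or-$\tomachcthree$, $\tomachsm$, or $\tomachse$ can fire (application vs.\ abstraction vs.\ variable), the shape of the stack and dump then discriminates between $\tomachctwo$ and $\tomachsm$ (empty vs.\ nonempty stack) and between a $\tomachctwo/\tomachcthree$ step and no step (nonempty vs.\ empty dump), and finally for a variable the side condition on $\genv(\var)$ partitions the cases $\tomachcthree$ versus $\tomachse$. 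Since the side condition of $\tomachcthree$ covers exactly the complement of the side condition of $\tomachse$ (once the stack/dump shape is fixed), no two transitions can fire on the same state.

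Second I would address \emph{termination of overhead transitions}. The commutative transitions $\tomachhole{\csym_{1,2,3}}$ can be shown to terminate by a simple lexicographic measure: $\tomachcone$ strictly decreases the size of the code (moving from $\code\codetwo$ to $\codetwo$) while $\tomachctwo$ and $\tomachcthree$ strictly decrease the dump (they pop an entry) without changing the code part that is traversed, so no infinite sequence of commutative transitions exists. For $\tomachse$, termination is slightly subtler because the substitution transition rewrites a variable to an $\alpha$-renamed abstraction, possibly reintroducing new variables. However, by \reflemmapp{eglamour-invariants}{name}{abs} and \reflemmapp{eglamour-invariants}{name}{es} (name invariants), the abstractions stored in $\genv$ only bind fresh names; more importantly, the only overhead transitions that can follow an $\tomachse$ step are $\tomachhole{\csym}$ transitions operating on the freshly substituted abstraction, whose size is bounded by a subterm of the initial term (which anticipates the subterm invariant \reflemma{subterm-invariant} of the next section). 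A cleaner way, avoiding that forward reference, is to observe that any maximal sequence of overhead transitions starting from $\state$ must end in a state whose code is either an abstraction in front of an empty stack-and-nonempty-dump or a variable in a position where $\tomachse$ does not apply, and to bound such sequences using a lexicographic measure $(\sizem{\state\downarrow}, \sizecom\state)$, exploiting the fact that $\tomachse$ is always followed by commutative steps that strictly shrink the commutative measure before any further $\tomachse$ can fire.

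The main obstacle is precisely this termination argument for $\tomacho$, because $\tomachse$ is not itself size-decreasing: the right measure must combine the dump/stack/code structure with the fact that substituted abstractions originate from a finite environment and trigger only a bounded amount of commutative work. Once all five conditions are in place, \refth{abs-impl} applies directly and yields the implementation result, including the matching of the number of $\beta$-transitions with the length of the simulated $\torf$-derivation.
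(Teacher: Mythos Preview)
Your overall plan---apply \refth{abs-impl} by checking the five conditions of \refdef{implementation}---is exactly what the paper does, and your treatment of $\beta$-projection, overhead transparency, progress, and determinism matches the paper's proof essentially line by line.

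The divergence is in \emph{overhead termination}, and here your argument has a genuine gap. The paper does not attempt a self-contained termination proof at this point: it simply forward-references the quantitative bounds of \refcoro{exp-bilinear} and \reflemma{comm-bound}, which give finite upper bounds on $\sizee\exec$ and $\sizecom\exec$ as a function of $\sizem\exec$ and $\size{\tm_0}$, and hence termination of $\tomacho$ as a corollary. Your first attempt (via the subterm invariant) is in the same spirit but stops short of an actual bound: knowing that the substituted abstraction has bounded size does not by itself prevent an unbounded number of $\tomachse$ steps. Your ``cleaner way'' does not work as stated: the measure $(\sizem{\state\!\downarrow},\sizecom\state)$ is not defined, and the claim that $\tomachse$ is always followed by commutative steps that shrink $\sizecom\cdot$ enough before the next $\tomachse$ is neither stated precisely nor true in general (after $\tomachse$ the commutative measure \emph{increases}, and you must globally bound the number of such increases, not locally).

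If you want a direct argument avoiding the forward reference, the right measure is the free-occurrence size $\sizefree\cdot$ of \refsect{eg-compl-anal}: each $\tomachse$ strictly decreases $\sizefree\state$ (a variable of free size~$1$ is replaced by an abstraction of free size~$0$) while commutative transitions leave it unchanged, so in any $\tomacho$-sequence the number of $\tomachse$ steps is bounded by the starting value of $\sizefree\state$; combined with your (correct) observation that commutative transitions strictly decrease $\sizecom\cdot$, this yields termination. But this is really just a local version of the paper's \reflemma{free-occ-inv}, so the cleanest fix is to do what the paper does and cite the forthcoming complexity lemmas.
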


\begin{proof}
  According to \refthm{abs-impl}, it is enough to show that the \eglamour and the right-to-left evaluation $\torf$ and the decoding $\decode\cdot$ form an implementation system, \ie that the 
  five conditions in \refdef{implementation} hold
  .
%
  Note that substitution ($\tomachse$) and commutative ($\tomachhole{\csym_{1,2,3}}$) transitions are considered as overhead transitions.  
  \begin{varenumerate}
	\item \emph{$\beta$-Projection}: $\state \tomachb \statetwo$ implies $\decode\state \tostrat \decode\statetwo$ by \reflemmap{eglamour-trans-projection}{mult}.
	\item \emph{Overhead Transparency}: $\state \tomachhole{\ssym, \csym_{1,2,3}} \statetwo$ implies $\decode\state = \decode\statetwo$ by \reflemmap{eglamour-trans-projection}{exp-com} (recall that $\tomachhole{\ssym, \csym_{1,2,3}} \ = \ \tomachhole{\ssym} \cup \tomachhole{\csym_1} \cup \tomachhole{\csym_2} \cup \tomachhole{\csym_3}$ according to Note~\ref{note:union-transitions}).
	\item 	\emph{Overhead Transitions Terminate}: Termination of  $\tomachhole{\ssym, \csym_{1,2,3}}$ is given by forthcoming \reflemma{comm-bound} and \refcoro{exp-bilinear}, which are postponed because they actually give precise complexity bounds, not just termination.
	\item \emph{Determinism}: The \eglamour machine is deterministic, as it can be seen by an easy inspection of the transitions (see \reffig{eglamour}).
	\reflemmap{prop-of-torf}{determ} proves that $\torf$ is deterministic.
	\item \emph{Progress}: Let $\state$ be an \eglamour final state. By \reflemma{eglamour-progress}, $\decode\state$ is a $\betaf$-normal term, in particular it is $\torf$-normal because $\torf \, \subseteq \, \tof$.
	\qedhere
  \end{varenumerate}

\end{proof}

\subsection{Proofs of Section~\ref{sect:eg-compl-anal} (Complexity Analysis of the \eglamour)}

\setcounter{lemmaAppendix}{\value{l:subterm-invariant}}
\begin{lemmaAppendix}[Subterm Invariant]
\label{lappendix:subterm-invariant} 
	Let
\NoteState{l:subterm-invariant}
	$\exec: \compil{\tm_0} \tomach^* (\dump,\code,\stack,\genv)$ be a \eglamour execution. If $\la\var\codetwo$ is a subterm of $\dump$, $\code$, $\stack$, or $\genv$ then it is a subterm of $\tm_0$.
\end{lemmaAppendix}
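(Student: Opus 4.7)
The plan is to proceed by induction on the length $n$ of the execution $\exec$, where the subterm relation is to be understood up to $\alpha$-equivalence (because of the on-the-fly renaming $\rename{\cdot}$ in transition $\tomachse$, the bound variables of a subterm in the state may differ from those in $\tm_0$, but crucially the size and structural shape are preserved).

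For the base case $n=0$, the state is $\compil{\tm_0} = \glamst\stempty\code\stempty\stempty$ where $\code$ is $\alpha$-equivalent to $\tm_0$, so the statement holds trivially (the dump, stack, and environment contain no terms, and every subterm of $\code$ is a subterm of $\tm_0$ up to $\alpha$). For the inductive step, assume the invariant for $\statetwo$ and consider a transition $\statetwo \tomach \state$. I would analyse each of the five transitions in turn, using the fact that the set of sub-abstractions appearing in the components of $\state$ is, in each case, obtained from those of $\statetwo$ by a combination of (i) rearrangements that move existing subterms between components (no new abstractions introduced), and (ii) at most a renaming. Concretely: $\tomachcone$, $\tomachctwo$, $\tomachcthree$, and $\tomachsm$ only rearrange existing pieces of code among dump, code, stack, and environment, so the set of sub-abstractions in the state is unchanged and the IH applies directly.

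The only delicate case is the substitution transition $\tomachse$, where the new code is $\rename{\la\vartwo\codetwo}$, a fresh $\alpha$-renaming of the abstraction $\la\vartwo\codetwo$ retrieved from the environment entry $\esub\var{\pair{\la\vartwo\codetwo}\stempty}$ of $\statetwo$. By the IH applied to $\statetwo$, $\la\vartwo\codetwo$ is (up to $\alpha$) a subterm of $\tm_0$, and hence so is its $\alpha$-renamed copy $\rename{\la\vartwo\codetwo}$; all other components of $\state$ coincide with those of $\statetwo$, so the IH again suffices. This closes the induction.

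The main (minor) obstacle is book-keeping: making the ``up to $\alpha$'' reading of ``subterm'' precise enough to cover the renaming in $\tomachse$ without obscuring the proof. This can be handled by noting once and for all that $\alpha$-renaming preserves both the multiset of sub-abstractions (modulo $\alpha$) and their sizes, so that the statement is equivalent to the assertion that no transition ever builds a genuinely new abstraction from smaller pieces --- an assertion which is manifest from the shape of the five transitions, since none of them has an abstraction on the right-hand side that does not already appear on the left-hand side (the only syntactic abstractions appearing in the right-hand sides are literally copied from the left-hand side, possibly $\alpha$-renamed).
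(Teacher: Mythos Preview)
Your proof is essentially correct and follows the same inductive structure as the paper: induction on the length of the execution, with a straightforward case analysis on the last transition, the only non-trivial case being $\tomachse$.

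There is, however, a small imprecision in your formalization of ``subterm up to $\alpha$-equivalence'' that the paper handles more carefully. You write that after $\tomachse$ ``the bound variables of a subterm in the state may differ from those in $\tm_0$''. In fact, the \emph{free} variables of a sub-abstraction may also differ: if $\la\varthree\codethree$ occurs strictly inside $\la\vartwo\codetwo$, then in $\rename{\la\vartwo\codetwo}$ the corresponding sub-abstraction has its occurrences of $\vartwo$ (free in $\la\varthree\codethree$ but bound in $\la\vartwo\codetwo$) replaced by a fresh name. So the renamed sub-abstraction is not in general $\alpha$-equivalent to the original one as a standalone term, and ``the multiset of sub-abstractions modulo $\alpha$'' is not literally preserved. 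The paper resolves this by adopting a coarser notion of subterm: define $\tm^-$ as $\tm$ with \emph{all} variable occurrences (free and bound, including binders) replaced by a fixed symbol $\ast$, and declare $\tmtwo$ a subterm of $\tm$ whenever $\tmtwo^-$ is a subterm of $\tm^-$. This is exactly your ``structural shape'' intuition made precise, and it is what is actually needed for the complexity argument (only the size bound matters). With this adjustment your argument goes through verbatim.
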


\begin{proof}
First of all, let us be precise about \emph{subterms}: for us, $\codetwo$ is a subterm of $\tm_0$ if it does so up to variable names, both free and bound (and so the distinction between terms and codes is irrelevant). More precisely: define $\tm^-$ as $\tm$ in which all variables (including those appearing in binders) are replaced by a fixed symbol $\ast$. Then, we will consider $\tmtwo$ to be a subterm of $\tm$ whenever $\tmtwo^-$ is a subterm of $\tm^-$ in the usual sense. The key property ensured by this definition is that the size $\size\codetwo$ of $\codetwo$ is bounded by $\size\code$.

Now, the proof is by induction on the length of the execution leading to the reachable state. In an initial state the invariant trivially holds. For a non-empty execution the proof is by a straightforward case analysis on the last transition, always relying on the \ih\qedhere
\end{proof}

\setcounter{lemmaAppendix}{\value{l:free-occ-inv}}
\begin{lemmaAppendix}[Free Occurrences Invariant]
\label{lappendix:free-occ-inv}
  Let%
\NoteState{l:free-occ-inv} 
  $\exec: \compil{\tm_0} \tomach^* \state$ be a \eglamour execution.
  Then $\sizefree\state \leq \sizefree{\tm_0} + \size{\tm_0} \cdot \sizem\exec - \sizee \exec$.
\end{lemmaAppendix}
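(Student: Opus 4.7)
The plan is to proceed by induction on the length of the execution $\exec$, with the inductive step splitting into a case analysis on the last transition. The base case is immediate: if $\exec$ is empty then $\state = \compil{\tm_0}$ has empty dump, stack, and environment, so $\sizefree\state = \sizefree{\tm_0}$ while $\sizem\exec = \sizee\exec = 0$, and the inequality holds with equality. For the inductive step, write $\exec$ as the concatenation of $\exec'$ and a single transition $\statetwo \tomach \state$, and examine $\sizefree\state - \sizefree\statetwo$ by inspection of the transition.

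First I would dispatch the commutative transitions $\tomachcone, \tomachctwo, \tomachcthree$. In each of them the components of the state are merely rearranged: a subterm is moved between code, dump, and stack, where---extending $\sizefree{\cdot}$ to stack items in the obvious way via $\sizefree{\pair\code\stack} \defeq \sizefree\code + \sizefree\stack$---routine unfolding of the definition shows $\sizefree\state = \sizefree\statetwo$. Since $\sizem\exec = \sizem{\exec'}$ and $\sizee\exec = \sizee{\exec'}$ in these cases, the claim follows from the IH. The substitution transition $\tomachse$ replaces a variable occurrence $\var$ (contributing $1$ to $\sizefree{\cdot}$) by an $\alpha$-renamed abstraction $\rename{\la\vartwo\codetwo}$ (contributing $0$, as $\sizefree{\la\vartwo\codetwo} \defeq 0$). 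Hence $\sizefree\state = \sizefree\statetwo - 1$; combined with $\sizee\exec = \sizee{\exec'} + 1$ and $\sizem\exec = \sizem{\exec'}$, the IH immediately gives the desired bound.

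The crucial---and really only interesting---case is the $\beta$-transition $\tomachsm$, where the body $\code$ of an abstraction $\la\var\code$ becomes the new code, so that occurrences of free variables of $\code$ which were previously hidden under a $\lambda$ (and therefore not counted) are now exposed. Concretely, unfolding the definitions yields
\[
\sizefree\state - \sizefree\statetwo \;=\; \sizefree\code - \sizefree\stackitem \;\leq\; \sizefree\code,
\]
since $\sizefree\stackitem \geq 0$. The key step is now to invoke the subterm invariant (\reflemma{subterm-invariant}): since $\la\var\code$ occurs in $\statetwo$, it is a subterm of $\tm_0$, and therefore $\sizefree\code \leq \size\code \leq \size{\tm_0}$. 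Combining with $\sizem\exec = \sizem{\exec'}+1$ and $\sizee\exec = \sizee{\exec'}$, the IH gives
\[
\sizefree\state \leq \sizefree\statetwo + \size{\tm_0} \leq \sizefree{\tm_0} + \size{\tm_0}\cdot\sizem\exec - \sizee\exec,
\]
as required.

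The main (and rather mild) obstacle is precisely this $\beta$-case: it is the only transition that can increase $\sizefree{\cdot}$, and without the subterm invariant the increase would \textit{a priori} be unbounded. The whole point of the statement is to record that this growth is bounded by $\size{\tm_0}$ at each $\beta$-step, while each substitution transition conversely \emph{decreases} $\sizefree{\cdot}$ by exactly one---which, in the subsequent \refcor{exp-bilinear}, yields the bilinear bound $\sizee\exec \leq (1+\sizem\exec)\cdot\size{\tm_0}$.
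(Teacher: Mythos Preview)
Your proof is correct and follows essentially the same approach as the paper: induction on the length of the execution, with the commutative transitions leaving $\sizefree\cdot$ unchanged, the substitution transition decreasing it by exactly $1$, and the $\beta$-transition increasing it by at most $\size{\tm_0}$ via the subterm invariant. Your presentation is in fact slightly cleaner than the paper's, which carries out the $\beta$-case as a chain of (in)equalities but with the same content.
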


\begin{proof}
  By induction on $\size\exec$. Case $\size\exec=0$ is obvious, since $\compil{\tm_0} = \state$. Otherwise
  $\exectwo:\compil{\tm_0} \tomach^* \statetwo$ and $\exec$ extends $\exectwo$ with $\statetwo \tomach \state$.
  By \ih, $\sizefree\statetwo \leq \sizefree{\tm_0} + \size{\tm_0} \cdot \sizem\exectwo - \sizee \exectwo$. Cases (the notation refers to the transitions of the machine, in \reffig{eglamour}):

  \begin{itemize}
    \item\emph{the last transition is a substitution transition.} We have to show $\sizefree\state \leq \sizefree\code + \size{\tm_0} \cdot \sizem\exec - \sizee \exec$. It follows from the \ih and 
    \begin{itemize}
      \item $\sizefree\state = \sizefree\statetwo -1$ because dump and stack do not change and the code changes from a variable (of measure 1) to an abstraction (of measure 0);
      \item $\sizem\exec = \sizem\exectwo$;
      \item $\sizee\exec = \sizee\exectwo + 1$;
    \end{itemize}
    
    \item \emph{the last transition is a $\beta$-transition}. For $\tomachsm$: 

    \begin{align*}
    \sizefree \exec & = \sizefree\dump + \sizefree\stack + \sizefree\code\\
		    & \leq \sizefree\dump + \sizefree{\fire\cons\stack} + \sizefree\code &&\!\mbox{($\sizefree\fire \geq 0$)}\\
		    & = \sizefree\dump + \sizefree{\la\var\code} + \sizefree{{\la\vartwo\codetwo}\cons\stack} + \sizefree\code &&\!\mbox{($\sizefree{\la\var\code} = 0$)}\\
		    & = \sizefree\statetwo + \sizefree\code &&\!\mbox{(def. of $\sizefree\statetwo$)}\\
		    & = \sizefree\statetwo + \size{\tm_0} &&\!\mbox{(\reflemma{subterm-invariant})}\\
		    & \leq \sizefree{\tm_0} + \size{\tm_0}\cdot\sizem\exectwo - \sizee \exectwo + \size{\tm_0} &&\!\mbox{(\ih)}\\
		    & = \sizefree{\tm_0} + \size{\tm_0}\cdot(\sizem\exectwo +1)- \sizee \exectwo \\
		    & = \sizefree{\tm_0} + \size{\tm_0}\cdot\sizem\exec- \sizee \exec 
    \end{align*}

    \item \emph{the last transition is a commutative transition}. Note that (sub)terms and stacks are moved around but never erased, never duplicated, and never modified. Moreover no new pieces of code are introduced, so that the measure never changes. Since also $\sizem\exec$ and $\sizee \exec$ do not change, the statement follows from the \ih \qedhere
  \end{itemize}
\end{proof}

\setcounter{corollaryAppendix}{\value{coro:exp-bilinear}}
\begin{corollaryAppendix}[Bilinear Number of Substitution Transitions]
\label{coroappendix:exp-bilinear}
  Let%
\NoteState{coro:exp-bilinear}
  $\exec: \compil{\tm_0} \tomach^* \state$ be a \eglamour execution.
  Then $\sizee \exec \leq  (1 + \sizem\exec) \cdot \size{\tm_0}$.
\end{corollaryAppendix}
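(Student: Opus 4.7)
The corollary is an immediate consequence of the Free Occurrences Invariant (\reflemma{free-occ-inv}), which states $\sizefree\state \leq \sizefree{\tm_0} + \size{\tm_0} \cdot \sizem\exec - \sizee \exec$ for any reachable state $\state$ along an execution $\exec$. The plan is to exploit the fact that $\sizefree{\cdot}$ is a count of occurrences and hence non-negative.

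First I would observe that $\sizefree\state \geq 0$ by definition, since it sums up non-negative contributions from each component of the state. Rearranging the inequality of \reflemma{free-occ-inv}, this yields
\[
\sizee\exec \;\leq\; \sizefree{\tm_0} + \size{\tm_0} \cdot \sizem\exec.
\]
Next, I would bound the initial free size by the total size: $\sizefree{\tm_0} \leq \size{\tm_0}$, since $\sizefree{\cdot}$ counts a subset of the symbols counted by $\size{\cdot}$ (namely, free variable occurrences not under abstractions). Substituting gives
\[
\sizee\exec \;\leq\; \size{\tm_0} + \size{\tm_0} \cdot \sizem\exec \;=\; (1 + \sizem\exec) \cdot \size{\tm_0},
\]
which is the desired bilinear bound.

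There is no real obstacle here: the work has already been done in establishing \reflemma{free-occ-inv}, where the measure $\sizefree{\cdot}$ was designed precisely so that each substitution transition decreases it by one (the substituted variable occurrence disappears, and abstractions contribute zero to the measure), while each $\beta$-transition increases it by at most $\size{\tm_0}$ (thanks to the subterm invariant \reflemma{subterm-invariant}, only subterms of $\tm_0$ can be duplicated). The corollary is simply the arithmetic payoff of this accounting.
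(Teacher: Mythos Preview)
Your proof is correct and follows essentially the same approach as the paper: invoke \reflemma{free-occ-inv}, use non-negativity of $\sizefree\state$ to drop that term, and then bound $\sizefree{\tm_0} \leq \size{\tm_0}$ to conclude. The additional explanatory paragraph about why the measure behaves well under each transition is accurate commentary on the content of \reflemma{free-occ-inv}, though not needed for the corollary itself.
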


\begin{proof}
  By \reflemma{free-occ-inv}, $\sizee \exec\leq \sizefree{\tm_0} + \size{\tm_0}\cdot\sizem\exec - \sizefree\state$, that implies $\sizee \exec\leq \sizefree{\tm_0} + \size{\tm_0}\cdot\sizem\exec$. 
  The statement follows from the fact that $\sizefree{\tm_0} \leq \size{\tm_0}$.\qedhere
\end{proof}

\setcounter{lemmaAppendix}{\value{l:comm-bound}}
\begin{lemmaAppendix}[Number of Commutative Transitions]
\label{lappendix:comm-bound} 
  For%
\NoteState{l:comm-bound}
  $\exec: \compil{\tm_0} \tomach^* \state$ be an \eglamour execution. Then 
  $\sizecom\exec \leq  \sizecom\exec + \sizecom\state \leq (1+\sizee\exec)\cdot \size{\tm_0} \in O((1+\sizem\exec) \cdot \size{\tm_0}^2)$.
\end{lemmaAppendix}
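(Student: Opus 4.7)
The strategy is a classical potential-function argument based on the commutative size $\sizecom{\cdot}$. I would proceed by induction on the length of the execution $\exec$, showing that at each step the quantity $\sizecom\exec + \sizem\exec + \sizecom\state$ grows by at most a controlled amount, while the other transitions only decrease the potential. The key inequality to prove by induction is
\[
  \sizecom\state + \sizecom\exec + \sizem\exec \;\leq\; \size{\tm_0} + \sizee\exec\cdot(\size{\tm_0}-1),
\]
from which the bound $\sizecom\exec \leq \sizecom\exec + \sizecom\state \leq (1+\sizee\exec)\cdot\size{\tm_0}$ follows (since $\sizecom\state\geq 0$ and $\sizem\exec\geq 0$). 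The final asymptotic bound $O((1+\sizem\exec)\cdot\size{\tm_0}^2)$ is then obtained by substituting the bilinear bound $\sizee\exec \leq (1+\sizem\exec)\cdot\size{\tm_0}$ from \refcor{exp-bilinear}.

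The base case is immediate, since $\sizecom{\compil{\tm_0}} = \size{\tm_0}$ (the initial dump and stack are empty, and the initial code is $\alpha$-equivalent to $\tm_0$, hence of the same size). For the inductive step, I would do a case analysis on the last transition $\statetwo \tomach \state$:
\begin{itemize}
\item for $\tomachcone$, the code $\code\codetwo$ of size $\size\code+\size\codetwo+1$ is split between the new dump entry and the new code, so $\sizecom\state = \sizecom\statetwo - 1$;
\item for $\tomachctwo$, the abstraction $\la\var\codetwo$ disappears from the state (it is moved into the stack, which does not contribute to $\sizecom$), decreasing $\sizecom$ by $\size\codetwo+1\geq 1$;
\item for $\tomachcthree$, the variable $\var$ is moved into the stack, giving a decrease of $1$;
\item for $\tomachsm$, the abstraction disappears and its body becomes the new code, giving a decrease of exactly $1$;
\item for $\tomachse$, the variable $\var$ is replaced by the renamed abstraction $\rename{\la\vartwo\codetwo}$, so $\sizecom\state = \sizecom\statetwo + \size{\la\vartwo\codetwo} - 1$.
\end{itemize}

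The only nontrivial step is the last one, and here the main ingredient is the subterm invariant (\reflemma{subterm-invariant}), which ensures that $\la\vartwo\codetwo$ is a subterm of $\tm_0$ and therefore $\size{\la\vartwo\codetwo} \leq \size{\tm_0}$, so a substitution transition enlarges $\sizecom$ by at most $\size{\tm_0}-1$. Combining: each of the four transitions $\tomachc_1,\tomachc_2,\tomachc_3,\tomachsm$ contributes $+1$ to $\sizecom\exec + \sizem\exec$ while decreasing $\sizecom$ by at least $1$; each substitution transition contributes $+1$ to $\sizee\exec$ while possibly increasing $\sizecom$ by at most $\size{\tm_0}-1$. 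Summing over $\exec$ yields the displayed inductive inequality, and the lemma follows.

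I do not foresee any real obstacle: the only delicate point is to keep track of the exact potential drop of each commutative transition to make sure that the $-\sizecom\exec$ term correctly cancels the sum of all commutative decreases. Since every commutative transition decreases $\sizecom$ by at least $1$, the accounting is tight and the bound $(1+\sizee\exec)\cdot\size{\tm_0}$ is reached.
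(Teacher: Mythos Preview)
Your proposal is correct and follows essentially the same potential-function argument as the paper: induction on the length of $\exec$, case analysis on the last transition showing that commutative and $\beta$-transitions decrease $\sizecom{\cdot}$ by at least $1$ while substitution transitions increase it by at most $\size{\tm_0}-1$ via the subterm invariant, and then plugging in \refcor{exp-bilinear} for the final asymptotic bound. The only cosmetic difference is that you fold $\sizem\exec$ into the inductive invariant (obtaining the slightly sharper right-hand side $\size{\tm_0}+\sizee\exec\cdot(\size{\tm_0}-1)$), whereas the paper keeps the invariant as $\sizecom\exec+\sizecom\state\leq(1+\sizee\exec)\cdot\size{\tm_0}$ and simply observes that the $\beta$-transition makes the left-hand side decrease; both accountings are equivalent.
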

\begin{proof}
First, note that $\sizecom\exec \leq  \sizecom\exec + \sizecom\state$ since $\sizecom\state \geq 0$.
We prove that $\sizecom\exec + \sizecom\state \leq (1+\sizee{\exec})\cdot \size{\tm_0}$ by induction on the length of the execution $\exec$. 

\emph{Base case} (empty execution): then, $\compil{\tm_0} = \state$ and $\sizecom{\exec} = 0 = \sizee{\exec}$, thus the property collapses on the tautology $\size{\code_0} \leq \size{\tm_0}$. 

\emph{Inductive case}: let $\statetwo\tomach \state$ be the
last transition of $\exec$ and let $\exectwo$ be the prefix of $\exec$ ending on $\statetwo$. The statement holds for $\statetwo$ by the \ih, \ie $\sizecom\exectwo + \sizecom\statetwo \leq (1+\sizee\exectwo) \cdot \size{\tm_0}$. We now show that the statement hold by analyzing the various cases of $\statetwo\tomach \state$ and showing that the inequality holds also after the transition:
\begin{itemize}
 \item \emph{Commutative Transitions} $\tomachcone$: the rule splits the code $\code\codetwo$ between the dump and the code. Therefore, $\sizecom\state = \sizecom\statetwo - 1$ while clearly $\sizecom\exec = \sizecom\exectwo + 1$, that is the lhs does not change. The rhs does not change either, and so the inequality is preserved. 
 
 \item \emph{Commutative Transitions} $\tomachcp{_{2,3}}$: these rules consume the current code, so $\sizecom\state \leq \sizecom\statetwo - 1$. Since clearly $\sizecom\exec = \sizecom\exectwo + 1$, it follows that the lhs either decreases or stays the same. The rhs does not change either, and so the inequality is preserved. 
 
 \item \emph{$\beta$-Transition $\tomachm$}: trivial, as the lhs decreases of 1 (because the $\l$ of the abstraction is consumed) and the rhs does not change. 
 
 \item \emph{Substitution Transition $\tomachse$}: it modifies the current code by replacing a variable (of size 1) with an abstraction coming from the environment. Because of the subterm invariant (\reflemma{subterm-invariant}), the abstraction is a subterm of $\tm_0$ and so the increment of the lhs is bounded by $\size{\tm_0}$. We have $\sizee\exec = \sizee\exectwo +1$  and so the rhs increases of $\size{\tm_0}$, that is, the inequality still holds.

\end{itemize}

This ends the proof of $\sizecom\exec + \sizecom\state \leq (1+\sizee{\exec})\cdot \size{\tm_0}$.
Now, substituting the bound given by \refcoro{exp-bilinear} into $\sizecom\exec + \sizecom\state \leq (1+\sizee{\exec})\cdot \size{\tm_0}$ we obtain
    \begin{equation*}
      \sizecom\exec + \sizecom\state  \leq\! (1+\sizee{\exec})\cdot \size{\tm_0} \leq (1+(1+\sizem\exec) \cdot \size{\tm_0}) \cdot \size{\tm_0} = (1+\sizem{\exec}\!)\cdot{\size{\tm_0}}^2 + \size{\tm_0}
    \end{equation*}

Then $\sizecom\exec + \sizecom\state$, and thus $\sizecom\exec$, is bound by $O((1+\sizem{\exec})\cdot{\size{\tm_0}}^2)$.\qedhere
\end{proof}

\setcounter{theoremAppendix}{\value{thm:eglamour-overhead-bound}}
\begin{theoremAppendix}[\eglamour Overhead Bound]
\label{thmappendix:eglamour-overhead-bound}
  Let%
\NoteState{thm:eglamour-overhead-bound}
$\exec: \compil{\tm_0} \tomach^* \state$ be a \eglamour execution. Then $\exec$ is implementable on RAM in $O((1+\sizem{\exec})\cdot \size{\tm_0}^2)$, \ie linear in the number of $\beta$-transitions (aka the length of the derivation $\deriv$ implemented by $\exec$) and quadratic in the size of the initial term $\tm_0$.
\end{theoremAppendix}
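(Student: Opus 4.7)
The plan is to follow the three-step recipe laid out at the end of \refsect{machines-intro}, which has effectively already been carried out piecewise in the preceding lemmas; what remains is to assemble the pieces. First I fix the hypotheses on the concrete RAM representation stated in \refsect{eg-compl-anal}: variables are memory locations, variable occurrences are references, and the global environment supports random access (via the reference attached to the occurrence being substituted). Under these hypotheses each single transition has a known cost, which I would justify by a direct inspection of \reffig{eglamour}: the commutative transitions $\tomachc$ only shuffle pointers between dump, code, stack, and environment and so run in $O(1)$; the $\beta$-transition $\tomachsm$ creates one new environment entry and also runs in $O(1)$; and the substitution transition $\tomachse$ needs to look up the location $\var$ (constant time, by random access) and to produce a fresh copy $\rename{\la\vartwo\codetwo}$, whose cost is $O(\size{\la\vartwo\codetwo})$ and hence $O(\size{\tm_0})$ thanks to the Subterm Invariant (\reflemma{subterm-invariant}).

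Second, I would collect the counting bounds already established. By \refcor{exp-bilinear}, the number of substitution transitions satisfies $\sizee\exec \leq (1+\sizem\exec)\cdot\size{\tm_0}$; by \reflemma{comm-bound}, the number of commutative transitions satisfies $\sizecom\exec \in O((1+\sizem\exec)\cdot\size{\tm_0}^2)$; the number of $\beta$-transitions is $\sizem\exec$ by definition.

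Finally, I would sum the contributions of the three kinds of transition, each multiplied by its individual cost:
\begin{align*}
\mathrm{cost}(\exec) \;&\in\; O(\sizem\exec)\cdot O(1) \;+\; O((1+\sizem\exec)\cdot\size{\tm_0})\cdot O(\size{\tm_0}) \\
&\phantom{\in\;}\;+\; O((1+\sizem\exec)\cdot\size{\tm_0}^2)\cdot O(1) \\
&\in\; O((1+\sizem\exec)\cdot\size{\tm_0}^2),
\end{align*}
which is the claimed bound. The dominant term is either the cost of substitutions (their number is bilinear but each costs $O(\size{\tm_0})$) or the sheer number of commutative transitions (each of constant cost), and both coincide at $O((1+\sizem\exec)\cdot\size{\tm_0}^2)$.

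I do not expect any genuine obstacle here since all the delicate reasoning (the free-occurrence measure for \refcor{exp-bilinear}, the commutative-size measure for \reflemma{comm-bound}, and the Subterm Invariant) is already packaged. The only point requiring a modicum of care is making explicit that the global environment lookup in $\tomachse$ is $O(1)$ rather than linear in the length of $\genv$; this is precisely what the RAM hypothesis on random access to $\genv$ is for, and it must be flagged in the proof so that the list-based presentation of $\genv$ in \reffig{eglamour} is not mistakenly charged for a sequential traversal.
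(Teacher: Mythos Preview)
Your proposal is correct and follows essentially the same approach as the paper: sum over the three kinds of transitions, using \refcor{exp-bilinear} and \reflemma{comm-bound} for the counts and the Subterm Invariant plus the RAM hypotheses for the per-transition costs. The paper's proof is organized identically (split into the three transition kinds, each contributing at most $O((1+\sizem\exec)\cdot\size{\tm_0}^2)$), so there is nothing to add.
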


\begin{proof}
  The cost of implementing $\exec$ is the sum of the costs of implementing the $\beta$, substitution, and commutative transitions:
  \begin{enumerate}
    \item \emph{$\beta$-Transition $\tomachm$}: each one costs $O(1)$ and so all together they cost $O(\sizem{\exec})$.
    \item \emph{Substitution Transition $\tomachse$}: by \refcoro{exp-bilinear} we have $\sizee \exec \leq  (1+\sizem\exec)\cdot \size{\tm_0}$, \ie the number of substitution transitions is bilinear. By the subterm invariant (\reflemma{subterm-invariant}), each substitution step costs at most $O(\size{\tm_0})$, and so their full cost is $O((1+\sizem{\exec})\cdot{\size{\tm_0}}^2)$.
    \item \emph{Commutative Transitions $\tomachc$}: by \reflemma{comm-bound} we have $\sizecom{\exec} \leq (1+\sizee{\exec})\cdot \size{\tm_0}$. Now, substituting the bound given by \refcoro{exp-bilinear} we obtain
    $$\sizecom{\exec} \leq\! (1+\sizee{\exec})\cdot \size{\tm_0} \leq (1+(1+\sizem\exec) \cdot \size{\tm_0}) \cdot \size{\tm_0} = (1+\sizem{\exec}\!)\cdot{\size{\tm_0}}^2 + \size{\tm_0}$$
      Since every commutative transition evidently takes constant time, the whole cost of the commutative transitions is bound by $O((1+\sizem{\exec})\cdot{\size{\tm_0}}^2)$.
  \end{enumerate}
  Then the cost of implementing $\exec$ is $O((1+\sizem{\exec})\cdot{\size{\tm_0}}^2)$.\qedhere
\end{proof}

\subsection{Proofs of Section~\ref{sect:fglamour} (\fglamour)}
\label{subsect:fglamour-proofs}
For the \fglamour we proceed like for the \eglamour: first we prove the invariants, and then we use them to prove that it forms an implementation system with respect to right-to-left evaluation $\torf$ in the fireball calculus (via the decoding). The differences are minimal, but we include detailed proofs for the sake of completeness.

\begin{lemma}[\fglamour Invariants]
	\label{l:fglamour-invariants} 
	Let $\state = (\dump,\code,\stack,\genv)$ be a reachable state. Then:
	\begin{varenumerate}
		\item \emph{Name:} \label{p:fglamour-invariants-name}
		\begin{enumerate}
			\item \emph{Explicit Substitutions}: \label{p:fglamour-invariants-name-es}
			if $\genv = \genvtwo \esub\var\codetwo \genvthree$  then $\var$ is fresh wrt $\codetwo$ and $\genvthree$;

			\item \emph{Abstractions}: \label{p:fglamour-invariants-name-abs}
			if $\la\var\codetwo$ is a subterm of $\dump$, $\codetwo$, $\stack$, or $\genv$ then $\var$ may occur only in $\codetwo$;
		\end{enumerate}
		
		\item \label{p:fglamour-invariants-fireball-stack}\emph{Fireball Item}: $\decode\stackitem$ and $\relunf{\decode\stackitem}\genv$ are:
		\begin{itemize}
		  \item inert terms if $\stackitem = \pair\var\stacktwo$ and either $\genv(\var) = \bot$ or $\genv(\var) = \pair\vartwo\stackthree$,
		  \item abstractions otherwise,
		\end{itemize}
		for every item $\stackitem$ in $\stack$, in $\genv$, and in every stack in $\dump$;
		\item \label{p:fglamour-invariants-ev-ctx}\emph{Contextual Decoding}: $\stctx\state = \relunf{\decdumpp\decstack}\genv$ is a right context;

		
	\end{varenumerate}
\end{lemma}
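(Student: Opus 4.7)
The plan is to mimic the proof of the \eglamour invariants (\reflemmap{eglamour-invariants}{}) by induction on the length of the execution leading to the reachable state $\state$. The base case is an initial state $\compil{\tm_0}$ where $\dump$, $\stack$, $\genv$ are empty and $\code$ is a well-named code, so all three invariants hold trivially. For the inductive step we assume the invariants hold for $\statetwo$ and prove they hold for $\state$ whenever $\statetwo \tomach \state$, by case analysis on the transition.

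The two commutative transitions $\tomachcone$ and $\tomachctwo$ just rearrange pieces of code between dump, code, and stack without duplicating or modifying them, so the three invariants are immediate from the \ih, exactly as in the \eglamour proof. For $\tomachcthree$ the only new feature with respect to the \eglamour is the additional side-condition that $\pair\var\stempty$ is created only when $\genv(\var) = \pair{\la\vartwo\codetwo}\stempty$: this is precisely what is needed to match the new formulation of the fireball item invariant, which distinguishes the case where $\pair\var\stacktwo$ decodes (under the environment) to an inert term from the case where it decodes to an abstraction; the new side-condition on the stack $\stacktwo = \stempty$ ensures that no arguments are applied to $\var$, so that after substitution via $\genv$ the pair still decodes to an abstraction, and contextual decoding remains a right context since the extruded hole is filled by a fireball (in both sub-cases).

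The transition $\tomachmtwo$ is essentially the $\tomachsm$ of the \eglamour, with the extra condition that the pushed item is not of the form $\pair\vartwo\stempty$; the proof of all three invariants for this case is identical to that for $\tomachsm$, using the name invariant to justify that $\relunf{}{\esub\var\stackitem\genv}$ acts like $\relunf{}\genv$ on the rest of the state. The transition $\tomachse$ is like the \eglamour's, but fires only when the stack is non-empty; neither the dump, the stack, nor the environment change, and the $\alpha$-renaming $\rename{\la\vartwo\codetwo}$ re-establishes the name invariant for the fresh copy. The genuinely new case is $\tomachmone$, which performs the meta-level substitution $\code\isub\var\vartwo$ on the current code; here the name invariant requires care because one needs to check that no abstraction $\la\varthree\codetwo$ in $\code$ has $\varthree$ occurring outside its body after substituting $\vartwo$ for $\var$, but this follows from the \ih since (by the abstraction name invariant) $\var$ appears only in positions where it would not escape a binder, and $\vartwo$ is a variable already occurring as the head of an item of $\stack$, hence covered by the \ih; the fireball item and contextual decoding invariants are then transferred from $\statetwo$ to $\state$ because the decoded term is unchanged (this is exactly the $\beta$-projection of $\tomachmone$).

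The main obstacle is the $\tomachmone$ case: one must verify that the on-the-fly meta-level renaming $\code\isub\var\vartwo$ preserves the abstraction name invariant, and in particular that $\vartwo$ does not capture any abstraction occurring in $\code$. This is where the well-namedness of the initial code and the careful $\alpha$-renaming in $\tomachse$ pay off, ensuring that all bound names in $\code$ stay distinct from $\vartwo$. Once this is established, the remaining cases are routine adaptations of the \eglamour proof.
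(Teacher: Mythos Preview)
Your overall plan is correct and matches the paper's: induction on the execution length, case analysis on the last transition, with the $\tomachcone$, $\tomachctwo$, $\tomachmtwo$, $\tomachse$ cases essentially identical to the \eglamour\ proof and $\tomachcthree$ requiring the extra sub-case for $\genv(\var)=\pair{\la\vartwo\codetwo}\stempty$ with $\stacktwo=\stempty$.

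There is, however, a genuine error in your treatment of $\tomachmone$. You write that the fireball item and contextual decoding invariants ``are then transferred from $\statetwo$ to $\state$ because the decoded term is unchanged (this is exactly the $\beta$-projection of $\tomachmone$).'' This is wrong on two counts. First, $\tomachmone$ is a $\beta$-transition, not an overhead transition, so $\beta$-projection gives $\decode\statetwo \torf \decode\state$, \emph{not} $\decode\statetwo = \decode\state$; the decoded term does change. Second, and more importantly, neither invariant is about the full decoded term $\decode\state$ at all: the fireball item invariant speaks only of the items $\stackitem$ in $\stack$, $\genv$, and the stacks in $\dump$, while contextual decoding is about $\stctx\state = \relunf{\decdumpp\decstack}\genv$. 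The correct argument (which the paper gives) is direct: in $\tomachmone$ the dump $\dump$ and environment $\genv$ are unchanged and $\stack$ only loses its top item, so no new items appear and the fireball item invariant follows immediately from the \ih; for contextual decoding, the \ih\ gives that $\relunf{\decdumpp{\decode{\pair\vartwo\stempty\cons\stack}}}\genv$ is a right context, and stripping the outermost argument $\pair\vartwo\stempty$ yields $\stctx\state = \relunf{\decdumpp\decstack}\genv$, which is still a right context.

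On the name invariant for $\tomachmone$: your concern about capture is legitimate, and the paper is perhaps terse in saying it ``follows immediately from the \ih.'' The point you need is that by the abstraction clause of the name invariant on $\statetwo$, any binder $\la\varthree\codethree$ occurring in $\code$ has $\varthree$ occurring nowhere else in the state; since $\vartwo$ occurs in the stack of $\statetwo$, necessarily $\varthree\neq\vartwo$, so the substitution $\isub\var\vartwo$ neither captures nor introduces stray occurrences of any bound name. This is indeed routine once stated, so the paper's brevity is defensible, but your instinct to spell it out is sound.
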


\begin{proof}
By induction on the length of the execution leading to the reachable state. In an initial state all the invariants trivially hold. For a non-empty execution the proof for every invariant is by case analysis on the last transition, using the \ih. 
\begin{enumerate}
 \item \emph{Name.} Cases:
\begin{enumerate}
 \item $\statetwo = \glamst\dump {\code\codetwo} \stack \genv 
	\tomachcone
	\glamst { \dump\cons\dentry\code\stack } \codetwo \stempty \genv = \state$. Both points follow immediately from the \ih

  \item $\statetwo = 
	  \glamst{ \dump\cons\dentry\code\stack }{ \la\var\codetwo } \stempty  \genv
	  \tomachctwo 
	  \glamst \dump \code {\pair{\la\var\codetwo}{\stempty} \cons\stack} \genv	
	 = \state$. Both points follow immediately from the \ih

  \item $\statetwo = 
	  \glamst {\dump\cons\dentry\code\stack} \var \stacktwo \genv
	  \tomachcthree
	  \glamst \dump \code {\pair{\var}{\stacktwo}\!\cons\!\stack} \genv 
	 = \state$ with $\genv(\var)=\bot$ or $\genv(\var) = \pair{\vartwo}{\stackthree}$ or ($\genv(\var) = \pair{\la\vartwo\codetwo}{\stempty}$ and $\stacktwo = \stempty$). 
	 Both points follow immediately from the \ih

  \item $\statetwo = 
	  \glamst \dump {\la\var\code} {\pair{\vartwo}{\stempty}\cons\stack} \genv
	  \tomachmone
	  \glamst \dump {\code\isub{\var}\vartwo} \stack \genv
	 = \state$. 
	 Both points follow immediately from the \ih
  \item $\statetwo = 
	  \glamst \dump {\la\var\code} {\stackitem\cons\stack} \genv
	  \tomachmtwo
	  \glamst \dump \code \stack {\esub\var\stackitem\genv}
	 = \state$ with $\stackitem \neq \pair{\vartwo}{\stempty}$. \refpoint{fglamour-invariants-name-es} for the new entry in the environment follows from the \ih for \refpoint{fglamour-invariants-name-abs}, for the other entries from the \ih for \refpoint{fglamour-invariants-name-es}. \refpoint{fglamour-invariants-name-abs} follows from its \ih
	  
  \item \begin{equation*}
          \begin{split}
	 \statetwo &= 
	  \glamst \dump \var {\stackitem\cons\stack} {\genv_1\esub\var{\pair{\la\vartwo\codetwo}\stempty}\genv_2} \\
	  &\tomachse
	  \glamst \dump {\rename{\la\vartwo\codetwo}} {\stackitem\cons\stack} {\genv_1\esub\var{\pair{\la\vartwo\codetwo}\stempty}\genv_2}
	   = \state.
	 \end{split}
       \end{equation*} \refpoint{fglamour-invariants-name-es} follows from its \ih. \refpoint{fglamour-invariants-name-abs} for the new code is guaranteed by the $\alpha$-renaming operation $\rename{\la\vartwo\codetwo}$, the rest follows from its \ih
\end{enumerate} 
 
 \item \emph{Fireball Item}. Cases:
\begin{enumerate}
 \item $\statetwo = \glamst\dump {\code\codetwo} \stack \genv 
	\tomachcone
	\glamst { \dump\cons\dentry\code\stack } \codetwo \stempty \genv = \state$. It follows from the \ih

  \item $\statetwo = 
	  \glamst{ \dump\cons\dentry\code\stack }{ \la\var\codetwo } \stempty  \genv
	  \tomachctwo 
	  \glamst \dump \code {\pair{\la\var\codetwo}{\stempty} \cons\stack} \genv	
	 = \state$.  For $\pair{\la\var\codetwo}{\stempty}$ we have that $\decode{\pair{\la\var\codetwo}{\stempty}}$ and $\relunf{\decode{\pair{\la\var\codetwo}{\stempty}}}\genv = \relunf{(\la\var\codetwo)}\genv = \la\var\relunf{\codetwo}\genv$ are abstractions, and hence fireballs.	 
	 For all other items the invariant follows from the \ih

  \item $\statetwo = 
	  \glamst {\dump\cons\dentry\code\stack} \var \stacktwo {\genv}
	  \tomachcthree
	  \glamst \dump \code {\pair{\var}{\stacktwo}\cons\stack} {\genv}
	 \allowbreak= \state$ with $\genv(\var) = \bot$ or $\genv(\var) = {\pair\vartwo\stackthree}$ or ($\genv(\var) = \pair{\la\vartwo\codetwo}{\stempty}$ and $\stacktwo = \stempty$). 
	 For $\pair\var\stacktwo$, we have that $\decode{\pair\var\stacktwo} = \ctxholep{\var}\decode\stacktwo$ and $\relunf{\decode{\pair\var\stacktwo}}\genv = \ctxholep{\relunf{\var}\genv}(\relunf{\decode\stacktwo}\genv)$.
	 By \ih, $\decode\stackitemtwo$ is a fireball for every item $\stackitemtwo$ in $\stacktwo$. 
	 Therefore, $\decode{\pair\var\stacktwo}$ is an inert term.
	 Concerning $\relunf{\decode{\pair\var\stacktwo}}\genv$, there are three subcases:
	 \begin{enumerate}
	   \item $\genv(\var) = \pair{\vartwo}{\stackthree}$ \ie $\genv \defeq \genv_1\esub\var{\pair\vartwo\stackthree}\genv_2$. 
	   By \reflemmapp{fglamour-invariants}{name}{es}, every ES in $\genv$ binds a different variable, so $\relunf{\var}\genv = \relunf{\var}{\genv_1\esub\var{\pair\vartwo\stackthree}\genv_2} = \relunf{\var}{\genv_1}\relunf{\isub\var{\decode{\pair{\vartwo}{\stackthree}}}}{\genv_2} = \relunf{\decode{\pair{\vartwo}{\stackthree}}}{\genv_2} = \relunf{\decode{\pair{\vartwo}{\stackthree}}}{\genv}$, that by \ih is an inert term. 
	   Moreover, the \ih also gives that $\relunf{\stackitemtwo}\genv$ is a fireball for every item $\stackitemtwo$ in $\stacktwo$. 
	   Therefore  $\relunf{\decode{\pair\var\stacktwo}}\genv = \ctxholep{\relunf{\var}\genv}(\relunf{\decode\stacktwo}\genv)$ is an inert term. 

	   \item $\genv(\var)=\bot$. Similar to the previous case. 
	   By hypothesis, we have $\relunf{\var}\genv = \var$. 
	   As before, by \ih $\relunf{\decode\stackitemtwo}\genv$ is a  fireball for every item $\stackitemtwo$ in $\stacktwo$. 
	   So,  $\relunf{\decode{\pair\var\stacktwo}}\genv = \ctxholep{\relunf{\var}\genv}(\relunf{\decode\stacktwo}\genv)  = \ctxholep{\var}(\relunf{\decode\stacktwo}\genv)$ is an inert term. 
	   
	   \item $\genv(\var) = \pair{\la\vartwo\codetwo}\stempty$ (\ie $\genv = \genv_1\esub\var{\pair{\la\vartwo\codetwo}{\stempty}}\genv_2$) and $\stacktwo = \stempty$.
	   Then $\decode{\pair{\var}{\stacktwo}} = \var$.
	   By \reflemmapp{fglamour-invariants}{name}{es}, every ES in $\genv$ binds a different variable, so $\relunf{\var}\genv = \relunf{\var}{\genv_1\esub\var{\pair{\la\vartwo\codetwo}\stempty}\genv_2} = \relunf{\var}{\genv_1}\relunf{\isub\var{\decode{\pair{\la\vartwo\codetwo}\stempty}}}{\genv_2} = \relunf{\decode{\pair{\la\vartwo\codetwo}\stempty}}{\genv_2} = \la\vartwo\relunf{\codetwo}{\genv}$. 
	   Therefore  $\relunf{\decode{\pair\var\stacktwo}}\genv = \relunf{\var}\genv = \la\vartwo\relunf{\codetwo}{\genv}$ is an abstraction.

	 \end{enumerate}
	 For all other items in $\state$ the invariant follows from the \ih

  \item $\statetwo = 
	  \glamst \dump {\la\var\code} {\pair{\vartwo}\stempty\cons\stack} \genv
	  \tomachmone
	  \glamst \dump {\code\isub{\var}\vartwo} \stack {\genv}
	 = \state$. Then the invariant follows immediately from the \ih
  \item $\statetwo = 
	  \glamst \dump {\la\var\code} {\stackitem\cons\stack} \genv
	  \tomachmtwo
	  \glamst \dump \code \stack {\esub\var\stackitem\genv}
	 = \state$ with $\stackitem \neq \pair{\vartwo}{\stempty}$. By \reflemmapp{fglamour-invariants}{name}{abs} $\var$ may occur only in $\code$. 
	 Thus the substitution $\relunf{}{\esub\var\stackitem\genv}$ acts exactly as $\relunf{}\genv$ on every item in $\state$. Then the invariant follows from the \ih
	  
  \item \begin{equation*}
	 \begin{split}
	 \statetwo &= 
	  \glamst \dump \var {\stackitem\cons\stack} {\genv_1\esub\var{\pair{\la\vartwo\codetwo}\stempty}\genv_2} \\
	  &\tomachse
	  \glamst \dump {\rename{\la\vartwo\codetwo}} {\stackitem\cons\stack} {\genv_1\esub\var{\pair{\la\vartwo\codetwo}\stempty}\genv_2}
	 = \state.
         \end{split}
       \end{equation*}
 It follows from the \ih

\end{enumerate} 

\item \emph{Contextual Decoding}. Cases:
\begin{enumerate}
 \item $\statetwo = \glamst\dump {\code\codetwo} \stack \genv 
	\tomachcone
	\glamst { \dump\cons\dentry\code\stack } \codetwo \stempty \genv = \state$. By \ih $\stctx\statetwo = \relunf{\decdumpp\decstack}\genv$ is a right context, as well as $\relunf{\codetwo}\genv \ctxhole$. Then their composition $(\relunf{\decdumpp\decstack}\genv)\ctxholep{\relunf{\codetwo}\genv \ctxhole} \allowbreak= \relunf{\decdumpp{\decstackp{\codetwo \ctxhole}}}\genv = \stctx\state$ is a right context.

  \item $\statetwo = 
	  \glamst{ \dump\cons\dentry\code\stack }{ \la\var\codetwo } \stempty  \genv
	  \tomachctwo 
	  \glamst \dump \code {\pair{\la\var\codetwo}\stempty \cons\stack} \genv	
	 \allowbreak= \state$. 
	 By \ih $\stctx\statetwo = \relunf{\decode{\dump\cons\dentry\code\stack}}\genv = \relunf{\decdumpp{\decodeinv\stack{\code\ctxhole}}}\genv$ is a right context, that implies that $\relunf{\decdumpp{\decstack}}\genv$ is one such context as well. So, $\stctx\statetwo \!= \relunf{\decdumpp{\decode{\pair{\la\var\codetwo}\stempty \cons\stack}}}\genv = \relunf{\decdumpp{\decodeinv\stack{\ctxhole\la\var\codetwo}}}\genv \allowbreak= (\relunf{\decdumpp{\decstack}}\genv)\ctxholep{\ctxhole\relunf{\la\var\codetwo}\genv}$ is a right context, because it is the composition of right context, given that $\relunf{\la\var\codetwo}\genv$ is a fireball.

%
  \item $\statetwo = 
	  \glamst {\dump\cons\dentry\code\stack} \var \stacktwo {\genv}
	  \tomachcthree
	  \glamst \dump \code {\pair\var\stacktwo \!\cons\!\stack} {\genv}
	 \allowbreak= \state$ with $\genv(\var)=\bot$ or $\genv(\var)= \pair{\vartwo}{\stackthree}$ or ($\genv(\var) = \pair{\la\vartwo\codetwo}\stempty$ and $\stacktwo = \stempty$). 
	 By \ih $\stctx\statetwo = \relunf{\decodep{\dump\cons\dentry\code\stack}{\decode\stacktwo}}\genv = \relunf{\decdumpp{\decodeinv\stack{\code(\decode\stacktwo)}}}\genv$ is a right context, that implies that $\relunf{\decdumpp{\decstack}}\genv$ is one such context as well. Then 
	 $\stctx\state = \relunf{\decdumpp{\decode{\pair\var\stacktwo\cons\stack}}}\genv = 
	 \relunf{\decdumpp{\decodeinv\stack{\ctxhole\decode{\pair\var\stacktwo}}}}\genv = 
	 (\relunf{\decdumpp{\decstack}}\genv)\ctxholep{\relunf{\ctxhole\decode{\pair\var\stacktwo}}\genv}$ is a right context, because it is the composition of right context, given that $\relunf{\decode{\pair\var\stacktwo}}\genv$ is a fireball by \reflemmap{fglamour-invariants}{fireball-stack}.

  \item $\statetwo = 
	  \glamst \dump {\la\var\code} {\pair{\vartwo}{\stempty}\!\cons\!\stack} \genv
	  \tomachmone
	  \glamst \dump {\code\isub{\var}{\vartwo}} \stack {\genv}
	 = \state$. 
	  By the \ih $\stctx\statetwo = \relunf{\decdumpp{\decode{\pair{\vartwo}{\stempty}\!\cons\!\stack}}}\genv 
	  $ is a right context, that implies that $\stctx\state = \relunf{\decdumpp{\decstack}}\genv$ is one such context as well. 
  \item $\statetwo = 
	  \glamst \dump {\la\var\code} {\stackitem\!\cons\!\stack} \genv
	  \tomachmtwo
	  \glamst \dump \code \stack {\esub\var\stackitem\genv}
	 = \state$ with $\stackitem \neq \pair\vartwo\stacktwo$. 
	  By the \ih $\stctx\statetwo = \relunf{\decdumpp{\decode{\stackitem\cons\stack}}}\genv
	  $ is a right context, that implies that $\relunf{\decdumpp{\decstack}}\genv$ is one such context as well. Now, note that $\stctx\state = \relunf{\decdumpp{\decstack}}{\esub\var\stackitem\genv} = \relunf{\decdumpp{\decstack}}\genv$ because by \reflemmapp{fglamour-invariants}{name}{abs} $\var$ may occur only in $\code$, and so the substitution $\relunf{}{\esub\var\stackitem\genv}$ acts on every code in $\dump$ and $\stack$ exactly as $\relunf{}\genv$. 
	  
  \item \begin{equation*}
          \begin{split}
	 \statetwo &= 
	  \glamst \dump \var {\stackitem\cons\stack} {\genv_1\esub\var{\pair{\la\vartwo\codetwo}\stempty}\genv_2} \\
	  &\tomachse 
	  \glamst \dump {\rename{\la\vartwo\codetwo}} {\stackitem\cons\stack} {\genv_1\esub\var{\pair{\la\vartwo\codetwo}\stempty}\genv_2}
	 = \state.
	  \end{split}
        \end{equation*} 
	 It follows by the \ih because $\stctx\statetwo = \stctx\state$, as the only component that changes is the code.
  \qedhere
\end{enumerate}
\end{enumerate}
\end{proof}

\begin{lemma}[\fglamour $\beta$-Projection and Overhead Transparency]
  \label{l:fglamour-trans-projection} 
	Let 
	$\state$ be a reachable state.
	\begin{varenumerate}
		\item \label{p:fglamour-trans-projection-exp-com} \emph{Overhead Transparency}: 
		if $\state\tomachhole{\ssym,\csym_{1,2,3}}\statetwo$ then $\decode\state=\decode\statetwo$;
		\item \label{p:fglamour-trans-projection-mult}\emph{$\beta$-Projection}: if $\state\tomachhole{\beta_{1,2}} \statetwo$ then $\decode\state\torf\decode\statetwo$.	  
		
	\end{varenumerate}
\end{lemma}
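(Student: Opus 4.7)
The plan is to proceed by case analysis on the transition $\state \tomach \statetwo$, exactly as was done for the \eglamour in \reflemma{eglamour-trans-projection}. In fact, four of the six cases are either unchanged or essentially identical to the \eglamour argument: the three commutative transitions $\tomachcone, \tomachctwo, \tomachcthree$ rearrange the data structures in a way that leaves the decoding $\stctx{\state}\ctxholep{\relunf{\code}\genv}$ untouched (routine unfolding of the definitions of $\decode\cdot$ and $\stctx{\cdot}$), and the substitution transition $\tomachse$ also preserves the decoding because it only $\alpha$-renames the code and the environment is unchanged---so $\relunf{\var}\genv = \relunf{\la\vartwo\codetwo}\genv = \relunf{\rename{\la\vartwo\codetwo}}\genv$ using the Name invariant (\reflemmap{fglamour-invariants}{name}).

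The $\beta$-transition $\tomachmtwo$ is likewise a direct adaptation of the $\tomachsm$ case for the \eglamour: decoding $\state$ produces a term of the form $\stctx{\state}\ctxholep{(\la\var\relunf\code\genv)\decode\stackitem\!\downarrow_\genv}$, where $\stctx{\state}$ is a right context by \reflemmap{fglamour-invariants}{ev-ctx} and $\relunf{\decode\stackitem}\genv$ is a fireball by \reflemmap{fglamour-invariants}{fireball-stack} (here the side condition $\stackitem \neq \pair\vartwo\stempty$ means the head is either an abstraction or an inert of the form $\pair\var\stacktwo$ with the env undefined/inert on $\var$); thus the step fires as a $\torf$-redex, and the meta-level substitution $\isub\var{\decode\stackitem}$ can be pushed through $\relunf{}\genv$ to reach $\decode\statetwo$, since by \reflemmapp{fglamour-invariants}{name}{abs} the variable $\var$ occurs only in $\code$.

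The only genuinely new case is $\tomachmone$, where $\state = \glamst\dump{\la\var\code}{\pair{\vartwo}{\stempty}\cons\stack}\genv$ and $\statetwo = \glamst\dump{\code\isub\var\vartwo}{\stack}\genv$. Decoding the LHS gives
\[
\decode\state \;=\; \stctx{\state}\ctxholep{(\la\var\relunf\code\genv)\,\relunf\vartwo\genv}.
\]
By \reflemmap{fglamour-invariants}{fireball-stack} applied to the item $\pair\vartwo\stempty$, the unfolding $\relunf\vartwo\genv$ is a fireball; by \reflemmap{fglamour-invariants}{ev-ctx}, $\stctx{\state}$ is a right context. Hence a $\torf$-step fires, reducing $\decode\state$ to $\stctx{\state}\ctxholep{\relunf\code\genv\isub\var{\relunf\vartwo\genv}}$. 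It remains to identify this with $\decode\statetwo = \stctx{\statetwo}\ctxholep{\relunf{\code\isub\var\vartwo}\genv}$; this is the step that requires slightly more care than in the \eglamour, and I expect it to be the main obstacle. The equality follows by a substitution-commutation argument using the Name invariant: since $\var$ can appear only inside $\code$ (and in particular not among the ES-bound variables of $\genv$ or in their substituends), the meta-level substitution $\isub\var{(\cdot)}$ commutes with the unfolding $\relunf{}\genv$, yielding $\relunf{\code\isub\var\vartwo}\genv = \relunf\code\genv\isub\var{\relunf\vartwo\genv}$, and $\stctx{\state} = \stctx{\statetwo}$ since only the code component changes. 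This closes the case and the lemma.
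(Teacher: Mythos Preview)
Your approach is essentially the same as the paper's: case analysis on the six transitions, reusing the \eglamour arguments for $\tomachcone,\tomachctwo,\tomachcthree,\tomachse$, adapting $\tomachsm$ to $\tomachmtwo$, and handling the new $\tomachmone$. In fact you are more explicit than the paper on $\tomachmone$: the paper simply writes
\[
\relunf{\decdumpp{\decstackp{(\la\var\code)\vartwo}}}\genv \;\torf\; \relunf{\decdumpp{\decstackp{\code\isub\var\vartwo}}}\genv = \decode\statetwo
\]
and justifies it only by the right-context invariant, leaving the fireball check on $\relunf\vartwo\genv$ and the substitution--unfolding commutation implicit; you spell both out.

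There is one notational slip to fix. In the $\tomachmone$ (and analogously $\tomachmtwo$) case you write $\decode\state = \stctx{\state}\ctxholep{(\la\var\relunf\code\genv)\,\relunf\vartwo\genv}$ and later ``$\stctx\state = \stctx\statetwo$ since only the code component changes''. Both are wrong as stated: by definition $\stctx\state = \relunf{\decdumpp{\decode{\pair\vartwo\stempty\cons\stack}}}\genv$, which already contains the argument $\vartwo$, so $\decode\state = \stctx\state\ctxholep{\la\var\relunf\code\genv}$; and the stack \emph{does} change between $\state$ and $\statetwo$, so $\stctx\state \neq \stctx\statetwo$. The context you mean is $\relunf{\decdumpp{\decstack}}\genv$, which is precisely $\stctx\statetwo$, and that is the one certified to be a right context by \reflemmap{fglamour-invariants}{ev-ctx} (via the same ``subcontext of a right context'' reasoning used in the invariant proof). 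Once you replace $\stctx\state$ by $\stctx\statetwo$ in those two places, your argument goes through as written.
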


\begin{proof}
Transitions:
\begin{enumerate}
 \item $\state = \glamst\dump {\code\codetwo} \stack \genv 
	\tomachcone
	\glamst { \dump\cons\dentry\code\stack } \codetwo \stempty \genv = \statetwo$.
	Then
	\[\begin{array}{rclclcl}
	   \decode\state 
	   & = & \relunf{\decodep{\dump}{\decodeinv\stack{\code\codetwo}} }\genv\\
	   & = & \relunf{\decodep{\dump\cons\dentry\code\stack}{\codetwo} }\genv \\
 	   & = & \relunf{\decodep{\dump\cons\dentry\code\stack}{\decodeinv\stempty\codetwo} }\genv 
 	   & = & \decode\statetwo
	  \end{array}\]

  \item $\state = 
	  \glamst{ \dump\cons\dentry\code\stack }{ \la\var\codetwo } \stempty  \genv
	  \tomachctwo 
	  \glamst \dump \code {\pair{\la\var\codetwo}{\stempty} \cons\stack} \genv	
	 = \statetwo$.
	 Then
	\[\begin{array}{rclclcl}
	   \decode\state 
	   & = & \relunf{\decodep{\dump\cons\dentry\code\stack}{\decodeinv\stempty{\la\var\codetwo}}}\genv\\	   
	   & = & \relunf{\decodep{\dump}{\decodeinv{\stack}{\code(\decodeinv\stempty{\la\var\codetwo})}}}\genv \\
	   & = & \relunf{\decodep{\dump}{\decodeinv{\pair{\la\var\codetwo}{\stempty}\cons\stack}\code}}\genv
 	   & = & \decode\statetwo
	  \end{array}\]

%
  \item $\state = 
	  \glamst {\dump\cons\dentry\code\stack} \var \stacktwo {\genv}
	  \tomachcthree
	  \glamst \dump \code {\pair{\var}{\stacktwo}\cons\stack} {\genv}
	 = \statetwo$ with $\genv(\var) = \bot$ or $\genv(\var) = {\pair{\vartwo}{\stackthree}}$ or ($\genv(\var) = \pair{\la\vartwo\codetwo}{\stempty}$ and $\stacktwo = \stempty$).
	  Then
	\[\begin{array}{rclclcl}
	   \decode\state 
	   & = & \relunf{\decodep{\dump\cons\dentry\code\stack}{\decodeinv\stacktwo\var}}\genv\\
	   & = & \relunf{\decodep{\dump}{\decodeinv\stack{\code(\decodeinv\stacktwo\var)}}}\genv\\
	   & = & \relunf{\decodep{\dump}{\decodeinv{\pair\var\stacktwo \cons\stack}\code}}\genv
 	   & = & \decode\statetwo
	  \end{array}\]

  \item $\state = 
	  \glamst \dump {\la\var\code} {\pair{\vartwo}{\stempty}\cons\stack} \genv
	  \tomachmone
	  \glamst \dump {\code\isub{\var}{\vartwo}} \stack {\genv}
	 = \statetwo$.
	  Then
	\[\begin{array}{rclclcl}
	   \decode\state 
	   & = & \relunf{\decodep{\dump}{\decodeinv{\pair{\vartwo}{\stempty}\cons\stack}{\la\var\code}}}{\genv}\\
	   & = & \relunf{\decodep{\dump}{\decodeinv{\stack}{(\la\var\code)\vartwo}}}{\genv}\\
	   & \torf & \relunf{\decodep{\dump}{\decodeinv{\stack}{\code \isub\var{\vartwo}}}}{\genv}
 	   & = & \decode\statetwo
	  \end{array}\]
	  where the rewriting step takes place because $\relunf{\decodep{\dump}{\decode\stack}}{\genv}$ is a right context by \reflemmap{eglamour-invariants}{ev-ctx}.

  \item $\state = 
	  \glamst \dump {\la\var\code} {\stackitem\cons\stack} \genv
	  \tomachmtwo
	  \glamst \dump \code \stack {\esub\var\stackitem\genv}
	 = \statetwo$ with $\stackitem \neq \pair{\vartwo}{\stempty}$.
	  Then
	\[\begin{array}{rclclcl}
	   \decode\state 
	   & = & \relunf{\decodep{\dump}{\decodeinv{\stackitem\cons\stack}{\la\var\code}}}{\genv}\\
	   & = & \relunf{\decodep{\dump}{\decodeinv{\stack}{(\la\var\code)\decode{\stackitem}}}}{\genv}\\
	   & \torf & \relunf{\decodep{\dump}{\decodeinv{\stack}{\code \isub\var{\decode\stackitem}}}}{\genv}\\
	   & = & \relunf{\decodep{\dump}{\decodeinv{\stack}\code}\isub\var{\decode\stackitem}}{\genv}\\
	   & = & \relunf{{\dump}{\decodeinv{\stack}\code}}{\esub\var\stackitem\genv}
 	   & = & \decode\statetwo
	  \end{array}\]
	  where the rewriting step takes place because
	  \begin{enumerate}
	   \item $\relunf{\decodep{\dump}{\decode\stack}}{\genv}$ is a right context by \reflemmap{eglamour-invariants}{ev-ctx};
	   \item $\decode\stackitem$ is a fireball by \reflemmap{eglamour-invariants}{fireball-stack}.
	  \end{enumerate}
	  Moreover, the meta-level substitution $\isub\var{\decode\stackitem}$ can be extruded (in the equality step after the rewriting) without renaming $\var$, because by \reflemmapp{eglamour-invariants}{name}{abs} $\var$ does not occur in $\dump$ nor $\stack$.
	  
  \item $\state =  \glamst \dump \var {\stackitem\!\cons\!\stack} \genv \tomachse
	  \glamst \dump {\rename{\la\vartwo\codetwo}} {\stackitem\!\cons\!\stack} \genv
	 = \statetwo$ with $\genv = {\genv_1\esub\var{\pair{\la\vartwo\codetwo}\stempty}\genv_2}$.
	   Then
	\[\begin{array}{rclclcl}
	   \decode\state 
	   & = & \relunf{\decodep{\dump}{\decodeinv{\stackitem\cons\stack}\var}}{\genv}\\
	   & = & \decodep{\relunf{\dump}{\genv}}{\decodeinv{\relunf{\stackitem\cons\stack}{\genv}}{\relunf{\var}{\genv}}}\\
	   & = & \decodep{\relunf{\dump}{\genv}}{\decodeinv{\relunf{{\stackitem\cons\stack}}{\genv}}{\relunf{{\la\vartwo\codetwo}}{\genv}}}\\
	   & = & \relunf{\decodep{\dump}{\decodeinv{\stackitem\cons\stack}{\la\vartwo\codetwo}}}{\genv}
 	   & = & \decode\statetwo
	  \end{array}\]
\qedhere
\end{enumerate} 
\end{proof}

\begin{lemma}[\fglamour Progress]
\label{l:fglamour-progress}
  Let
  $\state$ be a reachable final state. Then $\decode\state$ is a fireball, \ie it is $\betaf$-normal. 
\end{lemma}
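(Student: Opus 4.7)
The plan is to mimic the proof of the corresponding \eglamour Progress lemma (\reflemma{eglamour-progress}): first characterize the shape of final \fglamour states by ruling out every transition in \reffig{fglamour}, and then show that each such state decodes to a fireball by invoking the invariants established in \reflemma{fglamour-invariants}.

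First I would enumerate the possible final states by inspection of the transitions. If the code is an application then $\tomachcone$ fires, so a final code is never an application. If the code is an abstraction $\la\var\codetwo$, then a non-empty stack triggers either $\tomachmone$ (when the top item is $\pair{\vartwo}{\stempty}$) or $\tomachmtwo$ (otherwise), while a non-empty dump with empty stack triggers $\tomachctwo$; hence a final state with an abstraction code must have both dump and stack empty. If the code is a variable $\var$, the availability of $\tomachcthree$ forces the dump to be empty whenever $\genv(\var) = \undef$ or $\genv(\var) = \pair{\vartwo}{\stackthree}$, and the availability of $\tomachse$ forces the stack to be empty whenever $\genv(\var) = \pair{\la\vartwo\codetwo}{\stempty}$; a final state with $\genv(\var) = \pair{\la\vartwo\codetwo}{\stempty}$ and empty stack must then also have empty dump, by $\tomachcthree$. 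Summing up, a final state has one of the following three shapes: $\glamst\stempty{\la\var\codetwo}\stempty\genv$, or $\glamst\stempty{\var}{\stack}\genv$ with $\genv(\var) = \undef$ or $\genv(\var) = \pair{\vartwo}{\stackthree}$, or $\glamst\stempty{\var}{\stempty}\genv$ with $\genv(\var) = \pair{\la\vartwo\codetwo}{\stempty}$.

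Then I would compute the decoding case by case, using the invariants. For the first shape, the name invariant (\reflemmapp{fglamour-invariants}{name}{abs}) yields $\decode\state = \la\var\relunf{\codetwo}\genv$, an abstraction, hence a fireball. For the third shape, the same invariant gives $\decode\state = \relunf{\var}\genv = \la\vartwo\relunf{\codetwo}\genv$, again an abstraction. For the second shape the decoding is $\ctxholep{\relunf{\var}\genv}(\relunf{\decode\stack}\genv)$; the fireball item invariant (\reflemmap{fglamour-invariants}{fireball-stack}) ensures that $\relunf{\var}\genv$ is either $\var$ itself (when $\genv(\var) = \undef$) or an inert term (when $\genv(\var) = \pair{\vartwo}{\stackthree}$), and that every item of $\stack$ unfolds to a fireball, so the whole term is an inert term, hence a fireball.

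The only subtle point—and the main reason the case analysis for a variable code is more delicate than for the \eglamour—is the interplay between $\tomachcthree$ and $\tomachse$ when the variable is bound to an abstraction: the side conditions on $\genv$ in the two rules must be read together to conclude that no transition applies to the third shape above. Once this characterization is obtained, the decoding step is a direct application of the invariants, exactly as in the \eglamour case.
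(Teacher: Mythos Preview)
Your proposal is correct and follows essentially the same approach as the paper's proof. The only difference is cosmetic: you split the variable-code case according to the value of $\genv(\var)$ (undefined/variable-pair versus abstraction-pair), whereas the paper splits according to whether the stack is empty; the two decompositions cover the same set of final states and the decoding step in each case invokes the same invariants.
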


\begin{proof}
 An immediate inspection of the transitions shows that in a final state the code cannot be an application and the dump is necessarily empty. In fact, final states have one of the following two shapes:
 \begin{enumerate}
  \item \emph{Top-Level Unapplied Abstraction}, \ie $\state = \glamst\stempty{\la\var\code}\stempty\genv$. Then $\decode\state = \relunf{(\la\var\code)}\genv = \la\var\relunf{\code}\genv$ that is a fireball.
  
  \item \emph{Top-Level Free Variable}, \ie $\state = \glamst\stempty\var\stempty\genv$ (note that, differently from what happens in the \eglamour Machine, it might be that $\genv(\var) \neq \bot$).
  We claim that $\decode\state = \relunf{\var}\genv$ is a fireball. 
  Indeed, according to the fireball invariant item (\reflemmap{fglamour-invariants}{fireball-stack}), $\decode\stackitem$ is a fireball for any item $\stackitem$ in $\genv$; thus, the only possibility to have $\relunf{\var}\genv$ different from a fireball is that there is an item $\stackitem$ in $\genv$ such that $\relunf{\decode\stackitem}\genv$ is not a fireball, but this is impossible by \reflemmap{fglamour-invariants}{fireball-stack}.
  \item \emph{Top-Level 
  Compound Inert Term}, \ie $\state = \glamst\stempty\var{\stackitem \!\cons\! \stack}\genv$ with $\genv(\var) \neq \pair{\la\vartwo\code}\stempty$. Subcases:
  \begin{enumerate}
    \item $\genv(\var) = \undef$.   
    Then $\decode\state = \relunf{(\decodeinv\stack\var)}\genv = \ctxholep{\relunf{\var}\genv}(\relunf{\decode\stack}\genv) = \ctxholep\var(\relunf{\decode\stack}\genv)$. 
    Now, by the fireball item invariant (\reflemmap{fglamour-invariants}{fireball-stack}) every element of $\relunf{\decode\stack}\genv$ is a fireball, so $\ctxholep\var(\relunf{\decode\stack}\genv)$ is an inert term, and hence a fireball.
    \item $\genv(\var) = \pair{\vartwo}\stacktwo$.   Then $\decode\state = \relunf{(\decodeinv\stack\var)}\genv = \ctxholep{\relunf{\var}\genv}(\relunf{\decode\stack}\genv) = \ctxholep{\ctxholep\vartwo(\relunf{\decode\stacktwo}\genv)}(\relunf{\decode\stack}\genv)$. 
    By the fireball item invariant (\reflemmap{fglamour-invariants}{fireball-stack}), any element of $\relunf{\decode\stack}\genv$ and $\relunf{\decode\stacktwo}\genv$ is a fireball, so $\ctxholep{\ctxholep\vartwo(\relunf{\decode\stacktwo}\genv)}(\relunf{\decode\stack}\genv)$ is an inert term, and hence a fireball.
  \qedhere
  \end{enumerate}

  \end{enumerate}
\end{proof}

\setcounter{theoremAppendix}{\value{thm:fglamour-implementation}}
\begin{theoremAppendix}[\fglamour Implementation]
  \label{thmappendix:fglamour-implementation}
  The 
  \NoteState{thm:fglamour-implementation}
  \fglamour implements right-to-left evaluation $\torf$ in $\firecalc$ (via the decoding $\decode\cdot$).
\end{theoremAppendix}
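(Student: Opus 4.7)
The plan is to follow exactly the same strategy used for the \eglamour Implementation theorem (\refthm{eglamour-implementation}), namely to invoke the abstract Sufficient Condition for Implementations (\refthm{abs-impl}) by exhibiting the \fglamour, the strategy $\torf$, and the decoding $\decode{\cdot}$ as an implementation system in the sense of \refdef{implementation}. What needs to be checked are therefore the five conditions: $\beta$-projection, overhead transparency, termination of overhead transitions, determinism, and progress.

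First I would establish the \fglamour analogues of the qualitative invariants of \reflemma{eglamour-invariants} (Name, Fireball Item, Contextual Decoding), adapting the Fireball Item invariant to the new situation where an item $\pair{\var}{\stempty}$ in $\genv$ may now be bound to an abstraction (because of the new side condition on $\tomachcthree$): the decoding of such an item, under $\unfsym_\genv$, can become an abstraction rather than an inert term. The proof is by induction on the length of the execution, case analysis on the last transition, which now includes the two split $\beta$-transitions $\tomachmone, \tomachmtwo$ and the refined $\tomachse$; each case follows the pattern of the corresponding \eglamour case, plus one new subcase for $\tomachmone$, which replaces one free variable of a subterm of the initial term with another variable (so the subterm invariant is preserved since no new abstractions are created).

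Then I would prove two key lemmas matching \reflemmas{eglamour-trans-projection}{eglamour-progress}: (i) \emph{overhead transparency} for $\tomachhole{\ssym,\csym_{1,2,3}}$, by a direct computation on each transition showing that $\decode{\state} = \decode{\statetwo}$ (identical to the \eglamour cases); and (ii) \emph{$\beta$-projection} for $\tomachhole{\beta_{1,2}}$. For $\tomachmtwo$ the projection is the same calculation as for the \eglamour's $\tomachsm$. For $\tomachmone$, the decoding of the source is $\relunf{\decdumpp{\decstackp{(\la{\var}{\code})\vartwo}}}\genv$ and, using that $\stctx{\state}$ is a right context (Contextual Decoding) and that $\vartwo$ is a variable (hence a fireball), one $\torf$-step produces $\relunf{\decdumpp{\decstackp{\code\isub{\var}{\vartwo}}}}\genv$, which equals the decoding of the target after extruding $\isub{\var}{\vartwo}$ through $\dump$ and $\stack$---safe by the Abstraction Name invariant, which guarantees $\var$ does not occur in them. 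Progress is by inspection of final states: they are either of the form $\glamst\stempty{\la{\var}{\code}}\stempty\genv$, $\glamst\stempty{\var}\stempty\genv$, or $\glamst\stempty{\var}{\stackitem\cons\stack}\genv$ with $\genv(\var)$ not of the form $\pair{\la\vartwo\codetwo}\stempty$ (so that $\tomachse$ cannot fire); the Fireball Item invariant ensures that the decoding is a fireball in each case, hence $\torf$-normal by \refpropp{distinctive-fireball}{open-harmony}.

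Finally, the remaining conditions are cheap: determinism of the \fglamour is immediate from inspection of \reffig{fglamour} (the side conditions on $\tomachcthree$, $\tomachmone$, $\tomachmtwo$, and $\tomachse$ partition the cases), determinism of $\torf$ is \reflemmap{prop-of-torf}{determ}, and termination of the overhead transitions follows from the quantitative bounds of the forthcoming \reflemma{exp-bilinear}. The main obstacle is the delicate case of $\tomachse$ and $\tomachcthree$ in the invariants: one must check that the introduction of the on-demand policy does not break the Fireball Item invariant, because now an item $\pair{\vartwo}\stempty$ sitting on the stack might correspond, after unfolding, to an abstraction, and this case must be correctly tracked through all transitions (especially through the new $\tomachmone$ rule, whose purpose is precisely to absorb such items without introducing a variable-to-variable entry in $\genv$ and thus without breaking the depth-1 discipline that will later be used in the complexity analysis).
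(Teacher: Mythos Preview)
Your proposal is correct and follows essentially the same route as the paper: invoke \refthm{abs-impl} by verifying the five implementation-system conditions, after first re-establishing the qualitative invariants (Name, Fireball Item, Contextual Decoding) with the Fireball Item invariant reformulated to allow $\relunf{\decode{\pair\var\stempty}}\genv$ to be an abstraction when $\genv(\var)$ is one. One small imprecision: in your $\beta$-projection argument for $\tomachmone$ you justify the step by ``$\vartwo$ is a variable (hence a fireball)'', but the redex lives in the \emph{decoded} term, so the argument is actually $\relunf\vartwo\genv$, which need not be a variable; the correct justification is the (adapted) Fireball Item invariant applied to the stack item $\pair\vartwo\stempty$, which guarantees $\relunf\vartwo\genv$ is a fireball --- the paper's own proof is equally terse at this spot, but that is the invariant doing the work.
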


\begin{proof}
  According to \refthm{abs-impl}, it is enough to show that the \fglamour and the right-to-left evaluation $\torf$ and the decoding $\decode\cdot$ form an implementation system, \ie that the 
  five conditions in \refdef{implementation} hold
  .
%
  Note that substitution ($\tomachse$) and commutative ($\tomachhole{\csym_{1,2,3}}$) transitions are considered as overhead transitions, whereas the $\beta$ transitions are $\tomachhole{\beta_1}$ and $\tomachhole{\beta_2}$.  
  \begin{varenumerate}
	\item \emph{$\beta$-Projection}: $\state \tomachhole{\beta_{1,2}} \statetwo$ implies $\decode\state \tostrat \decode\statetwo$ by \reflemmap{fglamour-trans-projection}{mult} 
	(recall that $\tomachhole{\beta_{1,2}} \ = \ \tomachhole{\beta_1} \cup \tomachhole{\beta_2}$ according to Note~\ref{note:union-transitions}).
	\item \emph{Overhead Transparency}: $\state \tomachhole{\ssym, \csym_{1,2,3}} \statetwo$ implies $\decode\state = \decode\statetwo$ by \reflemmap{fglamour-trans-projection}{exp-com} 
	(recall that $\tomachhole{\ssym, \csym_{1,2,3}} \ = \ \tomachhole{\ssym} \cup \tomachhole{\csym_1} \cup \tomachhole{\csym_2} \cup \tomachhole{\csym_3}$ according to Note~\ref{note:union-transitions}).
	\item 	\emph{Overhead Transitions Terminate}: Termination of  $\tomachhole{\ssym, \csym_{1,2,3}}$ is given by forthcoming \reflemma{exp-bilinear}, which is postponed because they actually give precise complexity bounds, not just termination.
	\item \emph{Determinism}: The \fglamour machine is deterministic, as it can be seen by an easy inspection of the transitions (see \reffig{fglamour}).
	\reflemmap{prop-of-torf}{determ} proves that $\torf$ is deterministic.
	\item \emph{Progress}: Let $\state$ be a \fglamour final states. By \reflemma{fglamour-progress}, $\decode\state$ is a $\betaf$-normal term, in particular it is $\torf$-normal because $\torf \, \subseteq \, \tof$.
	\qedhere
  \end{varenumerate}
\end{proof}

\paragraph{Complexity Analysis of the \fglamour.} As explained in the paper, the complexity analysis of the \fglamour is essentially trivial. Here we add a few details to convince the skeptical reader.

\setcounter{lemmaAppendix}{\value{l:exp-linear}}
\begin{lemmaAppendix}[Number of Overhead Transitions]
\label{lappendix:exp-bilinear}
  Let%
\NoteState{l:exp-bilinear}
	$\exec: \compil{\tm_0} \tomach^* \state$ be a \fglamour execution.Then 
  \begin{varenumerate}
  \item \emph{Substitution $vs$ $\beta$ Transitions}: $\sizee \exec \leq  \sizem\exec$.
  \item \emph{Commutative $vs$ Substitution Transitions}: $\sizecom\exec \leq (1+\sizee{\exec})\cdot \size{\tm_0} \leq (1+\sizem{\exec})\cdot \size{\tm_0}$.
  \end{varenumerate}
\end{lemmaAppendix}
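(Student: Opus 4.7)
The plan is to prove the two points essentially independently, with point 2 reusing the argument from the \eglamour analysis almost verbatim.

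For point 1 (substitution versus $\beta$ transitions), the strategy is to exploit the on-demand nature of substitution in the \fglamour. First I would observe two structural facts about $\tomachse$ (see \reffig{fglamour}): it fires only when the stack is non-empty (the premise $\stackitem \!\cons\! \stack$) and only when the environment contains an abstraction entry. Next, I would observe that after a $\tomachse$ transition the resulting state $(\dump,\rename{\la\vartwo\codetwo},\stackitem\!\cons\!\stack,\genv)$ has an abstraction as code sitting on top of a non-empty stack, so the only transitions that apply are $\tomachmone$ or $\tomachmtwo$, both of which are $\beta$-transitions. This yields an injection from all $\tomachse$ transitions, except possibly the last one in $\exec$, into the $\beta$-transitions of $\exec$, giving $\sizee\exec \leq \sizem\exec + 1$. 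To sharpen the bound by one, I would observe that the initial environment is empty, and that the only transition that extends the environment is $\tomachmtwo$; hence before the first $\tomachse$ of $\exec$ there must occur at least one $\tomachmtwo$, which is a $\beta$-transition distinct from all those already paired with subsequent substitutions. This yields the required $\sizee\exec \leq \sizem\exec$.

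For point 2 (commutative versus substitution transitions), I would reuse verbatim the commutative size $\sizecom{(\dump,\code,\stack,\genv)} \defeq \size\code + \Sigma_{\dentry\codetwo\stacktwo\in\dump}\size\codetwo$ used in the \eglamour analysis, and prove $\sizecom\exec + \sizecom\state \leq (1+\sizee\exec)\cdot\size{\tm_0}$ by induction on the length of $\exec$, as in \reflemma{comm-bound}. The case analysis on the last transition is essentially the same: $\tomachcone$ only splits the code between dump and code (lhs and rhs unchanged), $\tomachhole{\csym_{2,3}}$ consume one symbol of the current code (lhs decreases or stays, rhs unchanged), $\tomachhole{\beta_{1,2}}$ decrease the lhs by one (the consumed $\lambda$) without changing the rhs, and $\tomachse$ replaces a variable by an abstraction whose size is bounded by $\size{\tm_0}$ by the subterm invariant (which holds unchanged for the \fglamour), while the rhs increases by $\size{\tm_0}$ since $\sizee\exec$ grows by one. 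The second inequality of point 2 is then immediate by plugging in the bound $\sizee\exec \leq \sizem\exec$ from point 1.

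The only point that deserves care is the sharpening from $\sizee\exec \leq \sizem\exec+1$ to $\sizee\exec \leq \sizem\exec$ in the first item, since it relies on a small global invariant of the machine (``the environment is non-empty only after at least one $\tomachmtwo$ has been executed''), which has to be spelled out but is transparent from the transitions. Apart from that, neither point involves real work: the fact that substitutions are immediately useful is built into the $\tomachse$ rule by design, and the commutative bound is a direct port from \reflemma{comm-bound}, since the \fglamour shares with the \eglamour both the commutative transitions and the subterm invariant that power that lemma.
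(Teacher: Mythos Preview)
Your proposal is correct and follows essentially the same approach as the paper's proof: point~1 uses the on-demand structure of $\tomachse$ to pair each substitution with the immediately following $\beta$-transition (giving $\sizee\exec \leq \sizem\exec+1$), then removes the $+1$ by noting that a $\tomachmtwo$ must precede the first substitution; point~2 ports the commutative-size argument from \reflemma{comm-bound} verbatim. Your treatment of $\tomachmone$ (observing that the renaming $\code\isub\var\vartwo$ preserves $\size\code$, so the measure still drops by one for the consumed $\lambda$) is in fact slightly more explicit than the paper's, which merely remarks that the renaming ``lets the size, and thus the measure, unchanged''.
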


\begin{proof}
  \begin{varenumerate}
  \item \emph{Substitution $vs$ $\beta$ Transitions}:   since abstractions are substituted on-demand, every substitution transition is followed by a $\beta$-transition. Therefore, in an execution $\exec$ there can be at most one substitution transition not followed by a $\beta$-transition, and so   $\sizee \exec \leq  \sizem\exec + 1$. Now, note that executions start on initial states, \ie on states with empty environments where substitution transitions are not possible. So in $\exec$ there must be a $\beta$-transition before any other substitution transition. The $+1$ can then be removed, obtaining  $\sizee \exec \leq  \sizem\exec$.
  
  \item \emph{Commutative $vs$ Substitution Transitions}: the bound $\sizecom\exec \leq (1+\sizee{\exec})\cdot \size{\tm_0}$, that is the same as in the \eglamour, is obtained in exactly the same way, by using the commutative size measure defined in \refsect{eg-compl-anal}. The differences in the proof are minimal:
  \begin{itemize}
   \item {Transitions $\tomachcone$ and $\tomachctwo$}: no difference, because they are exactly the same transitions of the \eglamour.
   \item {Transition $\tomachcthree$}: the transition has a side-condition more than the same transition of the \eglamour. Than it is a sub-case, and so the bound obviously hold.
   \item {Transition $\tomachmone$}: the novelty of the transition is the renaming of the code, but it lets the size, and thus the measure, unchanged.
    \item {Transition $\tomachmtwo$}: a special case of $\tomachm$ of the \eglamour.
    \item {Transition $\tomachse$}: a special case of $\tomachse$  of the \eglamour.
  \end{itemize}
Last, the inequality $(1+\sizee{\exec})\cdot \size{\tm_0} \leq (1+\sizem{\exec})\cdot \size{\tm_0}$ is obtained by applying the first point of the lemma.
  \qedhere
  \end{varenumerate}

\end{proof}

\setcounter{theoremAppendix}{\value{thm:fglamour-overhead-bound}}
\begin{theoremAppendix}[\fglamour Bilinear Overhead]
\label{thmappendix:fglamour-overhead-bound}
  Let%
\NoteState{thm:fglamour-overhead-bound}
  $\exec: \compil{\tm_0} \tomach^* \state$ be a \fglamour execution. Then $\exec$ is implementable on RAM in $O((1+\sizem{\exec})\cdot \size{\tm_0})$, \ie linear in the number of $\beta$-transitions and the size of the initial term.
\end{theoremAppendix}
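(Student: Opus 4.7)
The plan is to follow exactly the three-step recipe for complexity analyses outlined at the end of \refsect{machines-intro}, instantiated with the bounds already established for the \fglamour in \reflemma{exp-bilinear} together with the subterm invariant (which holds for the \fglamour by the same straightforward induction used for \reflemma{subterm-invariant}). So I would first summarise the counts of each kind of transition in $\exec$, then bound the cost of implementing a single transition of each kind on a RAM under the standard representation assumptions (variables as memory locations, environment entries as bindings accessible in $O(1)$ from the variable reference, as already stated for the \eglamour), and finally sum over all transitions.

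First I would invoke \reflemma{exp-bilinear} to get $\sizee{\exec} \leq \sizem{\exec}$ (substitutions vs.\ $\beta$) and $\sizecom{\exec} \leq (1+\sizem{\exec})\cdot\size{\tm_0}$ (commutatives vs.\ $\beta$). For the per-transition cost, the commutative transitions $\tomachc$ only rearrange pointers between dump, code, stack and environment, hence are $O(1)$; the substitution transition $\tomachse$ needs random access to $\genv$ (which is $O(1)$ by assumption) plus an $\alpha$-renaming copy of the abstraction $\la\vartwo\codetwo$ stored in the environment, which by the subterm invariant is a subterm of $\tm_0$, hence costs $O(\size{\tm_0})$; the transition $\tomachmtwo$ only pushes an entry onto $\genv$, cost $O(1)$; and the transition $\tomachmone$ performs a meta-level renaming $\code\isub{\var}{\vartwo}$ on a subterm of $\tm_0$ (by the subterm invariant), hence costs $O(\size{\tm_0})$.

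Finally, multiplying counts by per-step costs gives: commutatives contribute $O((1+\sizem{\exec})\cdot\size{\tm_0})$; substitutions contribute $\sizee{\exec}\cdot O(\size{\tm_0}) \leq O(\sizem{\exec}\cdot\size{\tm_0})$; and the $\beta$-transitions $\tomachmone$ and $\tomachmtwo$ together contribute $O(\sizem{\exec}\cdot\size{\tm_0})$. Summing, the total cost is $O((1+\sizem{\exec})\cdot\size{\tm_0})$, which is the claimed bilinear bound.

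I do not expect any serious obstacle: once \reflemma{exp-bilinear} and the subterm invariant are in place, the proof is essentially bookkeeping. The only mild subtlety worth flagging is that, unlike in the \eglamour, the $\beta$-transition $\tomachmone$ itself performs a renaming whose cost is $O(\size{\tm_0})$ rather than $O(1)$; but since the number of such transitions is bounded by $\sizem{\exec}$, this contributes only $O(\sizem{\exec}\cdot\size{\tm_0})$, still within the bilinear budget, so the analysis stays tight.
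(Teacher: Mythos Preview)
Your proposal is correct and follows essentially the same approach as the paper: invoke \reflemma{exp-bilinear} for the transition counts, use the subterm invariant to bound the cost of $\tomachse$ and $\tomachmone$ by $O(\size{\tm_0})$ each, observe that $\tomachmtwo$ and the commutative transitions are $O(1)$, and sum. You even single out the same subtlety the paper implicitly handles, namely that $\tomachmone$ costs $O(\size{\tm_0})$ rather than $O(1)$, which is absorbed by the $\sizem{\exec}$ factor.
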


\begin{proof}
    The cost of implementing $\exec$ is the sum of the costs of implementing the $\beta$, substitution, and commutative transitions:
  \begin{varenumerate}
    \item \emph{$\beta$-Transitions $\tomachmone$ and $\tomachmtwo$}: $\tomachmone$  costs $O(\size{\tm_0})$ because the code has to be renamed and by the subterm invariant the size of the code is bound by $\size{\tm_0}$. Transition $\tomachmone$ instead takes constant time. In the worst case all together they cost $O(\sizem{\exec}\cdot\size{\tm_0})$.
    \item \emph{Substitution Transition $\tomachse$}: by \reflemma{exp-bilinear} we have $\sizee \exec \leq  \sizem\exec$. By the subterm invariant (\reflemma{subterm-invariant}), each substitution step costs at most $O(\size{\tm_0})$, and so their full cost is $O(\sizem{\exec}\cdot{\size{\tm_0}})$.
    \item \emph{Commutative Transitions $\tomachc$}: by \reflemma{exp-bilinear} $\sizecom\exec \leq (1+\sizem{\exec})\cdot \size{\tm_0}$.
      Since every commutative transition evidently takes constant time, the whole cost of the commutative transitions is bound by $O((1+\sizem{\exec})\cdot{\size{\tm_0}})$.
  \end{varenumerate}
  Then, the cost of implementing $\exec$ is $O((1+\sizem{\exec})\cdot{\size{\tm_0}})$.\qedhere

\end{proof}

\subsection{Proofs of Section~\ref{sect:conclusions} (Conclusions)}

Here we give the proof of size explosion for the family given at the end of the paper. Let us recall the definition of the family. 

Let the identity combinator be $I \defeq \la\varthree \varthree$ (it can in fact be replaced by any closed abstraction). Define
\begin{align*}
\tmthree_1 & \defeq \la{\var}\la{\vartwo}(\vartwo \var \var)  & \tmfour_0 & \defeq I \\
\tmthree_{n+1} & \defeq \la{\var}(\tmthree_n (\la{\vartwo}(\vartwo \var \var))) & \tmfour_{n+1} & \defeq \la{\vartwo}(\vartwo \tmfour_{n} \tmfour_{n})	
\end{align*}

The size exploding family is $\{\tmthree_n I\}_{n\in\nat}$, \ie it is obtained by applying $\tmthree_n$ to the identity $I = \tmfour_0$. The statement we are going to prove is in fact more general than the one given in the paper, it is about $\tmthree_n \tmfour_m$ instead of just $\tmthree_n I$, in order to obtain a simple inductive proof.
\setcounter{propositionAppendix}{\value{prop:abs-size-explosion}}
\begin{propositionAppendix}{Abstraction Size Explosion}
\label{propappendix:abs-size-explosion}
  Let 
  \NoteState{prop:abs-size-explosion}
  $n \!>\! 0$. Then $\tmthree_n \tmfour_m \tobabs^n  \tmfour_{n+m}$, and in particular $\tmthree_n I \tobabs^n  \tmfour_n$. Moreover, $\size{\tmthree_n I} = O(n)$, $\size{\tmfour_n} = \Omega(2^n)$, $\tmthree_n I$ is closed, and $\tmfour_n$ is normal.
\end{propositionAppendix}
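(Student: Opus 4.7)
The proof plan is an induction on $n$ for the reduction claim, followed by separate verifications of the size bounds, closedness, and normality. The statement is a generalization of the more restricted claim in the paper (where $m = 0$), and this generalization is precisely what makes the induction go through cleanly.

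First I would prove, by induction on $n \geq 1$, the stronger statement that $\tmthree_n \tmfour_m \tobabs^n \tmfour_{n+m}$ for every $m \geq 0$. For the base case $n = 1$, a direct unfolding gives $\tmthree_1 \tmfour_m = (\la{\var}\la{\vartwo}(\vartwo\var\var))\tmfour_m \rtobabs \la{\vartwo}(\vartwo\tmfour_m\tmfour_m) = \tmfour_{m+1}$, where the root step is a legal $\betaabs$-step because $\tmfour_m$ is by construction an abstraction for every $m \geq 0$ (immediate from the definition). For the inductive step, I would unfold $\tmthree_{n+1}\tmfour_m = (\la{\var}(\tmthree_n(\la{\vartwo}(\vartwo\var\var))))\tmfour_m$, fire the outermost $\betaabs$-redex (again legal since $\tmfour_m$ is an abstraction), obtaining $\tmthree_n(\la{\vartwo}(\vartwo\tmfour_m\tmfour_m)) = \tmthree_n\tmfour_{m+1}$, and then apply the induction hypothesis with the shifted index to get $\tmfour_{n+m+1}$ in $n$ further steps. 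Specializing to $m = 0$ yields $\tmthree_n I \tobabs^n \tmfour_n$.

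Next I would handle the size bounds by two straightforward inductions. For $\tmthree_n$, the recursive definition gives $\size{\tmthree_{n+1}} = \size{\tmthree_n} + c$ for a small constant $c$, so $\size{\tmthree_n} = O(n)$ and therefore $\size{\tmthree_n I} = O(n)$. For $\tmfour_n$, the recursion $\size{\tmfour_{n+1}} \geq 2\size{\tmfour_n}$ with $\size{\tmfour_0} \geq 1$ gives $\size{\tmfour_n} = \Omega(2^n)$.

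Closedness of $\tmthree_n I$ is immediate: $I$ is closed, and an induction on $n$ shows that $\tmthree_n$ is closed since the only free-variable candidates $\var, \vartwo$ in the defining expressions are bound by the surrounding $\lambda$'s. Finally, normality of $\tmfour_n$ follows at once from \refpropp{distinctive-fireball}{open-harmony}, because $\tmfour_n$ is an abstraction and hence a fireball, so it is $\betaf$-normal and a fortiori $\betaabs$-normal.

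There is no real obstacle in this proof; the only point to watch is the legality condition on $\tobabs$-steps (the argument must be an abstraction), which is exactly why the induction is phrased over $\tmthree_n \tmfour_m$ rather than over an arbitrary argument: every $\tmfour_m$ is an abstraction by definition, so the $\betaabs$-rule fires at each step and we do not fall back onto the more liberal $\toin$ rule.
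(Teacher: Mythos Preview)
Your proposal is correct and follows essentially the same route as the paper: the induction on $n$ for $\tmthree_n\tmfour_m \tobabs^n \tmfour_{n+m}$ (base case by one root step, inductive case by one step followed by the \ih\ with shifted index) is exactly the paper's argument, and your additional details on the size bounds, closedness, and normality simply spell out what the paper dismisses as ``immediate.''
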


\begin{proof}
  By induction on $n > 0$. 

  The base case: $\tmthree_1 \tmfour_m = \la{\var}\la{\vartwo}(\vartwo \var \var) \tmfour_m \tobabs (\la{\vartwo}(\vartwo \tmfour_m \tmfour_m)) = \tmfour_{m+1}$.
  The inductive case: $\tmthree_{n+1} \tmfour_m = \la{\var}(\tmthree_n (\la{\vartwo}(\vartwo \var \var))) \tmfour_m  \tobabs \tmthree_n (\la{\vartwo}(\vartwo \tmfour_m \tmfour_m)) = \tmthree_n \tmfour_{m+1} \tobabs^n \tmfour_{n+m+1}$, where the second sequence is obtained by the \ih The rest of the statement is immediate.
\end{proof}

The family $\{\tmthree_n I\}_{n\in\nat}$ is interesting because no matter how one looks at it, it always explodes: if evaluation is weak (\ie it does not go under abstraction) there is only one possible derivation to normal form and if it is strong (\ie unrestricted) all derivations have the same length (and are permutatively equivalent). Last, note that it is an example of size explosion also for \ccbv, because the steps are weak and the term is closed. 
Note why machines for \ccbv and \ocbv are not concerned with the question of substituting abstractions on-demand: the exponential number of substitutions of abstractions required by the evaluation of the family are all substitutions under abstraction, and so closed and open machines never do them anyway.

}

\end{document}